\definecolor{USPNcobalt}{HTML}{293358}
\definecolor{USPNocre}{HTML}{8b7d6d}
\definecolor{USPNblanc}{HTML}{ffffff}
\definecolor{USPNceruleen}{HTML}{354878}
\definecolor{USPNsable}{HTML}{ad947e}
\definecolor{blueColorBlind} {RGB}{68 ,119,170}
\definecolor{greenColorBlind}{RGB}{34 ,136,51 }
\definecolor{redColorBlind}  {RGB}{238,102,119}
\def\figurecaption#1#2{\noindent\hangindent 40pt
                       \hbox to 36pt {\small\sl #1 \hfil}
                       \ignorespaces {\small #2}}
\long\def\@makecaption#1#2{
  \vskip 10pt 
  \settowidth{\@tempdima}{#2}
  \ifdim\@tempdima>0pt
       \setbox\@tempboxa\hbox{#1: #2}
     \else
       \setbox\@tempboxa\hbox{#1 #2}
   \fi
   \ifdim \wd\@tempboxa >\hsize               
       \begin{list}{#1:}{
       \settowidth{\labelwidth}{#1:}
       \setlength{\leftmargin}{\labelwidth}
       \addtolength{\leftmargin}{\labelsep}
        }\item #2 \end{list}\par   
     \else                                    
       \hbox to\hsize{\hfil\box\@tempboxa\hfil}  
   \fi}
\newenvironment{oneenumerate}
	{\begin{enumerate}}
	{\end{enumerate}}
\newcommand{\cellHeader}[0]{\cellcolor{USPNsable!40}\bfseries}
\newcommand{\rowHeader}{\rowcolor{USPNsable!40}\bfseries}
\newcommand{\cellDecidable}{\cellcolor{greenColorBlind!75}$\surd$}
\newcommand{\cellUndecidable}{\cellcolor{redColorBlind!75}$\times$}
\tikzstyle{pta}=[auto, ->, >=stealth']
\tikzstyle{every node}=[initial text=]
\tikzstyle{RA-location}=[rectangle, minimum size=12pt, draw=black, fill=green!10, inner sep=2pt] %
\tikzstyle{location}=[rounded rectangle, minimum size=12pt, draw=black, fill=blue!10, inner sep=2pt]
\tikzstyle{final}=[double, fill=blueColorBlind!40]
\tikzstyle{private}=[fill=redColorBlind!30,thick]
\tikzset{
  location2gen/.style={
    draw=black,
    rounded corners,
	align=center,
    inner sep=2pt,
    font=\small
  }
}
\tikzset{
  location2/.style={
	location2gen,
    rectangle split,
    rectangle split parts=2,
    rectangle split horizontal=false,
	rectangle split part fill={blue!5, blue!15},
  }
}
\newcommand{\location}[3][]{%
  \node[location2,#1] {%
    \scriptsize $\ensuremath{$#2$}$
    \nodepart{second}
    #3
  };
}
\definecolor{coloract}{rgb}{0, 0.3, 0}
\definecolor{colorclock}{rgb}{0.4, 0, 0}
\definecolor{colordisc}{rgb}{1, 0, 1}
\definecolor{colorloc}{rgb}{0.4, 0.4, 0.65}
\definecolor{colorparam}{rgb}{1, 0.6, 0.0}
\newcommand{\styleact}[1]{\ensuremath{\textcolor{coloract}{{\textstyleact{#1}}}}}
\newcommand{\styleclock}[1]{\ensuremath{\textcolor{colorclock}{{#1}}}}
\newcommand{\textstyleact}[1]{\ensuremath{\mathit{#1}}}
\newcommand{\textstyleclock}[1]{\ensuremath{\mathit{#1}}}
\newcommand{\textstyleloc}[1]{\ensuremath{#1}}
\definecolor{USPNcobalt}{HTML}{293358}
\definecolor{USPNocre}{HTML}{8b7d6d}
\definecolor{USPNblanc}{HTML}{ffffff}
\definecolor{USPNceruleen}{HTML}{354878}
\definecolor{USPNsable}{HTML}{ad947e}
\newcommand{\defProblem}[3]
{%
	\noindent\fcolorbox{black}{USPNsable!20}{
	\begin{minipage}{.95\columnwidth}
		\textbf{#1:}\\
		\textsc{Input}: #2\\
		\textsc{Problem}: #3
	\end{minipage}
}

	\smallskip

}
\newcommand{\assign}{\leftarrow}
\newcommand{\checkUseMacro}[1]{#1}
\newcommand{\Time}{\ensuremath{\mathbb{T}}} %
\newcommand{\setsmall}[1]{\ensuremath{\{#1\}}}
\newcommand{\set}[1]{\ensuremath{\left\{#1\right\}}}
\newcommand{\setN}{\ensuremath{\mathbb{N}}}
\newcommand{\setQ}{\ensuremath{\mathbb{Q}}}
\newcommand{\setQgeqzero}{\ensuremath{\setQ_{\geq 0}}}
\newcommand{\setR}{\ensuremath{\mathbb{R}}}
\newcommand{\setRgeqzero}{\ensuremath{\setR_{\geq 0}}}
\newcommand{\setZ}{\ensuremath{\mathbb{Z}}}
\newcommand{\BTrue}{\ensuremath{\mathit{true}}}
\newcommand{\compOp}{\bowtie}
\newcommand{\intpart}[1]{\ensuremath{\lfloor#1\rfloor}}
\newcommand{\fract}[1]{\ensuremath{\text{frac}(#1)}}
\newcommand{\init}{\ensuremath{0}}
\newcommand{\priv}{\ensuremath{{\mathit{priv}}}}
\newcommand{\final}{\ensuremath{f}}
\newcommand{\styleAutomaton}[1]{\ensuremath{\mathcal{#1}}}
\newcommand{\clock}{\ensuremath{\textstyleclock{x}}}
\newcommand{\clockx}{\ensuremath{\textstyleclock{x}}}
\newcommand{\clocky}{\ensuremath{\textstyleclock{y}}}
\newcommand{\clockz}{\ensuremath{\textstyleclock{z}}}
\newcommand{\clocki}[1]{\ensuremath{\textstyleclock{\clock_{#1}}}}
\newcommand{\ClockCard}{\ensuremath{H}} %
\newcommand{\clockval}{\ensuremath{\mu}}
\newcommand{\ClockSet}{\ensuremath{\mathbb{X}}} %
\newcommand{\ClocksZero}{\ensuremath{\vec{0}}}
\newcommand{\resets}{\ensuremath{R}}
\newcommand{\reset}[2]{\ensuremath{[#1]_{#2}}}
\newcommand{\TA}{\ensuremath{\checkUseMacro{\styleAutomaton{A}}}}
\newcommand{\TB}{\ensuremath{\checkUseMacro{\styleAutomaton{B}}}} %
\newcommand{\TAprivextend}{\ensuremath{\left(\ActionSet, \LocSet, \locinit, \PrivSet, \FinalSet, \ClockSet, \invariant, \EdgeSet\right)}}
\newcommand{\action}{\ensuremath{\textstyleact{a}}}
\newcommand{\ActionSet}{\ensuremath{\Sigma}}
\newcommand{\constraint}{\ensuremath{C}}
\newcommand{\edge}{\ensuremath{\checkUseMacro{e}}}
\newcommand{\edgei}[1]{\ensuremath{\checkUseMacro{\edge_{#1}}}}
\newcommand{\EdgeSet}{\ensuremath{E}}
\newcommand{\guard}{\ensuremath{g}}
\newcommand{\invariant}{\ensuremath{I}} %
\newcommand{\loc}{\ensuremath{\textstyleloc{\ell}}}
\newcommand{\loci}[1]{\ensuremath{\textstyleloc{\loc_{#1}}}}
\newcommand{\locinit}{\ensuremath{\textstyleloc{\loc_\init}}}
\newcommand{\locfinal}{\ensuremath{\textstyleloc{\loc_\final}}}
\newcommand{\FinalSet}{\ensuremath{L_f}}
\newcommand{\PrivSet}{\ensuremath{L_{\mathit{priv}}}}
\newcommand{\locpriv}{\ensuremath{\textstyleloc{\loc_\priv}}}
\newcommand{\LocSet}{\ensuremath{L}}
\newcommand{\longuefleche}[1]{\stackrel{#1}{\longrightarrow}}
\newcommand{\run}{\checkUseMacro{\rho}} %
\newcommand{\TimedWords}[1]{\ensuremath{\textit{TW}(#1)}}
\newcommand{\Language}{\ensuremath{\mathcal{L}}}
\newcommand{\sqexp}{\ensuremath{\textit{exp}}}
\newcommand{\LargestConstant}{\ensuremath{M}}
\newcommand{\region}{\ensuremath{r}}
\newcommand{\regioni}[1]{\ensuremath{\region_{#1}}}
\newcommand{\Regions}[1]{\ensuremath{\styleAutomaton{R}_{#1}}}
\newcommand{\NbClockRegions}{\ensuremath{\vert \ClockSet \vert ! \cdot 2^{\vert \ClockSet \vert} \cdot \prod\limits_{x \in \ClockSet}(2 M(x) + 2)}}
\newcommand{\RegionAutomaton}[1]{\ensuremath{\styleAutomaton{RA}_{#1}}}
\newcommand{\silentaction}{\ensuremath{\varepsilon}}
\newcommand{\Subwords}{\ensuremath{\mathit{Subwords}}}
\newcommand{\strategy}[1]{\ensuremath{\sigma_{#1}}}
\newcommand{\projection}[1]{\ensuremath{\pi_{\strategy{#1}}}}
\newcommand{\tweak}[2]{\ensuremath{\mathit{tweak}_{#1 \rightarrow #2}}}
\newcommand{\semantics}[1]{\ensuremath{\mathfrak{T}_{#1}}}
\newcommand{\semanticsextend}{\ensuremath{\left(\StateSet, \concstateinit, \ActionSet \cup \{ \silentaction \} \cup \setRgeqzero, \transition\right)}}
\newcommand{\transition}{{\ensuremath{\rightarrow}}}
\newcommand{\StateSet}{\ensuremath{\mathfrak{S}}}
\newcommand{\concstate}{\ensuremath{\mathfrak{s}}}
\newcommand{\concstateinit}{\ensuremath{\concstate_\init}}
\newcommand{\PrivVisit}[1]{\ensuremath{\mathit{Visit}^{\mathit{priv}}(#1)}}
\newcommand{\PubVisit}[1]{\ensuremath{\mathit{Visit}^{\mathit{pub}}(#1)}}
\newcommand{\PrivateTr}[1]{\ensuremath{\mathit{Tr}^\mathit{priv}(#1)}}
\newcommand{\PublicTr}[1]{\ensuremath{\mathit{Tr}^{\mathit{pub}}(#1)}}
\newcommand{\Trace}[1]{\ensuremath{\mathit{Tr}(#1)}}
\newcommand{\AllRuns}{\ensuremath{\Omega}}
\newcommand{\APriv}{\ensuremath{\TA_{\mathit{priv}}}}
\newcommand{\APub}{\ensuremath{\TA_{\mathit{pub}}}}
\newcommand{\AMemo}{\ensuremath{\TA_{\mathit{memo}}}}
\newcommand{\Unfold}[1]{\ensuremath{\mathit{Unfold}_N(#1)}}
\newcommand{\PrivSetOn}{\ensuremath{L_{\mathit{On, priv}}}} %
\newcommand{\PrivSetOff}{\ensuremath{L_{\mathit{Off, priv}}}} %
\newcommand{\untimed}[1]{\ensuremath{\mathit{untimed}(#1)}}
\newcommand{\timed}[1]{\ensuremath{\mathit{time}(#1)}}
\newcommand{\Tick}[1]{\ensuremath{\mathit{Tick}(#1)}}
\newcommand{\TickN}[1]{\ensuremath{\mathit{Tick}_N(#1)}}
\newcommand{\interval}[2]{\ensuremath{[\![#1 ; #2]\!]}}
\newcommand{\tausimple}{\ensuremath{\tilde{\tau}}} %
\newcommand{\mathnth}[1]{\ensuremath{#1}th}
\newcommand{\circleone}{\ding{172}}
\newcommand{\circletwo}{\ding{173}}
\newcommand{\circlethree}{\ding{174}}
\newcommand{\ComplexityFont}[1]{{\sffamily\upshape #1}}
\newcommand{\NEXPTIME}{\ComplexityFont{NEXPTIME}\xspace}
\newcommand{\coNEXPTIME}{\ComplexityFont{Co-NEXPTIME}\xspace}
\newcommand{\PSPACE}{\ComplexityFont{PSPACE}\xspace}
\newcommand{\NLOGSPACE}{\ComplexityFont{NLOGSPACE}\xspace}
\newcommand{\EXPSPACE}{\ComplexityFont{EXPSPACE}\xspace}
\newcommand{\twoEXPSPACE}{\ComplexityFont{2-EXPSPACE}\xspace}
\newcommand{\eg}{e.g.,\xspace}
\newcommand{\ie}{i.e.,\xspace}
\newcommand{\st}{s.t.}
\newcommand{\wrt}{w.r.t.\xspace}
\theoremstyle{plain}
\crefname{line}{\text{line}}{\text{lines}} %
\crefname{thm}{\text{Theorem}}{\text{Theorems}}
\crefname{cor}{\text{Corollary}}{\text{Corollaries}}
\crefname{lem}{\text{Lemma}}{\text{Lemmas}}
\crefname{prop}{\text{Proposition}}{\text{Propositions}}
\crefname{rem}{\text{Remark}}{\text{Remarks}}
\crefname{exa}{\text{Example}}{\text{Examples}}
\crefname{defi}{\text{Definition}}{\text{Definitions}}
\crefname{thmC}{\text{Theorem}}{\text{Theorems}}
\newcommand{\homepage}[1]{\href{#1}{\color{gray}\faHome}}
\keywords{timed automata, opacity, timing attacks, side-channel attacks, partial observation}
\begin{document}
\sloppy

\title{The Bright Side of Timed Opacity} %
\titlecomment{{\lsuper*}%
This is the extended version of the manuscript of the same name published in the proceedings of the 25th International Conference on Formal Engineering Methods (ICFEM 2024).
}
\thanks{This work is partially supported by ANR BisoUS (ANR-22-CE48-0012) and by ANR GUMMIS (ANR-25-CE48-5096).}	%

\author[\'E.~Andr\'e]{\'Etienne Andr\'e\lmcsorcid{0000-0001-8473-9555}}[a,b,c]
\author[S.~Dépernet]{Sarah Dépernet\lmcsorcid{0009-0003-8710-7934}}[d]
\author[E.~Lefaucheux]{Engel Lefaucheux\lmcsorcid{0000-0003-0875-300X}}[d]

\address{Université Sorbonne Paris Nord, CNRS, Laboratoire d’Informatique de Paris Nord, LIPN, F-93430 Villetaneuse, France}	%
\address{Institut universitaire de France (IUF)}

\address{Nantes Université, LS2N, CNRS, Bretagne, France}

\address{Université de Lorraine, CNRS, Inria, LORIA, F-54000 Nancy, France}	%

\begin{abstract}
Timed automata (TAs) are an extension of finite automata that can measure and react to the passage of time, providing the ability to handle real-time constraints using clocks.
In~2009, Franck~Cassez showed that the timed opacity problem, where an attacker can observe some actions with their timestamps and attempts to deduce information, is undecidable for~TAs.
Moreover, he showed that the undecidability holds even for subclasses such as event-recording automata.
In this article, we consider the same definition of opacity, by restricting either the system or the attacker.
Our first contribution is to prove the inter-reducibility of two variants of opacity: full opacity (for which the observations should be the same regardless of the visit of a private location) and weak opacity (for which it suffices that the attacker cannot deduce whether the private location was visited, but for which it is harmless to deduce that it was not visited); we also prove further results including a connection with timed language inclusion.
Our second contribution is to study opacity for several subclasses of TAs:
with restrictions on the number of clocks, the number of actions, the nature of time, or a new subclass called observable event-recording automata.
We show that opacity is mostly decidable in these cases, except for one-action TAs and for one-clock TAs with $\varepsilon$-transitions, for which undecidability remains.
Our third (and arguably main) contribution is to propose a new definition of opacity in which the number of observations made by the attacker is limited to the first $N$ observations, or to a set of $N$ timestamps after which the attacker observes the first action that follows immediately.
This set can be defined either \emph{a~priori} or at runtime; all three versions yield decidability for the whole TA class.
\end{abstract}

\maketitle{}
\section{Introduction}\label{section:introduction}

The concept of opacity~\cite{Mazare04,BKMR08} formalizes the absence of information leakage about designated secrets in partially observed systems. Intuitively, a system is opaque if, from the perspective of an external observer (attacker) who sees only a subset of system events, every execution containing secret behaviour is observationally indistinguishable from at least one non-secret execution. Thus, opacity represents a security guarantee: the observer cannot deduce with certainty that a protected sequence of actions occurred.
Time plays a critical role in this deduction power.
As demonstrated in~\cite{GMR07}, systems that appear opaque when abstracting away timing constraints may become non-opaque once real-time information—such as event timestamps—is taken into account.

These timing attacks belong to the broader landscape of side-channel attacks, in which attackers infer sensitive information from secondary physical or behavioural signals (such as power consumption, electromagnetic emanations, or cache use) rather than from the intended functional interface~\cite{Standaert2010}.
For instance, a timing attack against the Chinese public-key cryptography standard revealed that leakage of leading zero bits, inferred from measured runtimes, suffices to reconstruct the secret key~\cite{CHSJX22}.

In order to specify and verify systems and their vulnerability to timing leaks, one needs appropriate formalisms.
Timed automata (TAs) are an extension of finite automata that can measure and modify their behaviour through real-time constraints.
They are equipped with a finite set of clocks that can be compared with integer constants, and reset.
As such, they are  widely used in order to model real-time reactive and concurrent systems with timing constraints.
Consequently, analysing whether a TA preserves opacity is important to mitigate timing leaks in safety- and security-critical real-time systems.

\subsection{Related work}
There are several ways to define opacity problems in TAs, depending on the power of the attacker.
The common idea is to ensure that the attacker cannot deduce from the observation of a run whether it was a private or a public run.
The attacker in~\cite{Cassez09} is able to observe a subset $\ActionSet_o \subseteq \ActionSet$ of actions with their timestamps.
In this context, a timed word~$w$ is said to be opaque if there exists a public run that produces the projection of~$w$ following $\ActionSet_o$ as an observed timed word.
In this configuration, one can consider the opacity problem consisting of determining, knowing a TA~$\TA$ and a set of timed words, whether all words in this set are opaque in~$\TA$.
This problem has been shown to be undecidable for TAs~\cite{Cassez09}.
This notably relates to the undecidability of timed language inclusion for TAs~\cite{AD94}.
However, the undecidability holds in~\cite{Cassez09} even for the restricted class of event-recording automata (ERAs)~\cite{AFH99} (a subclass of~TAs), for which language inclusion is decidable.
The aforementioned negative results leave hope only if the definition or the setting is changed, which was done in four main lines of work.

First, in~\cite{WZ18,WZA18}, the input model is simplified to \emph{real-time automata}~\cite{Dima01}, a restricted formalism compared to~TAs.
In this setting, (initial-state) opacity becomes decidable~\cite{WZ18,WZA18}.
In~\cite{Zhang24}, Zhang studies labelled real-timed automata (a subclass of labelled TAs); in this setting, state-based (at the initial time, the current time, etc.)\ opacity is proved to be decidable by extending the observer (that is, the classical powerset construction) from finite automata to labelled real-timed automata.

Second, in~\cite{AEYM21}, the authors consider a time-bounded notion of the opacity of~\cite{Cassez09}, where the attacker has to disclose the secret before an upper bound, using a partial observability.
This can be seen as a secrecy with an \emph{expiration date}.
The rationale is that retrieving a secret ``too late'' can be considered as useless; this is understandable, \eg{} when the secret is the value in a cache; if the cache was overwritten since, then knowing the secret is probably useless in many situations.
In addition, the analysis is carried over a time-bounded horizon; this means there are two time bounds in~\cite{AEYM21}: one for the secret expiration date, and one for the bounded-time execution of the system.
The authors prove that this problem is decidable for~TAs.
A somehow similar framework is considered in~\cite{SLR23}, in which the attacker has a bounded memory, and a finite duration between distinct observations is required, in which case the problem is decidable and \PSPACE{}-complete.

Third, in~\cite{ALMS22,ALM23}, the authors present an alternative definition to
Cassez's opacity by studying \emph{execution-time opacity}: the attacker has only access to the execution time of the system, as opposed to Cassez' partial observations with some observable events (with their timestamps).
In that case, most problems become decidable (see \cite{ALLMS23} for a survey).
	Untimed control in this setting was considered in~\cite{ABLM22}, while timed control was considered in~\cite{ADLL25journal}.
	In addition, \cite{ALMS22,ALM23,AAL24} consider also \emph{parametric} versions of the execution-time opacity problems, in which timing parameters~\cite{AHV93} can be used in order to make the system execution-time opaque, while execution-time opacity for an extension of timed automata with \emph{energy variables} is studied in~\cite{AB26}.

Finally, opacity was considered in the discrete time settings (where clock valuations are restricted to $\setN$)~\cite{CG00,KKG24,AGWZH24} where opacity, including 
additional variants regarding current-location timed opacity and initial-location timed opacity, was shown to be decidable.
We complement their researches by providing the exact complexity of this problem (\cref{theorem:discrete-time}).
In~\cite{DQY25}, another interpretation of finite accuracy was shown decidable for opacity, considering an attacker which has a finite but unknown observation precision.

Regarding non-interference for TAs, some decidability results are proved in~\cite{BDST02,BT03}, while control was considered in~\cite{BCLR15} and a parametric timed extension in~\cite{AK20}.
General security problems for TAs are surveyed in~\cite{AA23survey}.
\subsection{Contributions}

Considering the negative decidability results from~\cite{Cassez09}, there are mainly two directions: one can consider more restrictive classes of automata, or one can limit the capabilities of the attacker---we address both directions in this work.

In this work, we consider the secret to be a set of private locations.
Our attacker model is as follows: the attacker knows the TA modelling the system and can observe (some) actions together with their timestamp, but never gains access to the values of the clocks, nor knows in which location the system is.
Their goal is to deduce from these observations whether a private location was visited.

We address here three variants\footnote{%
	We adapt from~\cite{ALLMS23} the concepts of these three different levels of opacity, even though they were initially introduced in~\cite{ALLMS23} in the context of ``execution-time opacity'' in which the attacker has access only to the total execution time.
} of opacity:
\begin{enumerate}
	\item $\exists$-opacity (``there exists a pair of runs, one visiting and one not visiting the private locations set, that cannot be distinguished''),
	\item weak opacity (``for any run visiting the private locations set, there is another run not visiting it and the two cannot be distinguished'') and
	\item full opacity (weak opacity, with the other direction holding as well).
\end{enumerate}

Our set of contributions is threefold.

\subsubsection{Inter-reducibility}
Our first contribution (\cref{section:inter-reduc}) is to position full, weak and $\exists$-opacity with respect to more classical problems of~TAs.
More precisely, we establish the following results:
\begin{itemize}
\item Weak and full opacity are inter-reducible;
\item $\exists$-opacity and reachability in~TAs are inter-reducible; and
\item Weak opacity and language inclusion of~TAs are inter-reducible.
\end{itemize}

The fact that we prove that weak opacity and full opacity are inter-reducible is not only interesting \emph{per~se}, but also allows us to consider only one of both cases in the remainder of the paper.
The two other inter-reducibility results allow us to rely on the vast existing literature  on reachability and language inclusion for~TAs.

\subsubsection{Opacity in subclasses of TAs}
Throughout the second part of this paper (\cref{section:opacity:TA:res}), we consider the same attacker settings as in~\cite{Cassez09} but for natural subclasses of~TAs: first we deal with one-action~TAs, then with one-clock~TAs (both with and without $\silentaction$-transitions---a mostly technical consideration which makes a difference in decidability), TAs over discrete time, and a new subclass which we call observable ERAs.
Precisely, we show that:
\begin{enumerate}
	\item The problem of  $\exists$-opacity is decidable for general TAs and thus for all subclasses of TAs we consider as well (\cref{section:opacity:exists}).
	\item The problems of weak and full opacity are both undecidable for TAs with only one action even in the absence of $\silentaction$-transitions (\cref{section:opacity:one-action}) or two clocks (\cref{section:opacity:one-clock}).
	\item These two problems are also undecidable for TAs with a single clock, unless we forbid $\silentaction$-transitions, in which case the problems become decidable (\cref{section:opacity:one-clock}).
	\item These two problems are decidable for unrestricted TAs over discrete time (\cref{section:opacity:discrete-time}), as well as for observable ERAs (\cref{section:opacity:era}).
\end{enumerate}
These results overall build on existing results from the literature. %
They however allow us to draw a clear border between decidability and undecidability.
Moreover, we provide the exact complexity for most of the decidable results which, in some cases, complicates the proofs.

\subsubsection{Reducing the attacker power}
Then, in the third part (\cref{section:finite}), we introduce a new approach in which we reduce the visibility of the attacker to a \emph{finite} number of actions.
We consider three different settings:
\begin{enumerate}%
	\item when the attacker can only see the first $N$ events (letters) of the system, \ie{} the beginning of the run;
	\item when the attacker can decide \emph{a~priori} (statically) a set of timestamps, and they will observe the first event following each of these timestamps; and
	\item when the attacker can decide this set of timestamps at runtime depending on what observations they made until now.
\end{enumerate}%
This models the case of an attacker with a limited attack budget, while considering the full class of~TAs.
We prove that all three settings are decidable for the full TA formalism and \coNEXPTIME{}-complete.

We finally study in \cref{ss:combined} the combination of the reductions of the attacker power introduced in \cref{section:finite} on the one hand with the restrictions of the model from \cref{section:opacity:TA:res} on the other hand.

\paragraph{Side results on language inclusion for timed automata}

As a proof ingredient for \cref{section:opacity:discrete-time}, we also show that timed language inclusion is \EXPSPACE{}-complete for TAs over discrete time; while this result was proved in the past~\cite{CG00,KKG24,AGWZH24}, the exact complexity was---surprisingly---never investigated.

Furthermore, as a proof ingredient for \cref{sec:Nfirst}, we prove that inclusion of $N$-bounded timed languages of TAs is a decidable and \coNEXPTIME{}-complete problem.

\subsection{About this manuscript}

This manuscript is the extended version of~\cite{ADL24}.
In addition to reorganizing the presentation for clarity, we significantly revised and extended several technical results. 
First, the opacity results on subclasses of timed automata have been rewritten to rely more 
on the established inter-reducibility results, providing a more unified framework. In particular, the undecidability of weak/full opacity for TAs over one action was strengthened
by removing the need for $\silentaction$-transitions.
Second, the proof for the opacity problems for an attacker limited to the first $N$ observations has been completely redesigned, reducing the upper bound from \twoEXPSPACE{} to \coNEXPTIME{}---we also established the corresponding hardness result.
Finally, we extended the attacker model by allowing the attacker to strategically choose when to perform these $N$~observations, thereby strengthening the results and providing a more realistic attacker model.

\subsection{Outline}
\cref{section:preliminaries} recalls necessary preliminaries.
\cref{section:opacity:TA} defines the problems of interest.
\cref{section:inter-reduc} proves inter-reducibility between several definitions of opacity together with connections with reachability and language inclusion problems for TAs.
\cref{section:opacity:TA:res} studies opacity problems for subclasses of~TAs, while \cref{section:finite} reduces the power of the attacker to a finite set of observations.
\cref{section:conclusion} concludes.

\section{Preliminaries}\label{section:preliminaries}

We denote by $\setN, \setZ, \setQgeqzero, \setRgeqzero$ the sets of non-negative integers, integers, non-negative rationals and non-negative reals, respectively.
If $a$ and~$b$ are two integers with $a \leq b$, the set $\set{a, a+1, \dots, b-1, b}$ is denoted by $\interval{a}{b}$.

\subsection{Clock constraints}

We let $\Time$ be the domain of the time, which will be either non-negative reals $\setRgeqzero$ (continuous-time semantics) or naturals $\setN$ (discrete-time semantics).
Unless otherwise specified, we assume $\Time = \setRgeqzero$.

\emph{Clocks} are real-valued variables that all evolve over time at the same rate.
Throughout this paper, we assume a set~$\ClockSet = \{ \clocki{1}, \dots, \clocki{\ClockCard} \} $ of \emph{clocks}.
A \emph{clock valuation} is a function
$\clockval : \ClockSet \rightarrow \Time$, assigning a non-negative value to each clock.
We write $\ClocksZero$ for the clock valuation assigning $0$ to all clocks.
Given a constant $d \in \Time$, $\clockval + d$ denotes the valuation \st\ $(\clockval + d)(\clock) = \clockval(\clock) + d$, for all $\clock \in \ClockSet$. If $R$ is a subset of $\ClockSet$ and $\clockval$ a clock valuation, we call \emph{reset} of the clocks of $\resets$ and denote by $[\clockval]_R$ the valuation \st\ for all clock $x \in \ClockSet$, $\reset{\clockval}{\resets}(x) = 0$ if $x \in \resets$ and $\reset{\clockval}{\resets}(x) = \clockval(x)$ otherwise.

We assume ${\compOp} \in \{<, \leq, =, \geq, >\}$.
A \emph{constraint}~$\constraint$ is a conjunction of inequalities over~$\ClockSet $ of the form
$\clock \compOp d$, with
$d \in \setZ$.
Given~$\constraint$, we write~$\clockval \models \constraint$ if the expression obtained by replacing each~$\clock$ with~$\clockval(\clock)$ in~$\constraint$ evaluates to true.

\subsection{Timed Automata}
A TA is a finite automaton extended with a finite set of clocks with values in~$\Time$. %
We also add to the standard definition of TAs a special private locations set, which will be used to define our subsequent opacity concepts.

\begin{defi}[TA~\cite{AD94}]\label{def:TA}
	A TA~$\TA$ is a tuple \mbox{$\TA = \TAprivextend$}, where:
	\begin{oneenumerate}%
		\item $\ActionSet$ is a finite set of actions,
		\item $\LocSet$ is a finite set of locations,
		\item $\locinit \in \LocSet$ is the initial location,
		\item $\PrivSet \subseteq \LocSet$ is a set of private locations,
		\item $\FinalSet \subseteq \LocSet$ is a set of final locations,
		\item $\ClockSet$ is a finite set of clocks,
		\item $\invariant$ is the invariant, assigning to every $\loc\in \LocSet$ a constraint $\invariant(\loc)$ %
			(called \emph{invariant}),
		\item $\EdgeSet$ is a finite set of edges  $\edge = (\loc,\guard,\action,\resets,\loc')$
		where~$\loc,\loc'\in \LocSet$ are the source and target locations, $\action \in \ActionSet \cup \{ \silentaction \}$ (where $\silentaction$ denotes an unobservable action),
		$\resets\subseteq \ClockSet$ is a set of clocks to be reset, and $\guard$ is a constraint %
			(called \emph{guard}).
	\end{oneenumerate}%
\end{defi}
\begin{figure}[tb]

	\centering

	\begin{tikzpicture}[pta, scale=2, xscale=1, yscale=.5]

		\location[initial,name=s0,at={(0, -1)}]{$\styleclock{\clock} \leq 3$}{$\loci{0}$};

		\location[private,name=s2,at={(1.5, 0)}]{$\styleclock{\clock} \leq 2$}{$\loci{2}$};

		\node[location, final] at (3, -1) (s1) {$\loci{1}$};

		\path (s0) edge[bend left] node[align=center]{$\styleclock{\clock} \geq 1$\\$\styleact{\silentaction}$} (s2);
		\path (s0) edge[loop above] node[]{$\styleact{a}$} (s0);
		\path (s0) edge[] node[below]{$\styleact{b}$} (s1);
		\path (s2) edge[bend left] node[]{$\styleact{b}$} (s1);

	\end{tikzpicture}
	\caption{A TA example}
	\label{figure:example-TA}

\end{figure}
\begin{exa}
	In \cref{figure:example-TA}, we give an example of a TA with three locations $\loci{0}$, $\loci{1}$ and $\loci{2}$,
	three edges, two observable actions~$\{a, b\}$,
	and one clock $\clock$.
	$\loci{0}$ is the initial location, $\loci{2}$ is the (unique) private location, and~$\loci{1}$ is the (unique) final location.
	$\loci{0}$ has an invariant ``$\textstyleclock{\clock} \leq 3$'' and the edge from $\loci{0}$ to $\loci{2}$ is labelled by the unobservable action $\silentaction$ and has a guard ``$\textstyleclock{\clock} \geq 1$''.
\end{exa}
\paragraph{Semantics of timed automata}

We recall the semantics of a TA using a timed transition system (TTS).

\begin{defi}[Semantics of a TA]\label{def:semantics}
	Given a TA $\TA = \TAprivextend$,
	the semantics of~$\TA$ is given by the TTS $\semantics{\TA} = \semanticsextend$, with
	\begin{enumerate}
		\item $\StateSet = \big\{ (\loc, \clockval) \in \LocSet \times \setRgeqzero^\ClockSet \mid \clockval \models \invariant(\loc) \big\}$,
		\item $\concstateinit = (\locinit, \ClocksZero) $,
		\item  $\transition \subseteq \StateSet \times \EdgeSet \times \StateSet \cup \StateSet \times \setRgeqzero \times \StateSet$ consists of the discrete and (continuous) delay transition relations:
		\begin{enumerate}
			\item discrete transitions:
$\big((\loc,\clockval), \edge, (\loc',\clockval') \big) \in \transition$,
		if $(\loc, \clockval) , (\loc',\clockval') \in \StateSet$, $\edge = (\loc,\guard,\action,\resets,\loc') \in \EdgeSet$, $\clockval'= \reset{\clockval}{\resets}$, and $\clockval\models \guard$.
			\item delay transitions:
$\big((\loc,\clockval), d, (\loc,\clockval +d)\big) \in \transition$,
		if $d \in \setRgeqzero$ and
$\forall d' \in [0, d], (\loc, \clockval+d') \in \StateSet$.
		\end{enumerate}
	\end{enumerate}
\end{defi}

Moreover we write $(\loc, \clockval)\longuefleche{(d, \edge)} (\loc',\clockval')$ for a combination of a delay and discrete transition if
$\exists  \clockval'' :  \big( (\loc,\clockval), d, (\loc,\clockval'') \big) \in \transition$
and
$\big( (\loc,\clockval'') , \edge,  (\loc',\clockval') \big) \in \transition$.

Given a TA~$\TA$ with semantics $\semanticsextend$, we refer to the elements of~$\StateSet$ as the \emph{configurations} of~$\TA$.
A (finite) \emph{run} of~$\TA$ is an alternating sequence of configurations of~$\TA$ and pairs of delays and edges starting from the initial configuration $\concstateinit$ and ending in a final configuration (\ie{} whose location is final),
of the form
$(\loci{0}, \clockval_{0}), (d_0, \edge_0), (\loci{1}, \clockval_{1}), \ldots (\loci{n}, \clockval_{n})$ for some $n \in \setN$,
with
$\loci{n} \in \FinalSet$ and for $i = 0, 1, \dots n-1$, $\loci{i} \notin \FinalSet$, $\edge_i \in \EdgeSet$, $d_i \in \setRgeqzero, $ and
$(\loci{i}, \clockval_{i}) \longuefleche{(d_i, \edge_i)} (\loci{i+1}, \clockval_{i+1})$.
We denote by $\AllRuns$ the set of runs of~$\TA$.
A \emph{path} of~$\TA$ is a prefix of a run ending with a configuration.
\subsection{Region Automaton}\label{section:tools:RA}

We recall that the region automaton is obtained by quotienting the set of clock valuations out by an equivalence relation $\simeq$ recalled below.
Given a TA~$\TA$ and its set of clocks $\ClockSet$, we define $\LargestConstant : \ClockSet \rightarrow \setN$ the map that associates to a clock $\clock$ the greatest value to which the interpretations of $\clock$ are compared within the guards and invariants; if $\clock$ appears in no constraint, we set $\LargestConstant(\clock) = 0$.

Given $\alpha \in \setR$, we write $\intpart{\alpha}$ and $\fract{\alpha}$ for the integral and fractional parts of $\alpha$, respectively.

\begin{defi}[Equivalence relation $\simeq$ for valuations~\cite{AD94}]
Let $\clockval$, $\clockval'$ be two clock valuations. %
We say that $\clockval$ and $\clockval'$ are equivalent, denoted by $\clockval \simeq \clockval'$ when, for each $\clock \in \ClockSet$, either $	\clockval (\clock) > \LargestConstant(\clock)$ and $\clockval' (\clock) > \LargestConstant(\clock)$ or the three following conditions hold:

\begin{enumerate}
\item $\intpart{\clockval (\clock)} = \intpart{\clockval' (\clock)}$;
\item $\fract{\clockval(\clock)} = 0 \text{ if and only if }  \fract{\clockval'(\clock)} = 0$;
\item for each $\clocky \in \ClockSet$, $\fract{\clockval(\clock)} \leq  \fract{\clockval(\clocky)} \text{ if and only if } \fract{\clockval'(\clock)} \leq  \fract{\clockval'(\clocky)}$.
\end{enumerate}
\end{defi}

The equivalence relation is extended to the configurations of~$\TA$: let $\concstate = (\loc, \clockval)$ and $\concstate' = (\loc', \clockval')$ be two configurations in~$\TA$, then
\(\concstate \simeq \concstate' \text{ if and only if } \loc = \loc' \text{ and } \clockval \simeq \clockval'. \)

The equivalence class of a valuation $\clockval$ is denoted $[ \clockval ]$ and is called a \emph{clock region}, and the equivalence class of a configuration $\concstate = (\loc, \clockval)$ is denoted $[ \concstate ]$ and called a \emph{region} of~$\TA$.
In other words, a region is a pair made of a location and of a clock region.
Clock regions are denoted by the enumeration of the constraints defining the equivalence class. Thus, values of a clock $\clock$ that go beyond $\LargestConstant(\clock)$ are merged and described in the regions by the inequality ``$\clock > \LargestConstant(\clock)$''.

The set of regions of~$\TA$ is denoted by~$\Regions{\TA}$.
These regions are of finite number: this allows us to construct a finite ``untimed'' regular automaton, the \emph{region automaton}~$\RegionAutomaton{\TA}$.
Locations of $\RegionAutomaton{\TA}$ are regions of~$\TA$, and the transitions of $\RegionAutomaton{\TA}$ convey the reachable valuations associated with each configuration in~$\TA$.

To formalize the construction, we need to transform discrete and time-elapsing transitions of~$\TA$ into transitions between the regions of~$\TA$. To do that, we define a ``time-successor'' relation that corresponds to time-elapsing transitions.

\begin{defi}[Time-successor relation \cite{ALM23}]
Let $\region = (\loc, [\clockval]), \region' = (\loc', [\clockval']) \in \Regions{\TA}$.
We say that $\region'$ is a time-successor of $\region$ when $\region \neq \region'$, $\loc = \loc'$ and for each configuration $(\loc, \clockval)$ in $\region$, there exists $d \in \setRgeqzero$ such that $(\loc, \clockval + d)$ is in~$\region'$ and for all $d' < d, (\loc, \clockval + d') \in \region \cup \region'$.
\end{defi}

A region $\region = (\loc, [\clockval])$ is \emph{unbounded} when, for all $\clock$ in $\ClockSet$ and all $\clockval' \in [\clockval]$, $\clockval'(\clock) > \LargestConstant(\clock)$.

\begin{defi}[Region automaton~\cite{AD94}\label{definition:region-automaton}]
Given a TA $\TA = \TAprivextend$, the region automaton is the
tuple $\RegionAutomaton{\TA} = (\ActionSet_{\Regions{}}, \Regions{}, \region_0, {\Regions{}}_f, \EdgeSet_{\Regions{}})$ where
\begin{oneenumerate}%
\item $\ActionSet_{\Regions{}} = \ActionSet \cup \lbrace \silentaction \rbrace$;
\item ${\Regions{}} = \Regions{\TA}$;
\item $\region_0 = [\concstateinit]$;
\item ${\Regions{}}_f$ is the set of regions whose first component is a final location $\locfinal \in \FinalSet$;
\item $\EdgeSet_{\Regions{}}$ is defined as follows:
\begin{enumerate}%
	\item \emph{(discrete transitions)} For every $\region = (\loc, [\clockval])$ with $\loc\notin \FinalSet$, $\region' = (\loc', [\clockval']) \in \Regions{\TA}$ and $\action \in \ActionSet \cup \lbrace \silentaction \rbrace$:
\[ (\region, \action, \region') \in \EdgeSet_{\Regions{}} \text{ if } \exists \clockval'' \in [\clockval], \exists \clockval''' \in [\clockval'], \big( (\loc, \clockval'') , \edge , (\loc', \clockval''') \big) \in \transition \]
\noindent{}with $\edge = (\loc, \guard, \action, \resets, \loc') \in \EdgeSet$ for some guard~$\guard$ and some clock set~$\resets$.\\
\item \emph{(delay transitions)}
	For every $\region = (\loc, [\clockval])$ with $\loc\notin \FinalSet$, $\region' \in \Regions{\TA}$:
\[ (\region, \silentaction, \region') \in \EdgeSet_{\Regions{}} \text{ if } \region' \text{ is a time-successor of } \region \text{ or if } \region=\region'\text{ is unbounded.}\]
\end{enumerate}%
\end{oneenumerate}%
\end{defi}

A \emph{run} of~$\RegionAutomaton{\TA}$ is an alternating sequence of regions of~$\RegionAutomaton{\TA}$ and actions starting from the initial region $\regioni{0}$ and ending in a final region, of the form
$\regioni{0}, \action_0, \regioni{1}, \action_1, \ldots \regioni{n-1}, \action_{n-1}, \regioni{n}$ for some $n \in \setN$,
with $\regioni{n} \in {\Regions{}}_f$ and for each $i \in \interval{0}{n-1}$, $\regioni{i} \notin {\Regions{}}_f$, and $(\regioni{i}, \action_i, \regioni{i+1}) \in \EdgeSet_{\Regions{}}$.
A~\emph{path} of $\RegionAutomaton{\TA}$ is a prefix of a run ending with a region;
and the \emph{trace} of a path of~$\RegionAutomaton{\TA}$ is the sequence of actions ($\silentaction$~excluded) contained in this path.

\section{Opacity Problems in Timed Automata}\label{section:opacity:TA}
\subsection{Timed Words, Private and Public Runs}\label{sec:opacity:preliminaries:PrivPubVisit} %

Given a TA~$\TA$ and a run~$\run=(\loci{0}, \clockval_0), (d_0, \edgei{0}), (\loci{1}, \clockval_1), \ldots, (\loci{n}, \clockval_n)$ of~$\TA$, we say that $\PrivSet$ is \emph{visited in~$\run$} if there exists~$m \in \setN$ such that $\loci{m} \in \PrivSet$.
We denote by $\PrivVisit{\TA}$ the set of runs visiting $\PrivSet$, and refer to them as \emph{private} runs.
Conversely, we say that $\PrivSet$ is \emph{avoided in~$\run$}
if the run $\run$ does not visit~$\PrivSet$.
We denote the set of runs avoiding $\PrivSet$ by~$\PubVisit{\TA}$, referring to them as \emph{public} runs.

A timed word is a sequence of pairs made up of an action and a timestamp in~$\setRgeqzero$, with the timestamps being non-decreasing  over the sequence.
We denote by $\TimedWords{\ActionSet}$ the set of all finite timed words over the alphabet~$\ActionSet$.
A run $\run$ of a TA~$\TA$ defines a timed word: if $\run$ is of the form
\((\loci{0}, \clockval_0), (d_0, \edgei{0}), (\loci{1}, \clockval_1), \ldots, (\loci{n}, \clockval_n )\)
where for each  $i \in \interval{0}{n-1}$, $e_i = (\loc_i , g_i, a_i , R_i, \loc_{i+1})$ and $a_i \in \ActionSet \cup \set{\silentaction}$,
then it generates the timed word
$(a_{j_0}, \sum\limits_{i=0}^{j_0}d_i)(a_{j_1}, \sum\limits_{i=0}^{j_1}d_i)\cdots (a_{j_{m}}, \sum\limits_{i=0}^{j_{m}}d_i)$, where $j_0 < j_1 < \dots < j_{m}$ and $\set{j_k \mid k \in \interval{0}{m}} = \set{i \in \interval{0}{n-1} \mid a_i \neq \silentaction}$.
Timed words are therefore elements of $(\ActionSet \times \setRgeqzero)^*$.
We denote by $\Trace{\run}$ and call \emph{trace} of~$\run$ the timed word generated by the run~$\run$ and, by extension, given a set of runs $P \subseteq \AllRuns$, we denote by $\Trace{P}$ the set of the traces of runs in~$P$.

The set of timed words recognized by a TA~$\TA$ is the set of traces generated by its runs,
\ie{} $\Trace{\AllRuns}$, which we also denote $\Trace{\TA}$ and call \emph{language} of~$\TA$.
Similarly, we use $\PrivateTr{\TA} = \Trace{\PrivVisit{\TA}}$ to denote the set of traces of private runs of~$\TA$, and $\PublicTr{\TA} = \Trace{\PubVisit{\TA}}$ for the set of traces of public runs of~$\TA$.

In Cassez's original definition~\cite{Cassez09}, actions were partitioned into two sets,
depending on whether an attacker could observe them or not.
For simplicity, here we replace all unobservable transitions in~$\TA$ by $\silentaction$-transitions; put it differently, all actions in~$\ActionSet$ are observable.
Projecting the sequence of actions in a run onto the observable actions, as
done by Cassez, is equivalent to replacing these actions by $\silentaction$ and taking
the trace of the run.
Therefore, with respect to opacity, our model is equivalent to~\cite{Cassez09}.
\subsection{Defining Timed Opacity}

In this section, a definition of timed opacity based on the one from~\cite{Cassez09} is introduced, with three variants inspired by~\cite{ALLMS23}\footnote{%
	In~\cite{ALLMS23}, these variants are defined in the context of execution-time opacity, different from~\cite{Cassez09}.

}: existential, full and weak opacity---defined in the following.
If the attacker observes a set of runs of the system (\ie{} observes their associated traces), we do not want them to deduce whether~$\PrivSet$ was visited or not during these observed runs.
Opacity holds when the traces can be produced by both private and public runs.

Thus, we are first interested in the \emph{existence} of an opaque trace produced by the TA, that is, a trace that cannot allow the attacker to decide whether it was generated by a private or a public run.
$\exists$-opacity, which can be seen as the weakest form of opacity, is useful to check if there is at least one opaque trace; if not, the system cannot be made opaque by restraining the behaviours.%

\begin{defi}[$\exists$-opacity]\label{def:exists-opacity}
	A TA~$\TA$ is \emph{$\exists$-opaque}
	if $\PrivateTr{\TA} \cap \PublicTr{\TA} \neq \emptyset$.
\end{defi}
\defProblem{$\exists$-opacity decision problem}
{A TA~$\TA$}
{Is $\TA$ $\exists$-opaque?}

Ideally and for a stronger security of the system, one can ask the system to be opaque \emph{for all} possible traces of the system: a TA~$\TA$ is fully opaque whenever for any trace
in~$\Trace{\TA}$, it is not possible to deduce whether the run that generated this trace
visited~$\PrivSet$ or not.
Sometimes, a weaker notion is sufficient to ensure the required security in the system, \ie{}
when the compromising information solely comes from the identification of the private runs.

\begin{defi}[Full and weak opacity]\label{def:opacity}
	A TA~$\TA$ is \emph{fully opaque} if $\PrivateTr{\TA} = \PublicTr{\TA}$.
	A TA~$\TA$ is \emph{weakly opaque} if $\PrivateTr{\TA} \subseteq \PublicTr{\TA}$.
\end{defi}
\defProblem{Full (resp.\ weak) opacity decision problem}
{A TA~$\TA$}
{Is $\TA$ fully (resp.\ weakly) opaque?}
\begin{exa}
	Consider the TA~$\TA$ depicted in \cref{figure:example-TA}.
	We have:
\begin{align*}
\PrivateTr{\TA} &= \big\{ (a,\tau_1) \cdots (a,\tau_n) (b,\tau_{n+1}) \mid n \in \setN \land \forall i \in [\![1,n]\!], \tau_i \leq \tau_{i+1} \leq 2 \land \tau_{n+1} \geq 1 \big\} \\
\PublicTr{\TA} &= \big\{ (a,\tau_1) \cdots (a,\tau_n) (b,\tau_{n+1}) \mid n \in \setN \land \forall i \in [\![1,n]\!], \tau_i \leq \tau_{i+1} \leq 3 \big\}
\end{align*}
	First note that $\PrivateTr{\TA} \cap \PublicTr{\TA} \neq \emptyset$ since for example $(b, 1.5) \in \PrivateTr{\TA} \cap \PublicTr{\TA}$:
	therefore, $\TA$ is $\exists$-opaque.

	In addition, $\PrivateTr{\TA} \subseteq \PublicTr{\TA}$, hence $\TA$ is weakly opaque.

	However, $\PrivateTr{\TA} \neq \PublicTr{\TA}$ (for example $(b, 2.5) \in \PublicTr{\TA} \setminus \PrivateTr{\TA}$),
	and therefore $\TA$ is not fully opaque.
\end{exa}
\section{Reductions Between Opacity Notions and Classical Problems}\label{section:inter-reduc}

In this section, we position full, weak and $\exists$-opacity with respect to more classical problems of~TAs.
More precisely, we establish the following results:
\begin{itemize}
	\item \cref{sec:reducWFopac}: Weak and full opacity are inter-reducible;
	\item \cref{sec:reducRexist}: $\exists$-opacity and reachability in TA are inter-reducible; and
	\item \cref{sec:reducWincl}: Weak opacity and language inclusion of TA are inter-reducible.
\end{itemize}
All those inter-reductions are polynomial, hence establishing that the problems lie in general in the same complexity class.

\subsection{Inter-reducibility of Weak and Full Opacity}\label{sec:reducWFopac}

In order to relate weak and full opacity, we first introduce a construction separating private and public runs---that will also be useful to prove our subsequent results in \cref{section:opacity:TA:res,section:finite}.

\subsubsection*{$\APriv$ and $\APub$}%

First, we need a construction of two TAs called $\APub$ and~$\APriv$ (illustrated in \cref{figure:APub,figure:APriv}), that recognize timed words produced respectively by public and private runs of a given TA~$\TA$.

\paragraph{The public runs TA $\APub$}
The public runs TA $\APub$ is the easiest to build: it suffices to remove the private locations from~$\TA$ to eliminate every private run in the system.

\begin{defi}[Public runs automaton $\APub$]\label{definition:APub}

Let $\TA = \TAprivextend$ be a TA. We define the public runs TA \mbox{$\APub = (\ActionSet, \LocSet \setminus \PrivSet, \emptyset, \FinalSet\setminus \PrivSet, \ClockSet, \invariant', \EdgeSet')$} with $\invariant'$ and $\EdgeSet'$ precised as follows:
	\begin{enumerate}
		\item
$\invariant'$ is the restriction of $\invariant$ to the set of locations of $\APub$: for all $\loc \in \LocSet \setminus \PrivSet$, $\invariant'$ satisfies that $\invariant'(\loc) = \invariant(\loc)$;
		\item $\EdgeSet' = \EdgeSet \setminus \lbrace (\loc, g, a, R, \loc') \in \EdgeSet \mid \loc \in \PrivSet \lor \loc' \in \PrivSet \rbrace$ is the remaining set of transitions when private locations are removed from $\LocSet$.
	\end{enumerate}

\end{defi}

\paragraph{The private runs TA~$\APriv$}
The private runs TA~$\APriv$ is obtained by duplicating all locations and transitions of~$\TA$: one copy~$\TA_{S}$ corresponds to the paths that already visited the private locations set, and the other copy $\TA_{\bar{S}}$ corresponds to the paths that did not (this is a usual way to encode a Boolean, here ``$\PrivSet$ was visited'', in the locations of a~TA).
For each private location $\locpriv$ in~$\TA$, we redirect all transitions leading to the copy of
$\locpriv$ in~$\TA_{\bar{S}}$ towards the copy of~$\locpriv$ in~$\TA_{S}$.
The initial location is the one from~$\TA_{\bar{S}}$ and the final locations are the ones from~$\TA_{S}$.
Hence all runs need to go from $\TA_{\bar{S}}$ to $\TA_{S}$ before reaching a final location---which requires visiting a private location.
Formally:

\begin{defi}[Private runs TA $\APriv$]

Let $\TA = \TAprivextend$ be a TA.
	The private runs TA \mbox{$\APriv = (\ActionSet, \LocSet_S \uplus \LocSet_{\bar{S}}, \loci{0}^{\bar{S}}, \PrivSet^S, \FinalSet^S, \ClockSet, \invariant', \EdgeSet')$} is defined as follows:
	\begin{enumerate}
		\item $\LocSet_S  =  \lbrace \loc^S \mid \loc \in \LocSet \rbrace$ and $\LocSet_{\bar{S}}  =   \lbrace \loc^{\bar{S}} \mid \loc \in \LocSet \rbrace$.
		That is,
			$\LocSet_S$ and $\LocSet_{\bar{S}}$ are two disjoint copies of~$\LocSet$: to each location~$\loc$ of~$\LocSet$, we associate a location denoted $\loc^S$ in~$\LocSet_S$ and a location $\loc^{\bar{S}}$ in~$\LocSet_{\bar{S}}$ (for instance $\loci{0}^{\bar{S}}$ is in $\LocSet_{\bar{S}}$);

		\item $\PrivSet^S= \lbrace \locpriv^S \mid \locpriv \in \PrivSet \rbrace$ is the set of private locations;

		\item $\FinalSet^S = \lbrace \locfinal^S \mid \locfinal \in \FinalSet \rbrace$ is the set of final locations;

		\item $\invariant'$ is defined such as $\invariant'(\loc^S) = \invariant'(\loc^{\bar{S}}) = \invariant(\loc)$;
			and

		\item $\EdgeSet' = \EdgeSet_S \uplus \EdgeSet_{\bar{S}} \uplus \EdgeSet_{\bar{S} \rightarrow S}$ where $\EdgeSet_S$ and $\EdgeSet_{\bar{S}}$ are the two disjoint copies of $\EdgeSet$ respectively associated with the sets of locations $\LocSet_S$ and $\LocSet_{\bar{S}}$, and $\EdgeSet_{\bar{S} \rightarrow S}$ is a copy of the set of all transitions that go toward $\PrivSet^{\bar{S}}$ where the target location $\locpriv^{\bar{S}}$ has been changed into~$\locpriv^{S}$.
		More formally:
		\[\begin{array}{rcl}
		\EdgeSet_S & = & \big\lbrace (\loc^S, g, a, R, \loc'^S) \mid (\loc, g, a, R, \loc') \in \EdgeSet \big\rbrace \\
		\EdgeSet_{\bar{S}} & = &  \big\lbrace (\loc^{\bar{S}}, g, a, R, \loc'^{\bar{S}}) \mid (\loc, g, a, R, \loc') \in \EdgeSet \wedge \loc'\neq \locpriv \big\rbrace \\
		\EdgeSet_{\bar{S} \rightarrow S} & = & \big\lbrace (\loc^{\bar{S}}, g, a, R, \locpriv^S) \mid (\loc, g, a, R, \locpriv) \in \EdgeSet \big\rbrace\text{.}
		\end{array}\]
	\end{enumerate}
\end{defi}

The languages of $\APriv$ and~$\APub$ are respectively $\PrivateTr{\TA}$ and $\PublicTr{\TA}$.

\begin{figure}[tb]
	\centering
	\begin{subfigure}[b]{0.4\textwidth}
   \centering
\begin{tikzpicture}[pta, node distance=2cm, thin]

	\node[location, initial] (l0) at (-1, 1) {$\locinit$};
	\node[location, final] (lfS) at (+1, 1) {$\locfinal$};
	\node[rectangle, minimum width=3cm, minimum height=2cm, align=center, draw] (rectangle) at (0cm, 1cm) {$\TA$};

	\node[location, private] (lpriv) at (0, 0.5) {$\locpriv$};
	\path
	;

\draw[-, very thick, color = black!20] (-0.5,0.2) -- (0.5,0.8);
\draw[-, very thick, color = black!20] (0.5,0.2) -- (-0.5,0.8);

\end{tikzpicture}
  \caption{$\APub$}
  \label{figure:APub}
	\end{subfigure}
	\begin{subfigure}[b]{0.4\textwidth}
   \centering
	\begin{tikzpicture}[pta, node distance=2cm, thin]

	\node[location] (l0S) at (-1, 0) {$\locinit^S$};
	\node[location, final] (lfS) at (+1, 0) {$\locfinal^S$};
	\node[rectangle, minimum width=3cm, minimum height=1.5cm, align=center, draw] (rectangle) at (0cm, -.1cm) {$\TA_S$\\};

	\node[location, initial] (l0S) at (-1, -2) {$\locinit^{\bar{S}}$};
	\node[location] (lfS) at (+1, -2) {$\locfinal^{\bar{S}}$};
	\node[rectangle, minimum width=3cm, minimum height=1.5cm, align=center, draw] (rectangle) at (0cm, -2cm) {$\TA_{\bar{S}}$};

	\node[location, private] (lpriv) at (0, -0.5) {$\locpriv^S$};
	\path
	(-0.5, -1.5) edge (lpriv)
	;
	\end{tikzpicture}
  \caption{$\APriv$}
  \label{figure:APriv}
	\end{subfigure}
	\caption{Illustrating $\APub$ and $\APriv$ %
	}
	\label{fig:two-automata}
\end{figure}
\begin{exa}
	Consider again the TA~$\TA$ from \cref{figure:example-TA}.
	We give $\APub$ and~$\APriv$ in \cref{figure:APriv-APub:example-TA}.
\end{exa}
\begin{figure}[tb]

	\centering
	\begin{subfigure}[b]{0.3\textwidth}
		\begin{tikzpicture}[pta, xscale=1, yscale=.5]

		\location[initial,name=s0,at={(0, -1)}]{$\styleclock{\clock} \leq 3$}{$\loci{0}$};

		\node[location, final] at (1.5, -1) (s1) {$\loci{1}$};

		\path (s0) edge[loop above] node[]{$\styleact{a}$} (s0);
		\path (s0) edge[] node[below]{$\styleact{b}$} (s1);

		\end{tikzpicture}
		\caption{$\APub$}
		\label{fig:example-APub}
	\end{subfigure}
	\hfill
	\begin{subfigure}[b]{0.65\textwidth}
		\begin{tikzpicture}[pta, scale=2, xscale=1, yscale=.45]
		\node[rectangle, minimum width=6.25cm, minimum height=2.2cm, align=center, draw, dashed] (rectangleS) at (1.25cm, 2.25cm) {};

		\location[initial,name=s0S,at={(0, 1.75)}]{$\styleclock{\clock} \leq 3$}{$\loci{0}$};

		\location[private,name=s2S,at={(1.5, 2.75)}]{$\styleclock{\clock} \leq 2$}{$\loci{2}$};

		\node[location, final] at (2.5, 1.75) (s1S) {$\loci{1}$};

		\node[rectangle, minimum width=6.25cm, minimum height=2.2cm, align=center, draw, dashed] (rectangleSbar) at (1.25cm, -0.5cm) {};

		\location[initial,name=s0,at={(0, -1)}]{$\styleclock{\clock} \leq 3$}{$\loci{0}$};

		\location[name=s2,at={(1.5, 0)}]{$\styleclock{\clock} \leq 2$}{$\loci{2}$};

		\node[location] at (2.5, -1) (s1) {$\loci{1}$};

		\path (s0S) edge node[align=center, pos=0.6]{$\styleclock{\clock} \geq 1$\\$\styleact{\silentaction}$} (s2S);
		\draw (s0S) .. controls +(100:40pt) and +(130:15pt) .. (s0S) node[pos=0.5, above]{$\styleact{a}$};
		\path (s0S) edge[] node[below]{$\styleact{b}$} (s1S);
		\path (s2S) edge node[]{$\styleact{b}$} (s1S);
		\draw (s0) .. controls +(100:40pt) and +(130:15pt) .. (s0) node[pos=0.5, above]{$\styleact{a}$};
		\path (s0) edge[] node[below]{$\styleact{b}$} (s1);
		\path (s2) edge node[]{$\styleact{b}$} (s1);
		\path (s0) edge node[align=center, below, yshift=-1em]{$\styleclock{\clock} \geq 1$\\$\styleact{\silentaction}$} (s2S);
		\end{tikzpicture}
		\caption{$\APriv$}
		\label{fig:example-APriv}
	\end{subfigure}
\caption{$\APub$ and $\APriv$ with the example from \cref{figure:example-TA}}
\label{figure:APriv-APub:example-TA}
\end{figure}

By a minor modification on~$\APriv$, one can build a TA~$\AMemo$ that recognizes exactly the same language as $\TA$ and that stores in each location whether the private locations set has been visited.

\begin{defi}\label{definition:AMemo}
Given a TA~$\TA$, we let~$\AMemo$ be the copy of~$\APriv$ to which we add the set $\lbrace \locfinal^{\bar{S}} \mid \locfinal \in \FinalSet \rbrace$ to the set of final locations.
\end{defi}

Notably, there is a one-to-one trace-conserving correspondence between the public (resp.\ private) runs of~$\TA$ and the public (resp.\ private) runs of~$\AMemo$.
This implies that $\TA$ is weakly (resp.\ fully) opaque if and only if $\AMemo$ is weakly (resp.\ fully) opaque.

\subsubsection*{Inter-reducibility Proof}%

While the distinction between weak and full notions of opacity can lead to meaningful changes in the context of execution-time opacity~\cite{ALLMS23}, within our framework both associated problems are inter-reducible.

\begin{thm}\label{th:weak-full-eq}
The weak opacity decision problem and the full opacity decision problem are inter-reducible.
\end{thm}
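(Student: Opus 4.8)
The plan is to treat the two directions separately, giving in each case a polynomial-time many-one reduction. The guiding observation is that full opacity of $\TA$ is the conjunction of the inclusion $\PrivateTr{\TA} \subseteq \PublicTr{\TA}$ (which is precisely weak opacity) with the reverse inclusion $\PublicTr{\TA} \subseteq \PrivateTr{\TA}$ (which is weak opacity after exchanging the roles of private and public runs). Since we already have, from the constructions above, automata $\APub$ and $\APriv$ whose languages are $\PublicTr{\TA}$ and $\PrivateTr{\TA}$ respectively, and since the designation ``private'' depends only on whether a run visits a private location, both inclusions can be recast as opacity statements by re-routing runs through fresh private locations or by erasing private markings.

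For the reduction from weak to full opacity, given $\TA$ I build a TA $\TB$ with $\PublicTr{\TB} = \PublicTr{\TA}$ and $\PrivateTr{\TB} = \Trace{\TA} = \PrivateTr{\TA} \cup \PublicTr{\TA}$. Concretely, $\TB$ has a fresh initial location from which two $\silentaction$-transitions (taken at time $0$, \eg{} enforced by a fresh clock whose invariant forbids any delay) lead into two disjoint copies: a copy of $\APub$, all of whose runs are public, and a copy of $\TA$ in which every run is forced through a single fresh private location placed immediately after entry, so that all its runs are private while still producing exactly $\Trace{\TA}$. Then $\PublicTr{\TB} \subseteq \PrivateTr{\TB}$ holds trivially, so $\TB$ is fully opaque iff $\PrivateTr{\TB} \subseteq \PublicTr{\TB}$, that is $\Trace{\TA} \subseteq \PublicTr{\TA}$, which is equivalent to $\PrivateTr{\TA} \subseteq \PublicTr{\TA}$, i.e.\ to $\TA$ being weakly opaque.

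For the reduction from full to weak opacity, I first build a ``swapped'' automaton whose private traces are $\PublicTr{\TA}$ and whose public traces are $\PrivateTr{\TA}$: I route a copy of $\APub$ behind one fresh private location (turning its public runs into private ones, with unchanged traces $\PublicTr{\TA}$), and I place beside it a copy of $\APriv$ with all its private locations un-marked (turning its private runs into public ones, with unchanged traces $\PrivateTr{\TA}$). Weak opacity of this swapped automaton is exactly $\PublicTr{\TA} \subseteq \PrivateTr{\TA}$. To encode the conjunction of this with the weak opacity of $\TA$ itself as a single instance, I prefix $\TA$ and the swapped automaton by two distinct fresh observable actions $\extraAction_1 \neq \extraAction_2$ emitted from a common new initial location (again at time $0$, resetting the clocks on entry). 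Because the two branches are separated by distinct observable first letters, weak opacity of the resulting TA $\TB$ decomposes into the conjunction of $\PrivateTr{\TA} \subseteq \PublicTr{\TA}$ and $\PublicTr{\TA} \subseteq \PrivateTr{\TA}$, which is full opacity of $\TA$.

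Both constructions only duplicate $\TA$ a constant number of times and add a constant number of locations, clocks and edges, so both reductions are polynomial. I expect the full-to-weak direction to be the main obstacle: one must (i) realise the private/public swap faithfully, which is where the fresh private location and the un-marking of $\APriv$'s private locations are essential, and (ii) merge two independent inclusion tests into one weak-opacity instance without letting the branches interfere, which the distinct observable tags $\extraAction_1, \extraAction_2$ ensure. The timing bookkeeping---forcing the initial silent and tag transitions at time $0$ and resetting clocks so that no spurious timestamp offsets are introduced---is routine but must be stated carefully to guarantee that the trace sets are exactly as claimed.
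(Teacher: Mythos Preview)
Your proposal is correct. The weak-to-full direction is essentially identical to the paper's: both arrange that $\PublicTr{\TB}=\PublicTr{\TA}$ and $\PrivateTr{\TB}=\PrivateTr{\TA}\cup\PublicTr{\TA}$, so full opacity of~$\TB$ collapses to the single inclusion $\PrivateTr{\TA}\subseteq\PublicTr{\TA}$.

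For the full-to-weak direction the two arguments diverge slightly. The paper builds only the ``swapped'' automaton~$\TB$ (your $\APub$ behind a fresh private location together with an un-marked~$\APriv$) and then states that $\TA$ is fully opaque iff \emph{both} $\TA$ and~$\TB$ are weakly opaque, i.e.\ it reduces one full-opacity instance to a conjunction of two weak-opacity instances. You go one step further and merge these two instances into a single one by tagging the two branches with distinct fresh observable letters $\extraAction_1,\extraAction_2$ at time~$0$, obtaining a genuine many-one reduction. Both are polynomial and suffice for the inter-reducibility claim; your version is cleaner as a many-one reduction, while the paper's avoids enlarging the alphabet (it uses only $\silentaction$-transitions with $0$-time invariants), which is convenient later when the theorem is applied to subclasses such as one-action~TAs.
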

\begin{proof}
Let us first show that the full opacity decision problem reduces to the weak opacity decision problem. Let $\TA$ be a TA.
In order to test whether $\TA$ is fully opaque, we can test both inclusions: $\PrivateTr{\TA} \subseteq \PublicTr{\TA}$ and $\PrivateTr{\TA} \supseteq \PublicTr{\TA}$.
The first inclusion can be decided directly by testing whether $\TA$ is weakly opaque.
In order to test the second inclusion, we need to build a TA $\TB$ where private and public runs are swapped.
To do so, we first build $\APub$ and~$\APriv$, and we then define $\TB$ (see \cref{figure:weak-to-full}) as the TA made of $\APub$ and~$\APriv$ as well as two new locations $\locinit'$ and~$\locpriv'$.
The location $\locinit'$ is the initial location of~$\TB$ and $\locpriv'$ is the only private location.
Both $\locinit'$ and $\locpriv'$ have the invariant $\clock = 0$ (picking any $\clock \in \ClockSet$), ensuring no time may elapse in those locations.
From~$\locinit'$, with a transition labelled by $\silentaction$, one may reach either the initial location of~$\APriv$ ($\loci{0}^S$) or~$\locpriv'$, from which an $\silentaction$-transition leads to the initial location of~$\APub$ ($\loci{0}$%
).
The final locations of~$\TB$ are the final locations of~$\APub$ and~$\APriv$.
The public runs of~$\TB$ are the ones starting in~$\locinit'$, going immediately to~$\loci{0}^S$, and then following a run of~$\APriv$ until a final location of~$\APriv$ is reached.
As the initial transition is labelled by~$\silentaction$ in 0-time, we have~$\PublicTr{\TB}=\PrivateTr{\TA}$.
Similarly, the private runs of~$\TB$ are the ones starting in~$\locinit'$, going immediately to~$\locpriv'$ then immediately to~$\locinit$, and then following a run of~$\APub$ until a final location of~$\APub$ is reached.
As the two initial transitions are labelled by~$\silentaction$ and occur in 0-time due to the invariants, we have~$\PrivateTr{\TB}=\PublicTr{\TA}$.
Hence, $\TA$ is fully opaque if and only if~$\TA$ and~$\TB$ are weakly opaque.

Let us now show the converse reduction.
Let $\TA$ be a~TA.
We will define a TA~$\TB$ such that $\TB$ is fully opaque if and only if $\TA$ is weakly opaque.
To do so, we want that $\PublicTr{\TB}=\PublicTr{\TA}$ and~$\PrivateTr{\TB}=\PublicTr{\TA}\cup \PrivateTr{\TA}$. Indeed, if these equalities hold, $\PublicTr{\TB}=\PrivateTr{\TB}$ would be equivalent to $\PublicTr{\TA}=\PublicTr{\TA}\cup \PrivateTr{\TA}$---which holds if and only if $\PrivateTr{\TA}\subseteq \PublicTr{\TA}$.
We build $\TB$ (see \cref{figure:full-to-weak}) using the same encoding as in the first reduction, except that $\locinit'$ is now connected to $\locinit$ (\ie{} the initial location of~$\APub$) and to~$\locpriv'$, while $\locpriv'$ is now connected to both $\loci{0}^S$ and~$\locinit$.
The public runs of $\TB$ are the ones starting in~$\locinit'$, going immediately to $\loci{0}$, and then following a run of $\APub$ until a final location of~$\APub$ is reached.
As the initial transition is labelled by~$\silentaction$ and must be taken in 0-time, we have~$\PublicTr{\TB}=\PublicTr{\TA}$.
Similarly, the private runs of $\TB$ are the ones starting in $\locinit'$, going immediately to~$\locpriv'$ and then either immediately to~$\loci{0}^S$ followed by a run of~$\APriv$, or to $\loci{0}$~followed by a run of~$\APub$.
As the two initial transitions are labelled by $\silentaction$ and must be taken in 0-time, we have $\PrivateTr{\TB}=\PrivateTr{\TA}\cup\PublicTr{\TA}$.
Hence, $\TA$ is weakly opaque if and only if $\TB$ is fully opaque.
\begin{figure}[tb]
	\centering
	\begin{subfigure}[b]{0.48\textwidth} %
   \centering
\scalebox{0.9}{
\begin{tikzpicture}[pta, node distance=2cm, thin]

	\location[initial,name=l0',at={(-5, -1)}]{$\styleclock{\clock} = 0$}{$\loci{0}'$};

	\location[private,name=lpriv',at={(-3, -1.5)}]{$\styleclock{\clock} = 0$}{$\locpriv'$};

	\node[location] (l0S) at (-1, 0) {$\locinit^S$};
	\node[location, final] (lfS) at (+1, 0) {$\locfinal^S$};
	\node[rectangle, minimum width=3cm, minimum height=1.5cm, align=center, draw] (rectangle) at (0cm, -.1cm) {$\APriv$};

	\node[location] (l0) at (-1, -2) {$\locinit$};
	\node[location, final] (lf) at (+1, -2) {$\locfinal$};
	\node[rectangle, minimum width=3cm, minimum height=1.5cm, align=center, draw] (rectangle) at (0cm, -2cm) {$\APub$};

	\path
	(l0') edge node[below]{$\styleact{\silentaction}$}(lpriv')
	(l0') edge node{$\styleact{\silentaction}$}(l0S)
	(lpriv') edge node[below]{$\styleact{\silentaction}$}(l0)
	;

\end{tikzpicture}}
  \caption{$\TB$ such that $\TA$ is fully opaque iff $\TA$ and $\TB$ are both weakly opaque}
  \label{figure:weak-to-full}
	\end{subfigure}
	\hfill
	\begin{subfigure}[b]{0.48\textwidth} %
   \centering
   \scalebox{0.9}{
	\begin{tikzpicture}[pta, node distance=2cm, thin]

	\location[initial,name=l0',at={(-4.5, -1)}]{$\styleclock{\clock} = 0$}{$\loci{0}'$};

	\location[private,name=lpriv',at={(-2.75, -0.5)}]{$\styleclock{\clock} = 0$}{$\locpriv'$};

	\node[location] (l0S) at (-1, 0) {$\locinit^S$};
	\node[location, final] (lfS) at (+1, 0) {$\locfinal^S$};
	\node[rectangle, minimum width=3cm, minimum height=1.5cm, align=center, draw] (rectangle) at (0cm, -.1cm) {$\APriv$};

	\node[location] (l0) at (-1, -2) {$\locinit$};
	\node[location, final] (lf) at (+1, -2) {$\locfinal$};
	\node[rectangle, minimum width=3cm, minimum height=1.5cm, align=center, draw] (rectangle) at (0cm, -2cm) {$\APub$};

	\path
	(l0') edge node[above]{$\styleact{\silentaction}$}(lpriv')
	(l0') edge node[below]{$\styleact{\silentaction}$}(l0)
	(lpriv') edge node[left]{$\styleact{\silentaction}$}(l0)
	(lpriv') edge node[above left]{$\styleact{\silentaction}$}(l0S)
	;
	\end{tikzpicture}}
  \caption{$\TB$ such that $\TA$ is weakly opaque iff $\TB$ is fully opaque}
  \label{figure:full-to-weak}
	\end{subfigure}
	\caption{Two TAs for the inter-reducibility of weak and full opacity}
	\label{fig:inter-reducibility}
\end{figure}
\end{proof}
\subsection{Inter-reducibility of Reachability and \texorpdfstring{$\exists$}{∃}-opacity}\label{sec:reducRexist}

We show here that, the $\exists$-opacity problem is inter-reducible with the reachability problem. As the latter is known to be \PSPACE{}-complete~\cite{AD94}, even for TAs with only two clocks~\cite{FJ15}, this establishes that the \PSPACE{}-completeness of the $\exists$-opacity problem.

\begin{thm}\label{theorem:decidability-exists-opacity}
The $\exists$-opacity problem is inter-reducible with the reachability problem for TAs.
\end{thm}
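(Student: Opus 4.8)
The plan is to establish two polynomial reductions, relying on the automata $\APriv$ and~$\APub$ from \cref{sec:reducWFopac}, whose languages are $\PrivateTr{\TA}$ and $\PublicTr{\TA}$ respectively. Recall that $\TA$ is $\exists$-opaque exactly when $\PrivateTr{\TA}\cap\PublicTr{\TA}\neq\emptyset$, so the first direction amounts to deciding non-emptiness of the intersection of two timed languages, and the second direction amounts to forcing this intersection to track the reachability of a target location.

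For the reduction from $\exists$-opacity to reachability, I would build a synchronised timed product~$\mathcal{C}$ of $\APriv$ and~$\APub$ reading a single timed word on both components at once: the clock set of~$\mathcal{C}$ is the disjoint union of the two clock sets, the two components are forced to fire a transition carrying the same observable action at the same instant, while each component may fire its $\silentaction$-transitions independently, and delay transitions advance all clocks simultaneously. By construction $\Trace{\mathcal{C}}=\PrivateTr{\TA}\cap\PublicTr{\TA}$, a pair of final locations being reachable in~$\mathcal{C}$ iff a common trace exists. Hence $\TA$ is $\exists$-opaque iff a final location is reachable in~$\mathcal{C}$; as $\mathcal{C}$ has polynomially many locations and clocks, this is a polynomial reduction to reachability.

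For the converse reduction from reachability, given a TA~$\TA$ and a target location~$\locTarget$, I would build a TA~$\TB$ in which \emph{every} run produces the empty timed word, so that the only candidate common trace is the empty one, and arrange that a public accepting run always exists whereas a private accepting run exists iff $\locTarget$ is reachable. Concretely, I relabel every action of~$\TA$ by~$\silentaction$ (which leaves the set of reachable locations unchanged, since the semantics ignores action labels), declare $\locTarget$ to be the unique private location, and add a fresh initial location~$\locinit'$ (non-private, invariant~$\BTrue$) together with a fresh final location~$\locfinal$ that is the unique final location of~$\TB$; I then add $\silentaction$-edges $\locinit'\to\locinit$, $\locinit'\to\locfinal$ and $\locTarget\to\locfinal$ with trivial guard and empty reset. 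The edge $\locinit'\to\locfinal$ yields a public accepting run, so $\PublicTr{\TB}=\set{\epsilon}$ where $\epsilon$ denotes the empty timed word, while a private accepting run --- necessarily of the form $\locinit'\to\locinit\to\cdots\to\locTarget\to\locfinal$ --- exists iff $\locTarget$ is reachable in~$\TA$, so $\PrivateTr{\TB}$ equals $\set{\epsilon}$ when $\locTarget$ is reachable and $\emptyset$ otherwise. Thus $\PrivateTr{\TB}\cap\PublicTr{\TB}\neq\emptyset$ iff $\locTarget$ is reachable, and the construction is clearly polynomial.

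I expect the main obstacle to be the synchronised timed product in the first direction: one must verify carefully that synchronising on observable actions \emph{together with their timestamps}, while letting the two copies interleave their $\silentaction$-transitions freely, yields a genuine TA whose accepted timed language is \emph{exactly} the intersection $\PrivateTr{\TA}\cap\PublicTr{\TA}$ --- in particular that the timestamps of matched observable events are forced to coincide and that no spurious synchronisation is introduced. The backward direction is comparatively routine, since collapsing all actions to~$\silentaction$ reduces every trace to the empty timed word and turns the opacity question into the coexistence of a private and a public accepting run.
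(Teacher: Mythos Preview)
Your proposal is correct and matches the paper's proof for the first direction (reducing $\exists$-opacity to reachability via the product $\APriv\times\APub$). For the converse reduction your construction differs slightly from the paper's: you silence every action of~$\TA$ so that the unique candidate common trace is the empty word and add an edge $\locTarget\to\locfinal$, whereas the paper keeps the actions of~$\TA$ observable, declares the final locations of~$\TA$ to be private, and adds a ``universal'' gadget $\loci{1}'$ with self-loops on every letter so that $\PublicTr{\TA'}=\TimedWords{\ActionSet}$. Both gadgets achieve the same effect---forcing $\PublicTr{}\cap\PrivateTr{}\neq\emptyset$ to hold exactly when the target is reachable---and yours is arguably the simpler of the two; the paper's version has the incidental advantage of not introducing new $\silentaction$-transitions inside~$\TA$ itself (it only adds three $\silentaction$-edges at the new initial location), which can matter when one later tracks the role of $\silentaction$-transitions in subclasses.
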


\begin{proof}
Let us first reduce the $\exists$-opacity problem to reachability.
Let $\TA$ be a~TA.
We build $\APriv$ and $\APub$ from~$\TA$ as described in \cref{sec:reducWFopac}.
Noting that the product of two TAs recognizes the intersection of their languages~\cite[Theorem~3.15]{AD94} (assuming the two TAs share no clock), we build
the TA $\APriv \times \APub$, product of $\APriv$ and~$\APub$,
whose language is $\PrivateTr{\TA} \cap \PublicTr{\TA}$.
To build this product, we can rename all clocks from $\APub$ so that $\APriv$ and $\APub$ share no clock.

The $\exists$-opacity problem is by definition the non-emptiness of the intersection of $\PrivateTr{\TA}$ and $\PublicTr{\TA}$.
Moreover, the reachability of a final location of $\APriv \times \APub$ is equivalent to the non-emptiness of the language of $\APriv \times \APub$, and thus of the set
$\PrivateTr{\TA} \cap \PublicTr{\TA}$.

Conversely, let us reduce the reachability problem for TAs to the $\exists$-opacity problem.
Let $\TA = (\ActionSet, \LocSet, \locinit, \emptyset, \FinalSet, \ClockSet, I, E)$ be a TA.
We suppose that $\ClockSet$ is not empty and define (see \cref{figure:exists-opacity:PSPACE-hardness})
$\TA' = (\ActionSet, \LocSet \cup \setsmall{\locinit', \loci{1}', \locfinal'}, \locinit', \FinalSet, \FinalSet \cup \setsmall{\locfinal'}, \ClockSet, I', E')$ where $I'$ is an invariant extending $I$ such that $I'(\locinit')=I'(\loci{1}') = I'(\locfinal')=\BTrue{}$ and
$E' = E \cup \big\{(\locinit', \silentaction, (\clock = 0), \emptyset, \locinit),
(\locinit', \silentaction, (\clock = 0), \emptyset, \loci{1}'), (\loci{1}', \silentaction, \BTrue{}, \emptyset, \locfinal')\big\} \cup \big\{(\loci{1}', a, \BTrue{}, \emptyset, \loci{1}') \mid a \in \ActionSet\big\}$ for some $\clock \in \ClockSet$.

\begin{figure}
\begin{center}
\begin{tikzpicture}[pta, node distance=2cm, thin]

	\node[location] (l0) at (-1, 1) {$\locinit$};
	\node[location, final, private] (lf) at (+1, 1) {$\locfinal$};
	\node[rectangle, minimum width=3cm, minimum height=2cm, align=center, draw] (rectangle) at (0cm, 1cm) {$\TA$};

	\node[location, initial] (l0') at (-2.5, -0.25) {$\locinit'$};
	\node[location] (l1') at (-1,-1.5) {$\loci{1}'$};
	\node[location, final] (lf') at (1, -1.5) {$\locfinal'$};
	\path
	(l0') edge node[align=center, pos=0.6]{$\styleclock{\clock} = 0$ \\ $\styleact{\silentaction}$} (l0)
	(l0') edge node[below left, align=center, pos=0.6]{$\styleclock{\clock} = 0$ \\ $\styleact{\silentaction}$} (l1')
	(l1') edge[loop below] node{$\styleact{\ActionSet}$} (l1')
	(l1') edge (lf')
	;
\end{tikzpicture}
\end{center}
  \caption{TA $\TA'$ for the reduction of the reachability problem to $\exists$-opacity}
  \label{figure:exists-opacity:PSPACE-hardness}
\end{figure}

The TA~$\TA'$ is $\exists$-opaque if and only if a final location is reachable in~$\TA$.
Indeed, the set $\PublicTr{\TA'}$ contains all the possible timed traces with the action set $\ActionSet$, and the private runs on $\TA'$ correspond exactly to runs on $\TA$. Hence $\PublicTr{\TA'} \cap \PrivateTr{\TA'} \neq \emptyset$ if and only if $\PrivateTr{\TA'} \neq \emptyset$, \ie{} if there is a run on $\TA$ that reaches a final location.
\end{proof}
\subsection{Inter-reducibility of Weak Opacity and language inclusion}\label{sec:reducWincl}

The timed language inclusion problem asks, given a pair of TAs $\TA$ and $\TB$, whether
$\Trace{\TA}\subseteq \Trace{\TB}$.
We now relate this problem to weak opacity.

\begin{thm}\label{lem:language-inclusion}
The timed language inclusion problem and the weak opacity problem are inter-reducible.
\end{thm}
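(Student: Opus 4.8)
The plan is to establish each direction of the inter-reduction separately, reusing the automata $\APriv$ and $\APub$ introduced in \cref{sec:reducWFopac} together with a variant of the gadget from the proof of \cref{th:weak-full-eq}. Both reductions will be polynomial, as required by the surrounding discussion.

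For the reduction of weak opacity to timed language inclusion, I would observe that the work is essentially already done by $\APriv$ and $\APub$. Since their languages are exactly $\PrivateTr{\TA}$ and $\PublicTr{\TA}$, a TA $\TA$ is weakly opaque, i.e.\ $\PrivateTr{\TA} \subseteq \PublicTr{\TA}$, if and only if $\Trace{\APriv} \subseteq \Trace{\APub}$. Hence, given $\TA$, I produce the pair $(\APriv, \APub)$ and query the language inclusion oracle; as both automata have size polynomial in $\TA$, the reduction is polynomial.

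For the converse, given two TAs $\TA$ and $\TB$, I would build a single TA $\styleAutomaton{C}$ whose private traces are exactly $\Trace{\TA}$ and whose public traces are exactly $\Trace{\TB}$, so that $\styleAutomaton{C}$ is weakly opaque iff $\Trace{\TA} \subseteq \Trace{\TB}$. Concretely, I take the disjoint union of $\TA$ and $\TB$ (renaming clocks so that the two share none, and declaring the private location set of $\styleAutomaton{C}$ to be a single fresh location, ignoring any pre-existing private locations of $\TA$ and $\TB$), then add a fresh initial location $\locinit'$ and a fresh private location $\locpriv'$, both carrying the invariant $\clock = 0$ for some clock $\clock$ (adding a fresh clock if the clock set is empty). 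From $\locinit'$ I put an $\silentaction$-transition to $\locpriv'$ and then an $\silentaction$-transition from $\locpriv'$ to the initial location of $\TA$, while a second $\silentaction$-transition goes directly from $\locinit'$ to the initial location of $\TB$; the final locations of $\styleAutomaton{C}$ are those of $\TA$ and of $\TB$. Every accepting run of $\styleAutomaton{C}$ thus either funnels through $\locpriv'$ into $\TA$, hence is private and contributes the trace of a run of $\TA$, or enters $\TB$ directly without ever meeting a private location, hence is public and contributes the trace of a run of $\TB$.

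The only real subtlety, and thus the step I would verify most carefully, is that the leading $\silentaction$-transitions do not perturb the traces: since the invariants $\clock = 0$ forbid time from elapsing in $\locinit'$ and $\locpriv'$, and since $\silentaction$ is unobservable, these transitions add neither an observed letter nor any delay. Consequently $\PrivateTr{\styleAutomaton{C}} = \Trace{\TA}$ and $\PublicTr{\styleAutomaton{C}} = \Trace{\TB}$, so $\PrivateTr{\styleAutomaton{C}} \subseteq \PublicTr{\styleAutomaton{C}}$ holds exactly when $\Trace{\TA} \subseteq \Trace{\TB}$. As $\styleAutomaton{C}$ has size linear in $\TA$ and $\TB$, this reduction is polynomial as well, which completes the inter-reduction.
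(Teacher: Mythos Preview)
Your proposal is correct and follows essentially the same approach as the paper: the first direction uses $\APriv$ and $\APub$ exactly as you do, and the second direction builds the same gadget with a fresh initial location branching either directly into $\TB$ (public) or through a fresh private location into $\TA$ (private), forcing these initial transitions to occur in zero time. The only cosmetic differences are that the paper enforces the zero-time constraint via guards $\clock=0$ on the $\silentaction$-transitions rather than invariants on the new locations, and does not bother renaming clocks (which is indeed unnecessary for a disjoint union, though harmless).
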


\begin{proof}
The easiest direction is the reduction from weak opacity to language inclusion since a TA $\TA$ is weakly opaque if and only if the inclusion $\Trace{\APriv} \subseteq \Trace{\APub}$ holds.

Let us now show the opposite reduction.
Let $\TA$ and $\TB$ be two TAs. We build a new TA $\mathcal{O}$ such that $\mathcal{O}$ is weakly opaque if and only if $\Trace{\TA} \subseteq \Trace{\TB}$. As represented in \cref{fig:language-inclusion}, $\mathcal{O}$ consists in a copy of $\TA$ and~$\TB$ with two additional locations: a new initial location $\locinit$ and a location $\locpriv$ which is selected as the unique private location.
From $\locinit$, one can either go in 0 time and with an unobservable action to the initial
location of $\TB$ or to $\locpriv$, and from the latter, one can go (again in 0 time and with an unobservable action) to the initial location of~$\TA$.
As a consequence, 
$\PrivateTr{\mathcal{O}} = \Trace{\TA}$ and $\PublicTr{\mathcal{O}} = \Trace{\TB}$. It follows that $\Trace{\TA} \subseteq \Trace{\TB}$ if and only if $\mathcal{O}$ is weakly opaque.
\begin{figure}[tb]
\begin{tikzpicture}[pta, node distance=2cm, thin]

	\node[location] (l0A) at (-1, 0) {$\locinit^{\TA}$};
	\node[location, final] (lfA) at (+1, 0) {$\locfinal^{\TA}$};
	\node[rectangle, minimum width=3cm, minimum height=1.5cm, align=center, draw] (rectangle) at (0cm, 0cm) {$\TA$};

	\node[location] (l0B) at (-1, -2) {$\locinit^{\TB}$};
	\node[location, final] (lfB) at (+1, -2) {$\locfinal^{\TB}$};
	\node[rectangle, minimum width=3cm, minimum height=1.5cm, align=center, draw] (rectangle) at (0cm, -2cm) {$\TB$};

	\node[location, private] (lpriv) at (-3.5, 0) {$\locpriv$};
	\node[location, initial] (l0) at (-5,-1) {$\locinit$}; 
	\path
		(l0) edge node[align=center]{$\styleclock{x}=0$\\$\styleact{\silentaction}$} (lpriv)
		(l0) edge node[align=center, below left]{$\styleclock{x}=0$\\$\styleact{\silentaction}$} (l0B)
		(lpriv) edge node[align=center]{$\styleclock{x}=0$\\$\styleact{\silentaction}$} (l0A);
	\end{tikzpicture}
  \caption{The TA $\mathcal{O}$ of the reduction from language inclusion to weak opacity}
  \label{fig:language-inclusion}
\end{figure}
\end{proof}

\section{Opacity Problems for Subclasses of Timed Automata}\label{section:opacity:TA:res}

In this section, we consider the decidability status and complexities of the three opacity problems presented in \cref{section:opacity:TA} for several subclasses of TAs.
We first discuss the exact complexity of the $\exists$-opacity problem for TAs with one clock (\cref{section:opacity:exists}).
We then study full and weak opacity for the following subclasses:
	TAs with a single action (\cref{section:opacity:one-action}),
	TAs with a single clock (\cref{section:opacity:one-clock}),
	TAs under discrete time (\cref{section:opacity:discrete-time}),
	and
	observable ERAs (\cref{section:opacity:era}).
We prove each result either for weak opacity or for full opacity; thanks to \cref{th:weak-full-eq}, all such results immediately apply to both full and weak opacity.

We summarize these upcoming results in \cref{table-summary}.

\begin{table}[tb]
	\centering
	\caption{Summary of the results in \cref{section:opacity:TA:res} (``\cellDecidable{}'' denotes decidability, ``\cellUndecidable{}'' denotes undecidability)}
	\label{table-summary}
	\setlength{\tabcolsep}{2pt} %
	\begin{tabular}{| l | c | c | c |}
		\hline
		\rowHeader{} Subclass & $\exists$-opacity & weak opacity & full opacity \\
		\hline
		\cellHeader{}$|\ActionSet| = 1$ & \cellcolor{greenColorBlind!75} &
		\multicolumn{2}{c |}{\cellUndecidable{}\cref{th:undecidability:opacity:one-action}}\\
		\cline{1-1}\cline{3-4}
		\cellHeader{}$|\ClockSet| = 1$ without $\silentaction$-transitions & \cellcolor{greenColorBlind!75} & \multicolumn{2}{c |}{\cellDecidable{}\cref{theorem:one-clock-noepsilon} (non-primitive recursive-c)} \\
		\cline{1-1}\cline{3-4}
		\cellHeader{}$|\ClockSet| = 1$ & \cellDecidable{}\cref{theorem:decidability-exists-opacity} & \multicolumn{2}{c |}{\cellUndecidable{}\cref{th:undecidability:full-opacity:one-clock-epsilon}}\\
		\cline{1-1}\cline{3-4}
		\cellHeader{}$|\ClockSet| = 2$ &
		(\PSPACE{}-c)\cellcolor{greenColorBlind!75}
		& \multicolumn{2}{c |}{\cellUndecidable{}\cref{th:undecidability:full-opacity:two-clocks}} \\
		\cline{1-1}\cline{3-4}
		\cellHeader{}$\Time = \setN$ & \cellcolor{greenColorBlind!75}  & \multicolumn{2}{c |}{\cellDecidable{}\cref{theorem:discrete-time} (\EXPSPACE{}-c)}  \\
		\cline{1-1}\cline{3-4}
		\cellHeader{}oERAs & \cellcolor{greenColorBlind!75} & \multicolumn{2}{c |}{\cellDecidable{} \cref{th:decidability:oera} (\PSPACE{}-c) } \\
		\hline
	\end{tabular}

\end{table}

\subsection{\texorpdfstring{$\exists$}{∃}-opacity Problem for TAs with a single clock} \label{section:opacity:exists}

We showed in \cref{sec:reducRexist} that the $\exists$-opacity problem for TAs
was inter-reducible with the reachability problem and thus \PSPACE{}-complete.
It is interesting to note that the reduction from the $\exists$-opacity problem to reachability doubles the number of clocks, as it relies on the product of two TAs, while the opposite reduction conserves the number of clocks.
This observation is important as the reachability problem
is \NLOGSPACE{}-complete~\cite{LMS04} for TAs with one clock; the \PSPACE{}-hardness of reachability applies only from 2 clocks onward~\cite{FJ15}.
Thus, for TAs with one clock, \cref{theorem:decidability-exists-opacity} provides a 
\PSPACE{} algorithm for opacity, but only an \NLOGSPACE{} complexity lower bound.
We show now that $\exists$-opacity remains \PSPACE{}-complete in this setting.

\begin{thm}
The $\exists$-opacity problems for TAs with one clock is \PSPACE{}-complete.
\end{thm}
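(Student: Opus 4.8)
The upper bound is immediate: one-clock TAs form a subclass of general TAs, and \cref{theorem:decidability-exists-opacity} already places the $\exists$-opacity problem for arbitrary TAs in \PSPACE{} (it reduces to reachability, which is in \PSPACE{}~\cite{AD94}). The whole difficulty is therefore the matching \PSPACE{} lower bound, which cannot simply be inherited from the reduction of \cref{sec:reducRexist}: that reduction preserves the number of clocks, and reachability is only \NLOGSPACE{}-complete with a single clock. My plan is to give a fresh reduction, this time \emph{from} reachability in two-clock TAs, which is \PSPACE{}-hard by~\cite{FJ15}, \emph{to} $\exists$-opacity of one-clock TAs.

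The key idea is that $\exists$-opacity of a one-clock TA amounts to non-emptiness of $\PrivateTr{\TA}\cap\PublicTr{\TA}$, i.e.\ to a single timed word being read simultaneously by two one-clock automata (the private side and the public side), and that such a common word can synchronise two independent one-clock trackers into the two clocks of a two-clock machine. Concretely, given a two-clock TA $\mathcal{M}$ with clocks $x_1,x_2$, location set $Q$, initial location $q_0$ and target $q_f$, I would build two one-clock automata $A_1$ and $A_2$, each with location set $Q$ and one transition per edge of $\mathcal{M}$, labelled by a distinct action $a_e$. Automaton $A_i$ keeps, on each edge, only the conjuncts of the guard and invariant that constrain $x_i$ (the non-diagonal constraints of the paper's model split conjunctively by clock) and resets its single clock exactly when the corresponding edge of $\mathcal{M}$ resets $x_i$. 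I then assemble the one-clock TA~\TA{} from a fresh initial location \locinit{}, a fresh private location \locpriv{}, and the two copies $A_1,A_2$ (all sharing the single clock), with $\silentaction$-transitions taken in $0$ time routing into $A_1$ through \locpriv{} and into $A_2$ directly, so that every $A_1$-run is private and every $A_2$-run is public. This yields $\PrivateTr{\TA}=\Trace{A_1}$ and $\PublicTr{\TA}=\Trace{A_2}$, exactly in the spirit of the constructions of \cref{section:inter-reduc}.

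Correctness rests on a single observation. Because every action $a_e$ identifies its edge $e$ uniquely, any timed word $(a_{e_0},t_0)(a_{e_1},t_1)\cdots$ read by both $A_1$ and $A_2$ forces both to follow the same edge sequence through the same locations, and this sequence is a path of $\mathcal{M}$ reaching $q_f$. Moreover, since both single clocks are driven by the same global timeline and are reset precisely at the instants where $\mathcal{M}$ resets $x_1$, respectively $x_2$, the value of $A_i$'s clock at each timestamp $t_j$ equals the value of $x_i$ in $\mathcal{M}$ at that instant; hence $A_1$ checks exactly the $x_1$-constraints and $A_2$ exactly the $x_2$-constraints of each guard and invariant. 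Consequently the word lies in $\Trace{A_1}\cap\Trace{A_2}$ if and only if it encodes an accepting timed run of $\mathcal{M}$, so \TA{} is $\exists$-opaque if and only if $q_f$ is reachable in $\mathcal{M}$. Combined with the upper bound, this proves \PSPACE{}-completeness.

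The step I expect to be the main obstacle is precisely this synchronisation argument: one must verify carefully that a common timed word cannot let $A_1$ and $A_2$ drift apart---neither in the chosen edge sequence (guaranteed by the distinct per-edge actions) nor in the simulated clock values (guaranteed by the shared global time and the matched resets)---and that the $0$-time $\silentaction$-transitions through \locpriv{} neither appear in the produced timed word nor let a spurious delay creep in before $A_1$ or $A_2$ is entered. The remaining bookkeeping (splitting non-diagonal guards and invariants by clock, checking that $\PrivateTr{\TA}$ and $\PublicTr{\TA}$ coincide with $\Trace{A_1}$ and $\Trace{A_2}$, and that the construction is polynomial in the size of $\mathcal{M}$) is routine.
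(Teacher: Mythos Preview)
Your proposal is correct and follows essentially the same approach as the paper: the paper also reduces from two-clock reachability~\cite{FJ15}, relabels every edge with a distinct action, splits the guards clockwise into two one-clock copies, and uses the private/public distinction to separate them so that $\exists$-opacity coincides with the existence of a common word, hence an accepting run of the original two-clock machine. The only cosmetic differences are that the paper uses a fresh visible letter~$\sharp$ (rather than $\silentaction$-transitions) to enter the two copies and marks the final locations of the first copy as private (rather than inserting a dedicated $\locpriv$ beforehand).
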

\begin{proof}
The inter-reduction of \cref{theorem:decidability-exists-opacity}
provides a \PSPACE{} algorithm for TAs in general, so we focus on the hardness here.

\newcommand{\TAx}{\ensuremath{\TA_{\clockx}}}
\newcommand{\TAy}{\ensuremath{\TA_{\clocky}}}
\newcommand{\TAz}{\ensuremath{\TA_{\clockz}}}
\newcommand{\TAxy}{\ensuremath{\TA_{\clockx,\clocky}}}

We reduce the reachability problem for TAs with two clocks, which is \PSPACE{}-hard~\cite{FJ15}.
Let $\TAxy = (\ActionSet, \LocSet, \locinit, \emptyset, \FinalSet, \{ \clockx, \clocky \}, I, \EdgeSet) $ be a TA with clocks $\clockx$ and~$\clocky$.
First, we relabel every transition (including silent transitions) of~$\TAxy$ with a new alphabet $\ActionSet' = \{a_i \mid 1 \leq i \leq |\EdgeSet| \}$ such that each letter of $\ActionSet'$ labels exactly one transition of~$\TAxy$.
We denote the obtained automaton by~$\TAxy'$.

Given a guard $g$, we define $g_{\clockx}$ and $g_{\clocky}$ as respectively the constraints in $g$ over $\clockx$ and~$\clocky$.
Hence, $g = g_{\clockx} \wedge g_{\clocky}$.
For $\clockz \in \{\clockx,\clocky \}$, we then define the TA
$\TAz = (\ActionSet, \LocSet, \locinit, \emptyset, \FinalSet, \{\clockz \}, I_{\clockz}, E_{\clockz})$ with
$E_{\clockz} = \{(\loc, a, g_{\clockz}, R \cap \{\clockz\}, \loc') \mid (\loc, a, g, R, \loc') \in \EdgeSet \}$
and $I_{\clockz}$ is similarly obtained by only keeping the $\clockz$ part of the invariant.

We have that a word accepted by $\TAxy'$ is also accepted by $\TAx'$ and by~$\TAy'$, as each of those TAs have less constraints. Moreover, if a word is accepted by $\TAx'$ and by~$\TAy'$, as the corresponding run is entirely characterized by its trace (since each transition has its own label) and satisfied the constraints on both clocks, then it is accepted by~$\TAxy'$.

Now, we build the TA~$\TB$ (see example in \cref{fig:exists-one-clock}) over the single clock~$\clockx$ as the classical union construction of $\TAx'$ and~$\TAy'$, and set as private locations the final locations of~$\TAx'$
(Recall that, by definition of union, the final locations of~$\TB$ are the final locations of~$\TAx'$ and~$\TAy'$.)
More precisely,
we add a new initial location $\locinit'$ from which one can reach the initial locations of $\TAx'$ and~$\TAy'$ via a transition labelled by a new letter~$\sharp$ and with the guard ``$\clockx=0$''.
Moreover, we relabel every occurrence of~$\clocky$ in the copy of $\TAy'$ into~$\clockx$ so that $\TB$ remains a one-clock~TA.

As the runs of $\TAx'$ (resp.~$\TAy'$) provide the private (resp.\ public) runs of~$\TB$,
$\TB$ is $\exists$-opaque if and only if there is a pair of runs of same trace accepted by
$\TAx'$ and $\TAy'$, thus a word accepted by $\TAxy'$ or equivalently a reachable final location in~$\TAxy$.
Moreover, $\TB$ is polynomial in the size of~$\TAxy$.
Therefore the $\exists$-opacity problem in one-clock automata is \PSPACE{}-hard. \qedhere

\begin{figure}[tb]
	\centering
	\begin{subfigure}[b]{0.38\textwidth}
   \centering
\begin{tikzpicture}[pta, node distance=2cm, thin]
	
	\node[location, initial] (l0) at (-5, 0) {$\loci{0}$};

	\location[name=l1,at={(-2.5, 0)}]{$\styleclock{\clock} \leq 2$}{$\loci{1}$};

	\node[location, final] (lf) at (0, 0) {$\locfinal$};

	\path
	(l0) edge node[above]{$\styleclock{\clock} \leq 1$} node[align=center,below]{$\styleact{a}$\\$\styleclock{y} \assign 0$} (l1)
	(l1) edge[loop above] node[align=center]{$\styleclock{\clock}>0$\\$\styleact{\silentaction}$\\$\styleclock{\clock} \assign 0$}(l1)
	(l1) edge node[above, align=center]{$\styleclock{y}=2$\\$\land \styleclock{\clock} > 1$} node[align=center, below]{$\styleact{a}$}(lf)
	;

\end{tikzpicture}
  \caption{A TA $\TA_{x,y}$}
	\end{subfigure}
	\hfill
	\begin{subfigure}[b]{0.55\textwidth} %
   \centering
	\begin{tikzpicture}[pta, node distance=2cm, thin]

	\node[location, initial] (l0') at (-5, 0) {$\loci{0}'$};
	\node[location] (l0x) at (-4, 1) {$\loci{0}^x$};
	\node[location] (l0y) at (-4, -1) {$\loci{0}^y$};

	\location[name=l1x,at={(-1.5, 1)}]{$\styleclock{\clock} \leq 2$}{$\loci{1}^x$};

	\node[location] (l1y) at (-1.5, -1) {$\loci{1}^y$};
	\node[location, final, private] (lfx) at (1, 1) {$\locfinal^x$};
	\node[location, final] (lfy) at (1, -1) {$\locfinal^y$};

	\node[rectangle, minimum width=6cm, minimum height=3.25cm, align=center, draw, dashed] (rectangle) at (-1.45, 1.75) {};
	\node at (2, 1.5) {$\TA_x$};
	\node[rectangle, minimum width=6cm, minimum height=2.25cm, align=center, draw, dashed] (rectangle) at (-1.45,-1) {};
	\node at (2,-1) {$\TA_y$};
	
	\path
	(l0') edge node[align=center]{$\styleclock{\clock} = 0$\\$\styleact{\sharp}$} (l0x)
	(l0') edge node[align=center, below left]{$\styleclock{\clock} = 0$\\$\styleact{\sharp}$} (l0y)
	(l0x) edge node[above]{$\styleclock{\clock} \leq 1$} node[below]{$\styleact{a_1}$} (l1x)
	(l1x) edge[loop above] node[align=center]{$\styleclock{\clock}>0$\\$\styleact{a_2}$\\$\styleclock{\clock} \assign 0$}(l1x)
	(l1x) edge node[above]{$\styleclock{\clock} > 1$} node[below]{$\styleact{a_3}$}(lfx)
	(l0y) edge node[align=center,below]{$\styleact{a_1}$\\$\styleclock{\clock} \assign 0$} (l1y)
	(l1y) edge[loop above] node[align=center]{$\styleact{a_2}$}(l1y)
	(l1y) edge node[above]{$\styleclock{\clock}=2$} node[below]{$\styleact{a_3}$}(lfy)
	;
	\end{tikzpicture}
  \caption{$\TB$ such that $\TA_{x,y}$ has a reachable final location \\if and only if $\TB$ is $\exists$-opaque}
	\end{subfigure}
	\caption{The TA $\TB$ for the reduction of reachability in two-clock TA to $\exists$-opacity in one-clock TA}
  \label{fig:exists-one-clock}
\end{figure}

\end{proof}

\subsection{Timed Automata with a Single Action}\label{section:opacity:one-action}

Recall that the universality problem consists in deciding whether a TA~$\TA$ accepts the set of all timed words~\cite{AD94}.
In~\cite{OW04}, it is shown that the class of one-action~TAs is one of the simplest cases for which the universality problem is undecidable among~TAs.
Therefore, noting that universality is a special case of language inclusion, relying on \cref{lem:language-inclusion} and \cref{th:weak-full-eq}, one immediately obtains that the full and weak opacity problems for TAs with one action are undecidable.
We refine slightly this result in the next theorem by showing undecidability remains even if the TA has no $\silentaction$-transition.

\begin{thm}\label{th:undecidability:opacity:one-action}
	The full and weak opacity problems for TAs with one action and no $\silentaction$-transition are undecidable.
\end{thm}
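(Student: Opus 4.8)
The plan is to reduce from the universality problem for one-action TAs, which is undecidable~\cite{OW04} and, being classically posed for $\silentaction$-free automata, may be assumed to hold already on $\silentaction$-free one-action TAs. Universality of a one-action TA~$\TB$, \ie{} $\Trace{\TB} = \TimedWords{\set{a}}$, is exactly the inclusion $\Trace{\TA_{\mathrm{univ}}} \subseteq \Trace{\TB}$, where $\TA_{\mathrm{univ}}$ is the trivial one-action $\silentaction$-free TA consisting of a single location that is both initial and final and carries a guard-free self-loop on~$a$; its language is all of $\TimedWords{\set{a}}$. The obstacle is that the reduction from language inclusion to weak opacity of \cref{lem:language-inclusion} wires the copies of the two automata together with $\silentaction$-transitions, so it cannot be reused verbatim; likewise the weak-to-full reduction of \cref{th:weak-full-eq} introduces $\silentaction$-transitions. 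I would therefore rebuild a dedicated $\silentaction$-free gadget and argue about weak and full opacity simultaneously, deliberately bypassing \cref{th:weak-full-eq}.

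Concretely, from $\TA$ and $\TB$ (both one-action, $\silentaction$-free) I would build $\mathcal{O}$ as the disjoint union of a copy of $\TA$ and a copy of $\TB$ (keeping their final locations), together with one new initial location $\locinit$ carrying the invariant $\clock = 0$ for some clock~$\clock$ (all clocks being $0$ in the initial configuration, this forces time not to elapse in~$\locinit$). From $\locinit$ I add two edges labelled by the single action~$a$, one to the initial location of the $\TA$-copy and one to the initial location of the $\TB$-copy, and I declare the initial location of the $\TA$-copy to be the unique private location. No $\silentaction$-transition is used and the only action is~$a$. Since the two copies are disjoint and reachable only through~$\locinit$, every run entering the $\TA$-copy visits the private location while every run entering the $\TB$-copy avoids it; as the initial $a$-edge fires at time~$0$, this yields $\PrivateTr{\mathcal{O}} = (a,0)\cdot\Trace{\TA}$ and $\PublicTr{\mathcal{O}} = (a,0)\cdot\Trace{\TB}$.

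Finally, left-concatenation by the fixed symbol $(a,0)$ is injective, so $(a,0)\cdot L_1 \subseteq (a,0)\cdot L_2 \iff L_1 \subseteq L_2$ and $(a,0)\cdot L_1 = (a,0)\cdot L_2 \iff L_1 = L_2$; hence $\mathcal{O}$ is weakly opaque iff $\Trace{\TA}\subseteq\Trace{\TB}$ and fully opaque iff $\Trace{\TA}=\Trace{\TB}$. Taking $\TA = \TA_{\mathrm{univ}}$ makes $\Trace{\TA}$ the maximal language $\TimedWords{\set{a}}$, so both $\Trace{\TA}\subseteq\Trace{\TB}$ and $\Trace{\TA}=\Trace{\TB}$ collapse to ``$\TB$ is universal''. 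Thus $\mathcal{O}$ is one-action and $\silentaction$-free, and \emph{both} its weak and its full opacity are equivalent to universality of~$\TB$, so undecidability of the latter gives undecidability of both problems. The main difficulty is precisely engineering an $\silentaction$-free wiring whose private and public traces stay in exact correspondence with $\Trace{\TA}$ and $\Trace{\TB}$; the shared $(a,0)$ prefix together with prefix cancellation resolves it, while the choice $\TA=\TA_{\mathrm{univ}}$ lets a single construction cover the weak and full cases without reintroducing silent transitions.
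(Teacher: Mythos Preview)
Your proof is correct and takes a genuinely different route from the paper.

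The paper places the automaton~$\TA$ being tested for universality on the \emph{private} side and the universal language on the \emph{public} side; this gives full opacity directly (since $\PrivateTr{\TB}\subseteq\PublicTr{\TB}$ always holds and equality means universality), but for weak opacity the paper falls back on \cref{th:weak-full-eq}. You instead swap the roles: the universal automaton $\TA_{\mathrm{univ}}$ sits on the \emph{private} side and the tested automaton~$\TB$ on the \emph{public} side. Because the private language is then maximal, both the inclusion $\PrivateTr{\mathcal{O}}\subseteq\PublicTr{\mathcal{O}}$ and the equality $\PrivateTr{\mathcal{O}}=\PublicTr{\mathcal{O}}$ collapse to ``$\TB$ is universal'', so a single $\silentaction$-free construction settles weak and full opacity simultaneously. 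This is a cleaner argument: the paper's appeal to \cref{th:weak-full-eq} is slightly awkward here since both reductions in that theorem insert $\silentaction$-transitions, so strictly speaking an extra patching step (replacing those urgent $\silentaction$'s by urgent $a$'s, as you effectively do) would be needed to stay inside the $\silentaction$-free class. Your role-swap sidesteps this entirely. The paper's construction, on the other hand, uses a two-letter prefix and an extra suffix letter to route through a dedicated private location and merge into a common final location; your single $(a,0)$ prefix with the $\TA$-copy's initial location declared private is more economical and works for the same reason.
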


\begin{proof}
We first prove the undecidability of the full opacity problem.
Let $\TA$ be a TA with a single action `$a$' and no $\silentaction$-transition.
We want to build a new TA~$\TB$ with the same restrictions such that if we can answer the full opacity problem of this TA, then we can decide the universality problem for~$\TA$.
We define $\TB$ as follows (see illustration in \cref{figure:one-action:full-opacity}): 
we add an initial location exited by two
transitions labelled by `$a$' that must be taken urgently (\ie{} no time may elapse before taking them). The first transition leads to a private location which leads (again via another urgent transition labelled by `$a$') to the initial location of the TA~$\TA$ and the other leads in two urgent steps (aka, by reading a second `$a$') to a location $\loc_{\star}$ where every finite timed word on the alphabet $\{a\}$ can be read before reaching a final location.
From both the final locations of $\TA$ and $\loc_{\star}$, a transition labelled by `$a$' can be taken to reach the final location of $\TB$.

Ignoring the first two as well as the last `$a$' that must be read by every run of $\TB$,
the language recognized by $\TA$ corresponds exactly to the traces of private runs of $\TB$, and the traces of public runs of $\TB$ are all the finite timed words using `$a$'.
Therefore, $\TB$ is fully opaque if and only if $\PrivateTr{\TB} = \PublicTr{\TB}$ if and only if
$\Trace{\TA} = \TimedWords{\ActionSet}$ if and only if $\TA$ is universal.
Since universality for TAs with one action is undecidable~\cite{OW04}, we conclude that full opacity for one-action~TAs with no $\silentaction$-transition is undecidable.

Finally, as in the rest of this section, we deduce using \cref{th:weak-full-eq} that weak opacity for TAs with one action and no $\silentaction$-transition is undecidable too.
\end{proof}
\begin{figure}[tb]
   \centering
	\begin{tikzpicture}[pta, node distance=2cm, thin]

	\node[location, initial] (l0) at (-4, 0) {$\locinit$};

	\node[location, private] (lpriv) at (-2.5, 1) {$\locpriv$};
	\node[location, final] (lf) at (+3.5, 0.5) {$\locfinal$};
	\node[rectangle, minimum width=3cm, minimum height=1.1cm, align=center, draw] (rectangle) at (1cm, 1cm) {$\TA$};
	\node (exitA) at (2.35, 1) {};

	\node[location] (l1) at (-1.5,0) {$\loci{1}$};
	\node[location] (univ) at (1, 0) {$\loc_{\star}$};

	\path
	(l0) edge node[align=center]{$\styleclock{\clock} = 0$\\$\styleact{a}$} (lpriv)
	(lpriv) edge node[below]{$\styleact{a}$} node[above]{$\styleclock{\clock} = 0$} (rectangle)
	(l0) edge node[below]{$\styleact{a}$} node[above]{$\styleclock{\clock} = 0$} (l1)
	(univ) edge[loop below] node[align=center, right=-6pt]{$\styleact{a}$\\$\styleclock{\clock} \assign 0$} ()
	(univ) edge[bend right] node[align=center, below right=-4pt]{$\styleact{a}$} (lf)
	(exitA) edge[bend left] node[align=center, above right=-4pt]{$\styleact{a}$} (lf)
	(l1) edge node[below]{$\styleact{a}$} node[above]{$\styleclock{\clock} = 0$} (univ)
	;
	\end{tikzpicture}
  \caption{Automaton~$\TB$: Reduction from universality to full opacity}
  \label{figure:one-action:full-opacity}
\end{figure}

\begin{rem}
The problems of execution-time opacity introduced in~\cite{ALLMS23} are a particular \emph{decidable} subcase of these undecidable opacity problems with one-action~TAs.
Indeed, the execution time is equivalent to a \emph{unique} timestamp associated with the last action of the system, which can be seen as the \emph{only} observable action.
\end{rem}
\subsection{Timed Automata with few Clocks}\label{section:opacity:one-clock}

The decidability of the language inclusion problem (and the universality problem) have been studied for TAs with a low number of clocks in an attempt to regain decidability.
These results can immediately 
be translated to the weak/full opacity problems 
thanks to \cref{th:weak-full-eq} and \cref{lem:language-inclusion}.
More precisely, the universality problem for one-clock TAs has been shown to be undecidable
in~\cite{ADOQW08}, though decidability is regained when assuming the one-clock TAs
does not have any $\silentaction$-transitions~\cite{OW04}. This gives us the following results.

\begin{thm}\label{th:undecidability:full-opacity:one-clock-epsilon}
	The full and weak opacity problems for one-clock TAs are undecidable.
\end{thm}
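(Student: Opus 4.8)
The plan is to reduce directly from the universality problem for one-clock timed automata, which is known to be undecidable~\cite{ADOQW08}. This mirrors the strategy used in \cref{th:undecidability:opacity:one-action}, but now the source of undecidability is the single clock rather than the single action. Given a one-clock TA~$\TA$ (possibly with $\silentaction$-transitions, as permitted in~\cite{ADOQW08}), I would construct a one-clock TA~$\TB$ whose private runs have exactly the language $\Trace{\TA}$ and whose public runs accept all of $\TimedWords{\ActionSet}$, so that full opacity of~$\TB$ holds if and only if $\TA$ is universal.

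Concretely, I would build~$\TB$ by adding a fresh initial location~$\locinit'$ with the invariant $\clock = 0$ (forcing urgency), from which two $\silentaction$-transitions depart: one routing to a private location~$\locpriv$ (also with $\clock = 0$) and thence, via another urgent $\silentaction$-edge, into the initial location of a copy of~$\TA$; the other routing to a ``universal gadget'' location~$\loc_\star$ where, using self-loops labelled by each action of~$\ActionSet$ with resets, every finite timed word over~$\ActionSet$ can be generated before reaching a final location. Since the clock is reset to~$0$ on entry into both branches and no time elapses in the urgent prefix locations, the traces are unaffected by this routing. Thus $\PrivateTr{\TB} = \Trace{\TA}$ and $\PublicTr{\TB} = \TimedWords{\ActionSet}$, whence $\TB$ is fully opaque iff $\Trace{\TA} = \TimedWords{\ActionSet}$, i.e.\ iff $\TA$ is universal. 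The weak-opacity conclusion then follows immediately via the inter-reducibility of \cref{th:weak-full-eq}.

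Alternatively, and perhaps more cleanly, I would invoke the already-established machinery: by \cref{lem:language-inclusion}, weak opacity and timed language inclusion are inter-reducible, and universality is the special case of inclusion where the larger automaton is fixed to accept all timed words. Since universality for one-clock TAs is undecidable~\cite{ADOQW08}, inclusion (and hence weak opacity) is undecidable for one-clock TAs; \cref{th:weak-full-eq} transfers this to full opacity. I would need only check that the reductions of \cref{lem:language-inclusion,th:weak-full-eq} preserve the single-clock restriction---the construction in \cref{fig:language-inclusion} adds only urgent $\silentaction$-edges and reuses one existing clock~$\clock$ for the $\clock = 0$ guards, so the clock count is not increased.

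The main obstacle, and the reason the theorem statement deliberately does \emph{not} forbid $\silentaction$-transitions (unlike \cref{th:undecidability:opacity:one-action}), is that the undecidability of one-clock universality in~\cite{ADOQW08} genuinely requires $\silentaction$-transitions; without them universality for one-clock TAs becomes decidable~\cite{OW04}. I must therefore ensure my construction does not inadvertently eliminate the unobservable transitions carrying the undecidability, and I should flag that the forthcoming \cref{theorem:one-clock-noepsilon} will recover decidability precisely in the $\silentaction$-free one-clock case---so the honest boundary here is the presence of $\silentaction$-transitions, which the gadget above freely uses for its urgent routing edges.
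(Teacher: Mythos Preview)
Your proposal is correct, and your second (``alternative'') approach is exactly what the paper does: it simply invokes the undecidability of universality for one-clock TAs from~\cite{ADOQW08} and transfers it via \cref{lem:language-inclusion} and \cref{th:weak-full-eq}, with no explicit construction given. Your observation that both reductions reuse an existing clock (and hence preserve the one-clock restriction) is precisely the check needed to make the transfer valid, and your remark about the $\silentaction$-boundary with \cref{theorem:one-clock-noepsilon} matches the paper's framing.
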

\begin{thm}\label{theorem:one-clock-noepsilon}
The full and weak opacity problems are decidable (with non-primitive recursive complexity) for one-clock~TAs without $\silentaction$-transitions.
\end{thm}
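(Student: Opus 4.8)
The plan is to decide both problems by reducing them to timed language inclusion for one-clock TAs without $\silentaction$-transitions, for which decidability (in non-primitive recursive time) is available from Ouaknine and Worrell~\cite{OW04}. The key structural observation is that, starting from a one-clock TA~$\TA$ without $\silentaction$-transitions, the two derived automata $\APub$ and $\APriv$ of \cref{sec:reducWFopac} remain one-clock TAs without $\silentaction$-transitions: $\APub$ is obtained merely by deleting locations and edges, while $\APriv$ duplicates locations and edges and redirects some of them, introducing neither a new clock nor a silent action. By the easy direction of \cref{lem:language-inclusion}, $\TA$ is weakly opaque iff $\Trace{\APriv} \subseteq \Trace{\APub}$, and $\TA$ is fully opaque iff $\Trace{\APriv} = \Trace{\APub}$, i.e.\ iff both $\Trace{\APriv} \subseteq \Trace{\APub}$ and $\Trace{\APub} \subseteq \Trace{\APriv}$ hold. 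Each of these is an instance of one-clock $\silentaction$-free language inclusion, so invoking~\cite{OW04} (twice, for full opacity) yields decidability of both problems together with the non-primitive recursive upper bound. I would deliberately handle full opacity as a conjunction of two inclusions rather than through \cref{th:weak-full-eq}, since the construction behind that theorem introduces $\silentaction$-transitions and would leave the decidable subclass.

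For the matching lower bound I would reduce the universality problem for one-clock $\silentaction$-free TAs---known to be non-primitive recursive hard in the literature building on~\cite{OW04}---to both opacity problems at once. Given such a TA~$\TA$ over~$\ActionSet$, I build $\TB$ over $\ActionSet \cup \{\sharp\}$ (with $\sharp$ fresh) with a new initial location $\locinit$ from which, reading $\sharp$ under the guard $\clock = 0$, one either enters a private location $\locpriv$ or enters a fresh copy of~$\TA$. The location $\locpriv$ is final and carries a self-loop labelled by each $a \in \ActionSet$ with guard~$\BTrue$ and no reset, so that it accepts every finite timed word over~$\ActionSet$; the copy of~$\TA$ keeps its own (non-private) final locations. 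Since the branching $\sharp$ is emitted at time~$0$ on both sides, we obtain $\PrivateTr{\TB} = \{(\sharp,0)\} \cdot \TimedWords{\ActionSet}$ and $\PublicTr{\TB} = \{(\sharp,0)\} \cdot \Trace{\TA}$. As $\Trace{\TA} \subseteq \TimedWords{\ActionSet}$ always holds, both $\PrivateTr{\TB} \subseteq \PublicTr{\TB}$ and $\PrivateTr{\TB} = \PublicTr{\TB}$ are equivalent to $\TimedWords{\ActionSet} \subseteq \Trace{\TA}$, i.e.\ to universality of~$\TA$; hence the same polynomial, one-clock, $\silentaction$-free construction witnesses hardness of both weak and full opacity.

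The main obstacle is entirely in the lower bound: the generic reductions of \cref{lem:language-inclusion} and \cref{th:weak-full-eq} route runs into the private and public components through $\silentaction$-transitions taken in zero time, which is exactly what the subclass forbids. The trick is to replace these silent moves by a single visible action $\sharp$ fired at $\clock = 0$ on both branches, so that it contributes the identical prefix $(\sharp,0)$ to every private and every public trace and therefore cannot help the attacker distinguish them; I would have to check carefully that no other observable asymmetry is introduced and that the universal gadget recognizing $\TimedWords{\ActionSet}$ is realizable with a single clock and no silent transition (it is, via self-loops with trivial guards and no reset, exploiting the non-decreasing timestamps). The complexity claim then rests on the non-primitive recursive completeness of one-clock $\silentaction$-free inclusion and universality, which supplies simultaneously the upper bound for the first step and the hardness source for the second.
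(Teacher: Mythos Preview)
Your upper-bound argument is correct and coincides with the paper's (implicit) one: the easy direction of \cref{lem:language-inclusion} sends weak opacity to the inclusion $\Trace{\APriv}\subseteq\Trace{\APub}$, and both $\APriv$ and $\APub$ stay one-clock and $\silentaction$-free; handling full opacity as two inclusions rather than via \cref{th:weak-full-eq} is the right call, for exactly the reason you give.

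For the lower bound your idea---replacing the $\silentaction$-branching of \cref{lem:language-inclusion} by a fresh visible $\sharp$ fired at $\clock=0$ on both sides---is the correct mechanism, and in fact makes explicit what the paper only gestures at. There is, however, a slip against the paper's run semantics: runs are required to have \emph{no} final location before the last configuration, so declaring $\locpriv$ final while also equipping it with $\ActionSet$-self-loops does not yield $\PrivateTr{\TB}=\{(\sharp,0)\}\cdot\TimedWords{\ActionSet}$; the run terminates the moment $\locpriv$ is entered, giving only $\{(\sharp,0)\}$. The standard repair (and the one the paper uses in its one-action reduction, \cref{figure:one-action:full-opacity}) is to keep $\locpriv$ private but \emph{not} final, retain the $\ActionSet$-self-loops, and add a fresh terminal action from $\locpriv$ and from each (now demoted) accepting location of~$\TA$ into a single new final location of~$\TB$. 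With that adjustment your equivalence ``$\TB$ weakly/fully opaque iff $\TA$ universal'' goes through, and the construction remains one-clock and $\silentaction$-free.
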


The non-primitive recursive complexity established in~\cite{OW04} was proven to be tight  in~\cite{ADOQW08}.
Hence, while decidable, this problem cannot be effectively solved.

\begin{rem}\label{remark:noepsilon:nocontradiction}
	This result might seem to contradict the result of~\cite{AGWZH24} that proves undecidability of (language-based) opacity for one-clock TAs without $\silentaction$-transitions---but it does not.
	The discrepancy comes from the fact that, in the absence of $\silentaction$-transition, our attacker observes all actions, while their setting allows for unobservable actions---which can act as $\silentaction$-transitions even in the absence of syntactic $\silentaction$-transitions.
\end{rem}

This distinction regarding the presence of $\silentaction$-transitions disappears from two clocks onward as the language universality problem for TAs with at least two clocks is undecidable 
with and without $\silentaction$-transitions~\cite[Theorem~21]{OW04}. We thus have:

\begin{thm}\label{th:undecidability:full-opacity:two-clocks}
The full and weak opacity problems are undecidable for TAs with $\geq 2$ clocks.
\end{thm}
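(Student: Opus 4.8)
The plan is to reuse the reduction pipeline already exploited for the one-clock case in \cref{th:undecidability:full-opacity:one-clock-epsilon}, changing only the undecidability source. Concretely, I would start from the fact that timed language universality is undecidable for TAs with at least two clocks, \emph{both} with and without $\silentaction$-transitions~\cite[Theorem~21]{OW04}, and then propagate this undecidability to weak and full opacity through the inter-reducibility results of \cref{lem:language-inclusion,th:weak-full-eq}.

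First I would observe that universality is an instance of language inclusion: a TA $\TB$ accepts every finite timed word, \ie{} $\Trace{\TB} = \TimedWords{\ActionSet}$, if and only if $\TimedWords{\ActionSet} \subseteq \Trace{\TB}$, the reverse inclusion being trivial. The left-hand language $\TimedWords{\ActionSet}$ is recognized by a \emph{clock-free} TA $\TA_{\mathrm{univ}}$ consisting of a single accepting location carrying a self-loop labelled by each action of $\ActionSet$. Hence deciding universality of a two-clock TA $\TB$ amounts to deciding the inclusion instance $\Trace{\TA_{\mathrm{univ}}} \subseteq \Trace{\TB}$, and by \cref{lem:language-inclusion} this reduces to testing weak opacity of the automaton $\mathcal{O}$ built from $\TA_{\mathrm{univ}}$ and $\TB$.

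The key point to verify is that this reduction keeps the clock count at two. In the construction of \cref{lem:language-inclusion}, $\mathcal{O}$ is the disjoint union of (copies of) $\TA_{\mathrm{univ}}$ and $\TB$ together with two fresh locations $\locinit$ and $\locpriv$, linked by $\silentaction$-transitions whose only guards are of the form $\clock = 0$. Since $\TA_{\mathrm{univ}}$ uses no clock and the two copies never run concurrently, we may reuse one of the two clocks of $\TB$ for these zero-delay guards; thus $\mathcal{O}$ is a two-clock TA. By construction $\PrivateTr{\mathcal{O}} = \Trace{\TA_{\mathrm{univ}}}$ and $\PublicTr{\mathcal{O}} = \Trace{\TB}$, so $\mathcal{O}$ is weakly opaque if and only if $\TB$ is universal. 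Consequently weak opacity for two-clock TAs would decide universality and is therefore undecidable; by \cref{th:weak-full-eq} the same holds for full opacity, and undecidability over the larger class of TAs with $\geq 2$ clocks follows a fortiori. The only real obstacle is this clock-bookkeeping step: one must ensure the connector gadget does not silently introduce a third clock, which is precisely why I insist on the clock-free universal automaton and on recycling an existing clock of $\TB$ for the guards $\clock = 0$.
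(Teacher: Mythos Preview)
Your proposal is correct and follows exactly the same route as the paper, which simply invokes the undecidability of universality for TAs with $\geq 2$ clocks from~\cite[Theorem~21]{OW04} together with \cref{lem:language-inclusion} and \cref{th:weak-full-eq}. You are just more explicit than the paper about the clock-bookkeeping step (reusing a clock of~$\TB$ for the $\clock=0$ guards in the connector gadget and taking a clock-free universal automaton), which the paper leaves implicit.
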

\subsection{Timed Automata over Discrete Time}\label{section:opacity:discrete-time}

In the general case, clocks are real-valued variables, with valuations thus ranging over $\Time = \setRgeqzero$.
TAs over discrete time restrict the clock's behaviour to valuations over $\Time = \setN$.
This restriction weakens significantly the expressiveness of TAs.
In particular, the proof of undecidability of the universality problem for TAs established in~\cite{AD94} strongly relies on continuous time, and thus cannot be translated to the 
discrete time framework. 
As a consequence of this loss of expressivity, many problems become decidable.

It is the case for opacity problems, which have been shown to be decidable for TAs over discrete time within multiple works~\cite{CG00,KKG24,AGWZH24}.
To our knowledge, none of the researches done on this problem provided a complete picture
of the complexity of the problem. Hence, 
relying on the region automaton (defined in \cref{section:tools:RA}) over discrete time and classical results on finite regular automata, we will first show that language inclusion for discrete-time TAs is decidable and \EXPSPACE{}-complete. The \EXPSPACE{}-completeness is immediately transfered to opacity by \cref{lem:language-inclusion}.

If $\clockval$, $\clockval'$ are two discrete clock valuations (\ie{} with values in~$\setN$), the definition of $\simeq$ from \cref{section:tools:RA} can be simplified into: $\clockval \simeq \clockval'$ holds if and only if, for each $\clock \in \ClockSet$, either $\clockval(\clock) = \clockval'(\clock)$ or $\clockval (\clock) > M(\clock)$ and $\clockval' (\clock) > M(\clock)$.

In continuous time, for each run of the~TA, there is a unique corresponding run of the region automaton.
In discrete time, thanks to the simplified form of the definition of~$\simeq$, the converse statement that a run of the region automaton corresponds to a unique run of the TA nearly holds.
Loss of information however remains when every clock goes beyond their maximum constant, as time elapsing is not measured beyond this point.
In order to avoid this loss of information, we add a letter~$t$ (for ticks) which occurs each time that an (integral) time unit passes in the region automaton.
This change can be operated directly on the TA~$\TA$ so that the correspondence between paths of~$\TA$ and $\RegionAutomaton{\TA}$ becomes immediate.

More precisely, we add a clock~$z$ and self-loop transitions $e_t = \big(\loc, (z=1), t, \{ z \}, \loc\big)$ on each location $\loc \in \LocSet$ of~$\TA$.
We also add the guard~``$z=0$'' to every other transition of~$\TA$.
Finally, we add an invariant ``$z \leq 1$'' to all locations of~$\TA$.
This transformation is linear in the size of~$\TA$.
We illustrate the resulting TA on a simple example in \cref{figure:example-RA}.
We depict a discrete-time TA~$\TA$, its transformation by the procedure we just described and finally its region automaton $\RegionAutomaton{\TA}$ (over discrete time).

\begin{figure}[tb]

	\centering

	\begin{subfigure}[b]{0.45\textwidth}
		\begin{tikzpicture}[pta, scale=2, xscale=.6, yscale=.5]

			\node[location, initial] at (0, 0) (s0) {$\loci{0}$};
			\node[location, private, final] at (2, 0) (s1) {$\locfinal$};

			\path (s0) edge[] node[above]{$\styleclock{\clock} > 2$} node[below]{$\styleact{a}$} (s1);
		\end{tikzpicture}
		\caption{$\TA$}
		\label{fig:discrete-time-example-TA}
	\end{subfigure}
	\hfill
	\begin{subfigure}[b]{0.45\textwidth}
	\begin{tikzpicture}[pta, scale=2, xscale=.8, yscale=.5]

		\location[initial,name=s0,at={(0, 0)}]{$\styleclock{z} \leq 1$}{$\loci{0}$};

		\location[private, final, name=s1,at={(2, 0)}]{$\styleclock{z} \leq 1$}{$\locfinal$};

		\path (s0) edge[] node[align=center]{$\styleclock{\clock} > 2$\\$\land \styleclock{z} = 0$\\$\styleact{a}$} (s1);
		\path (s0) edge[loop above] node[above, align=center]{$\styleclock{z} = 1$\\$\styleact{t}$\\$\styleclock{z} \assign 0$} (s0);
		\path (s1) edge[loop above] node[above, align=center]{$\styleclock{z} = 1$\\$\styleact{t}$\\$\styleclock{z} \assign 0$} (s1);
	\end{tikzpicture}
	\caption{$\TA$ augmented with ticks}
		\label{fig:discrete-time-example-TA:augmented}
	\end{subfigure}

	\medskip

	\begin{subfigure}[b]{\textwidth}
	\scalebox{.85}{
	\begin{tikzpicture}[pta, scale=2.7, xscale=.45, yscale=.5]

		\node[RA-location, initial] at (0, 0) (s0) {$\begin{array}{c}\loci{0} \\ \styleclock{\clock} = 0 \\ \styleclock{z} = 0 \end{array}$};

		\node[RA-location] at (1.5, 1) (s1z) {$\begin{array}{c}\loci{0} \\ \styleclock{\clock} = 1 \\ \styleclock{z} = 1 \end{array}$};

		\node[RA-location] at (3, 0) (s1) {$\begin{array}{c}\loci{0} \\ \styleclock{\clock} = 1 \\ \styleclock{z} = 0 \end{array}$};

		\node[RA-location] at (4.5, 1) (s2z) {$\begin{array}{c}\loci{0} \\ \styleclock{\clock} = 2 \\ \styleclock{z} = 1 \end{array}$};

		\node[RA-location] at (6, 0) (s2) {$\begin{array}{c}\loci{0} \\ \styleclock{\clock} = 2 \\ \styleclock{z} = 0 \end{array}$};

		\node[RA-location] at (7.5, 1) (s3z) {$\begin{array}{c}\loci{0} \\ \styleclock{\clock} > 2 \\ \styleclock{z} = 1 \end{array}$};

		\node[RA-location] at (9, 0) (s3) {$\begin{array}{c}\loci{0} \\ \styleclock{\clock} > 2 \\ \styleclock{z} = 0 \end{array}$};

		\node[RA-location, private, final] at (11, 0) (s4) {$\begin{array}{c}\locfinal \\ \styleclock{\clock} > 2 \\ \styleclock{z} = 0 \end{array}$};

		\node[RA-location, private, final] at (12.5, 1) (s4z) {$\begin{array}{c}\locfinal \\ \styleclock{\clock} > 2 \\ \styleclock{z} = 1 \end{array}$};

		\path (s0) edge[bend left] node[align=center, above]{$\styleact{\silentaction}$} (s1z);
		\path (s1z) edge[bend left] node[align=center]{$\styleact{t}$} (s1);
		\path (s1) edge[bend left] node[align=center, above]{$\styleact{\silentaction}$} (s2z);
		\path (s2z) edge[bend left] node[align=center]{$\styleact{t}$} (s2);
		\path (s2) edge[bend left] node[align=center, above]{$\styleact{\silentaction}$} (s3z);
		\path (s3z) edge[bend left] node[align=center]{$\styleact{t}$} (s3);

		\path (s4z) edge[bend left] node[align=center]{$\styleact{t}$} (s4);
		\path (s4) edge[bend left] node[align=center]{$\styleact{\silentaction}$} (s4z);
		\path (s3) edge[bend left] node[align=center]{$\styleact{\silentaction}$} (s3z);

		\path (s3) edge node[align=center, above]{$\styleact{a}$} (s4);

	\end{tikzpicture}
	}
	\caption{$\RegionAutomaton{\TA}$}
		\label{fig:zzzz}
	\end{subfigure}

	\caption{A discrete-time region automaton example}
	\label{figure:example-RA}

\end{figure}

With this construction, time information becomes superfluous in the TA as it can be deduced from the number of ticks that were produced, which also appears within a path of the region automaton. 
For instance, consider the run of the TA~$\TA$ of \cref{fig:discrete-time-example-TA} that remains 4~time units in~$\locinit$ before going to~$\locfinal$.
The timed word $(a, 4)$ on the original TA~$\TA$ becomes $(t,1)(t,2)(t,3)(t,4)(a,4)$ in our transformed~TA in \cref{fig:discrete-time-example-TA:augmented}.
The untimed word obtained in $\RegionAutomaton{\TA}$ is $tttta$, which means that 4~ticks occurred before the action $a$ was produced.
There is thus a direct correspondence between the original timed word $(a,4)$ and the untimed word $tttta$.
In the rest of this subsection, we only consider TAs augmented with ticks.
From the previous discussion, we have:

\begin{lem}\label{lemma:discrete-time:bijection-with-untimed-words}
The language of a discrete-time TA and the language of its region automaton are in bijection.
\end{lem}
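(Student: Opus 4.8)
The plan is to exhibit an explicit bijection realized by the untimed projection $\untimed{\cdot}$, which erases both the timestamps and the silent moves of a timed word (while keeping the tick symbol~$t$, now an ordinary observable action). I would establish the statement in two movements: first that $\untimed{\cdot}$ maps $\Trace{\TA}$ \emph{onto} $\Language(\RegionAutomaton{\TA})$, and second that this map is \emph{injective} on $\Trace{\TA}$, the latter being where the tick construction does the real work.

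For surjectivity I would simply invoke the classical soundness and completeness of the region construction (\cref{definition:region-automaton}, after~\cite{AD94}): for \emph{any} TA one has $\Language(\RegionAutomaton{\TA}) = \untimed{\Trace{\TA}}$, since every run of~$\TA$ projects onto a run of $\RegionAutomaton{\TA}$ carrying the same sequence of non-silent actions, and conversely every run of $\RegionAutomaton{\TA}$ can be instantiated into a run of~$\TA$ with that same action sequence. As the tick symbol~$t$ is treated as a genuine action of the augmented automaton, this equality says exactly that $\untimed{\cdot}$ sends $\Trace{\TA}$ onto $\Language(\RegionAutomaton{\TA})$.

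For injectivity, the crucial observation is that in a TA augmented with ticks the integer timestamp of every symbol is recoverable from the untimed word alone. I would prove this by analysing the forced shape of any run: the invariant ``$z \leq 1$'' forbids letting more than one time unit elapse without resetting~$z$; the only transition resetting~$z$ is a tick, guarded by ``$z = 1$''; and every other transition is guarded by ``$z = 0$''. Consequently no time may elapse except through tick transitions, exactly one~$t$ is produced per elapsed time unit, and all non-tick actions fired between two consecutive ticks carry the same integer timestamp. Hence, in an accepted timed word, the timestamp of each symbol equals the number of occurrences of~$t$ preceding it in the untimed projection, so the timed word is uniquely reconstructible from its untimed image; in particular two distinct timed words of the language have distinct projections.

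Combining the two movements, $\untimed{\cdot}$ is a bijection from $\Trace{\TA}$ to $\Language(\RegionAutomaton{\TA})$, which is the claim. I expect the main obstacle to be the injectivity step—precisely, arguing rigorously that the $z$-clock discipline forces exactly one tick per time unit and thereby pins every timestamp to a tick count—rather than surjectivity, which is a black-box appeal to the standard region automaton correspondence.
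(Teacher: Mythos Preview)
Your proposal is correct and follows essentially the same approach as the paper: the bijection is the untimed projection, surjectivity comes from the standard region-automaton correspondence, and injectivity is the genuine content, obtained by arguing that the tick discipline makes every timestamp recoverable as a tick count. The paper simply writes out the concrete form of the timed and untimed words and exhibits the inverse map explicitly, whereas you package the surjectivity step as a black-box citation and give the injectivity argument more discursively; these are presentational differences, not mathematical ones.
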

\begin{proof}
Let $\TA$ be a discrete-time~TA.
We explicit the bijection of the lemma.

Let $\run$ be a path of~$\TA$, generating the timed word~$w$.
Since $\TA$ includes ticks, $w$ is of the form
\[(t,1) \dots (t,\tau_0) (a_0, \tau_0)~ (t, \tau_0 + 1) \dots (t, \tau_1) (a_1, \tau_1) ~ \dots ~ (t,\tau_{n-1}+1) \dots (t,\tau_n)(a_n, \tau_n).\]
(Note that consecutive actions can occur in 0-time, so some of the sequences of consecutive ``$t$'' can be empty, including the first one if an action occurs at time~0.)
To the timed word~$w$, we associate the untimed word produced within the region automaton by the path~$[\run]$ corresponding to~$\run$:
\[\underbrace{tt \dots t}_{\tau_0 \text{ times}} ~a_0 \underbrace{tt \dots t}_{(\tau_1 - \tau_0) \text{ times}} a_1 ~ \dots \underbrace{tt \dots t}_{(\tau_n - \tau_{n-1}) \text{ times}} ~a_n.\]
This association is injective as the sequence $(\tau_i)_{i\leq n}$ which was removed in
the transformation depends only on the number of ``$t$'' in the timed word.
Moreover, it is surjective as
given an untimed word in $\RegionAutomaton{\TA}$ $w' =\underbrace{tt \dots t}_{k_0 \text{ times}} ~a_0 \underbrace{tt \dots t}_{k_1 \text{ times}} a_1 ~ \dots \underbrace{tt \dots t}_{k_n \text{ times}} ~a_n$ produced by a path $[\run']$ of the region automaton,
defining
\[w = (t,1)\dots (t,k_0) (a_0, k_0) (t,k_0+1)\dots (t,k_0+k_1) (a_1, k_0 + k_1) \dots (a_n, \sum\limits_{i=0}^{n} k_i)\]
we have that $w$ is the timed word generated by the unique path of the TA corresponding to~$\run'$ and $w$ is associated with~$w'$.
\end{proof}

This bijection allows us to solve the language inclusion problem for discrete-time TAs
by considering the language inclusion of the region automata (which are 
finite regular automata).

\begin{prop}\label{th:decidability:inclusion:discrete-time}
	Language inclusion in discrete-time TAs is \EXPSPACE{}-complete.
\end{prop}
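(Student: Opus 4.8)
The plan is to reduce the problem to language inclusion for finite nondeterministic automata via the bijection of \cref{lemma:discrete-time:bijection-with-untimed-words}, and to handle the upper and lower bounds separately. For the upper bound, I would first augment both input TAs $\TA$ and $\TB$ with ticks exactly as described above (using a common tick clock and a common maximal constant so that the two tick encodings agree), and build their region automata $\RegionAutomaton{\TA}$ and $\RegionAutomaton{\TB}$, which are ordinary NFAs over the alphabet $\ActionSet \cup \{t\}$. Since \cref{lemma:discrete-time:bijection-with-untimed-words} exhibits, for each TA, a bijection between its timed language and the untimed language of its region automaton, and since this bijection is defined purely by counting ticks (hence does not depend on which automaton produced the timed word), we obtain $\Trace{\TA} \subseteq \Trace{\TB}$ if and only if $\Language(\RegionAutomaton{\TA}) \subseteq \Language(\RegionAutomaton{\TB})$. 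In discrete time a clock region is just a vector assigning to each clock either a value in $\interval{0}{M(\clock)}$ or the symbol ``$> M(\clock)$'', so a single region is stored in space polynomial in $|\TA|$, while the number of regions is $\prod_{\clock \in \ClockSet}(M(\clock)+2)$, which is exponential because constants are written in binary. As language inclusion for NFAs is decidable in \PSPACE{} in the size of the automata, I would run the standard on-the-fly search for a counterexample word: it nondeterministically reads letters while maintaining one current region of $\RegionAutomaton{\TA}$ (polynomial space) together with the subset of regions of $\RegionAutomaton{\TB}$ reachable so far via the powerset construction (exponential space), and reports non-inclusion as soon as the $\TA$-region is final while the $\TB$-subset contains no final region. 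This is a nondeterministic exponential-space procedure, so by Savitch's theorem the problem lies in \EXPSPACE{}.

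For the lower bound, I would reduce from a canonical \EXPSPACE{}-complete problem, namely acceptance of a deterministic Turing machine $M$ running in space $2^{p(n)}$. Given $M$ and an input $x$ of length $n$, I would build a discrete-time TA $\TB$ whose language is the set of all timed words that are \emph{not} faithful encodings of an accepting computation of $M$ on~$x$; then $\TB$ is universal, i.e.\ $\TimedWords{\ActionSet} \subseteq \Trace{\TB}$ (a special case of language inclusion, taking on the left a one-location TA accepting every timed word over $\ActionSet$), if and only if $M$ rejects~$x$. An encoding lists the successive configurations of $M$, each of length $2^{p(n)}$, and $\TB$ nondeterministically guesses one kind of defect: a malformed encoding, a wrong initial or final configuration, or a local transition-rule violation between a cell and the corresponding cell of the next configuration. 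This last check requires comparing two positions that are $2^{p(n)}$ cells apart, which is done by resetting a clock at the first cell and testing it against the constant $2^{p(n)}$ at the matching cell; since that constant needs only $p(n)$ bits, $\TB$ has size polynomial in $|M| + n$. This establishes \EXPSPACE{}-hardness, and together with the upper bound yields \EXPSPACE{}-completeness.

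The main obstacle I expect is the lower-bound construction: one must encode the configurations and their timing so that every type of defect---and in particular the transition-consistency check across the exponential offset---is detectable by a \emph{polynomial}-size discrete-time TA, and one must verify that the guessed defects cover exactly the ill-formed or non-accepting words, so that universality of $\TB$ is equivalent to rejection by $M$. The upper bound is comparatively routine once \cref{lemma:discrete-time:bijection-with-untimed-words} is in place; the only care needed is to run the inclusion test on-the-fly rather than materialising the exponential-size region automata, and to confirm that the combined guess-and-track procedure stays within exponential space.
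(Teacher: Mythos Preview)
Your upper bound is essentially identical to the paper's: reduce to NFA inclusion via \cref{lemma:discrete-time:bijection-with-untimed-words}, observe that region automata are exponential in the input (because constants are in binary), and run the standard \PSPACE{} inclusion test on-the-fly. Nothing to add there.

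Your lower bound, however, takes a genuinely different route. The paper reduces from \emph{equality of rational expressions with squaring} (Meyer--Stockmeyer), building for each such expression a discrete-time TA whose untimed language matches the expression; the squaring operator is simulated by a gadget that uses three fresh clocks per nested square to force exactly two traversals of a sub-automaton. You instead go for a direct generic reduction from acceptance of an \EXPSPACE{} Turing machine, exploiting that a single clock compared to the binary-encoded constant $2^{p(n)}$ can bridge the exponential offset between corresponding cells of consecutive configurations. Both arguments are sound. Your approach is more elementary and self-contained (no external \EXPSPACE{}-complete combinatorial problem needed), and it even yields hardness with a \emph{constant} number of clocks, whereas the paper's gadget introduces clocks proportional to the square-nesting depth. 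The paper's route, on the other hand, is shorter to write down once the Meyer--Stockmeyer result is taken for granted, and it avoids the bookkeeping of a full TM encoding. The obstacle you flag is real but routine: you must enforce (or detect the failure of) a unit-spacing convention on the letters so that the clock offset really measures cell distance, and you must make sure the ``malformed encoding'' branch of~$\TB$ catches every timed word that deviates from that convention; both are standard and fit in a polynomial-size TA.
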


We separate both directions of the proof for sake of clarity. 
We start by showing that the language inclusion in discrete-time TAs can be achieved in \EXPSPACE{}.

\begin{proof}
Let $\TA$ and $\TB$ be two discrete-time TAs, and let $\RegionAutomaton{\TA}$ and $\RegionAutomaton{\TB}$ be their respective region automata. Then from \cref{lemma:discrete-time:bijection-with-untimed-words}, we have
\[ \Trace{\TA} \subseteq \Trace{\TB} \text{ if and only if } \Trace{\RegionAutomaton{\TA}} \subseteq \Trace{\RegionAutomaton{\TB}}\text{.}\]
Thus deciding the language inclusion in discrete-time TAs amounts to solving the language inclusion problem in the context of finite regular automata, which can be done in
\PSPACE{} in the size of the region automata.
Noting that the region automaton of the ticked TA is exponential in the size of the ticked~TA (itself linear in~$\TA$), this produces an \EXPSPACE{} algorithm.
\end{proof}

Let us now show that the language inclusion in discrete-time TAs is \EXPSPACE{}-hard.
To do so, we will reduce from a succinct variant of the equality of rational expressions.

\begin{defi}[Rational expressions with square]
The expressions $\emptyset$, $\silentaction$, and $\action$ with $\action \in \ActionSet$ are \emph{rational expressions with square}.
If $\sqexp$, $\sqexp_1$ and $\sqexp_2$ are rational expressions with square, then so are $\sqexp_1 + \sqexp_2$, $\sqexp_1 \cdot \sqexp_2$, $\sqexp^*$ and $\sqexp^2$.

A \emph{rational language with square} is a set of words on~$\ActionSet$ represented by a rational expression with square.
\end{defi}

The operators on the rational expressions are interpreted in the usual way.
For instance, the expression $(a+ab)^2$ represents the set of words $\set{aa, aab, aba, abab}$. There can be several expressions representing the same language.

The expressivity of rational languages with square is exactly the same as of rational languages since using the square is equivalent to concatenating an expression with itself. However, the description of a language with square may be exponentially more succinct.
This explains the following result.

\begin{thmC}[\cite{MS72}]\label{theorem:discrete-time:language-inclusion}
Let $\Language_1$ and $\Language_2$ be two rational languages with square.
Deciding whether $\Language_1 = \Language_2$ is \EXPSPACE{}-complete.
\end{thmC}

We can now proceed to the proof of the hardness result of \cref{th:decidability:inclusion:discrete-time}.

\begin{proof}[Proof of the hardness of language inclusion for discrete-time~TAs]
Let $\Language$ be a rational language with square and $\sqexp$ be the rational expression with squares that represents it.
From the structure of~$\sqexp$, we will build a TA whose untimed language is~$\Language$.

Since we will compare the timed language of TAs, and we only want to compare their untimed languages, we need to impose a standard for the timestamps of their words.
We choose that each action must occur at an even number of time units.
More precisely, if the TA recognizes words of at least one letter, it will read the first letter without any delay (at time~0), and wait two time units between each letter.
To do this, we use a clock $x$ which is reused for all the constructions, and which let 2 time units elapse before being reset, but only when a letter occurs.
Every operation, beside reading a letter, is then done in 0-time.
In particular, our constructions always start and end with $x=0$, and only allow time to elapse when a letter is read.
We present in \cref{figure:discrete-time:EXPSPACE-hardness} the inductive constructions corresponding to the basic rational expressions and the operators $+$, $\cdot$, and~$^*$.
The case of the square operator is explained separately.

\begin{table}
\caption{Timed automata constructions $\TA_{\sqexp}$ for regular expressions}
\label{figure:discrete-time:EXPSPACE-hardness}
{\centering
\begin{tikzpicture}[pta, node distance=2cm, thin]
	\node (p00) at ( 0, 0)  {};
	\node (p10) at ( 4, 0)  {};
	\node (p17) at ( 4, -12.25) {};
	\node (p20) at (12, 0)  {};
	\node (p27) at (12, -12.25) {};
	\node (p07) at ( 0, -12.25) {};
	\node (p01) at ( 0, -1) {};
	\node (p21) at (12, -1) {};
	\node (p02) at ( 0, -2) {};
	\node (p22) at (12, -2) {};
	\node (p03) at ( 0, -3.5) {};
	\node (p23) at (12, -3.5) {};
	\node (p04) at ( 0, -5.5) {};
	\node (p24) at (12, -5.5) {};
	\node (p05) at ( 0, -9) {};
	\node (p25) at (12, -9) {};

	\draw[-] (0,0) -- (12,0) -- (12,-12.25) -- (0, -12.25) -- (0,0);
	\draw[-] (p10) -- (p17);
	\draw[-] (p01) -- (p21);
	\draw[-] (p02) -- (p22);
	\draw[-] (p03) -- (p23);
	\draw[-] (p04) -- (p24);
	\draw[-] (p05) -- (p25);
	\node[align=center] at (2, -0.5) {Rational expression \\ with square $\sqexp$};
	\node at (8, -0.5) {Timed automaton $\TA_{\sqexp}$};

	\node at (2, -1.5) {$\styleact{\silentaction}$};
	\node at (2, -2.75) {$a \in \ActionSet$};
	\node at (2, -4.5) {$\sqexp_1 \cdot \sqexp_2$};
	\node at (2, -7.25) {$\sqexp_1 + \sqexp_2$};
	\node at (2, -10.5) {$\sqexp^*$};
	\node[location, initial] (a0) at (7, -1.5) {};
	\node[location, final] (a1) at (9, -1.5) {};
	\path (a0) edge node[align=center, above=-4.5mm]{$\styleclock{\clock} = 0$ \\ $\styleact{\silentaction}$} (a1);
	\node[location, initial] (b0) at (6, -2.65) {};
	\node[location] (b1) at (8, -2.65) {};
	\node[location, final] (b2) at (10, -2.65) {};
	\path (b0) edge node[align=center, above=-4.5mm]{$\styleclock{\clock} = 0$ \\ $\styleact{a}$} (b1);
	\path (b1) edge node[above]{$\styleclock{\clock} = 2$} node[align=center, below]{$\styleact{\silentaction}$\\$\styleclock{\clock} \assign 0$} (b2);
	\node[location, initial, rectangle, minimum width=1.5cm, minimum height=1cm] (c0) at (6.5, -4.5) {$\TA_{\sqexp_1}$};
	\node[location, rectangle, minimum width=1.5cm, minimum height=1cm] (c1) at (9.5, -4.5) {$\TA_{\sqexp_2}$};
	\path (c0) edge node[align=center]{$\styleclock{\clock} = 0$ \\ $\styleact{\silentaction}$} (c1);
	\node[location, initial] (d0) at (6, -7.25) {};
	\node[location, rectangle, minimum width=1.5cm, minimum height=1cm] (d1) at (9, -6.5) {$\TA_{\sqexp_1}$};
	\node[location, rectangle, minimum width=1.5cm, minimum height=1cm] (d2) at (9, -8) {$\TA_{\sqexp_2}$};
	\path (d0) edge node[align=center, above=+1mm]{$\styleclock{\clock} = 0$ \\ $\styleact{\silentaction}$} (d1);
	\path (d0) edge node[align=center, below]{$\styleclock{\clock} = 0$ \\ $\styleact{\silentaction}$} (d2);
	\node[location, initial] (e0) at (6.5, -11.5) {};
	\node[location, rectangle, minimum width=1.5cm, minimum height=1cm] (e1) at (6.5, -10) {$\TA_{\sqexp}$};
	\node[location, final] (e2) at (9.5, -11.5) {};
	\path (e0) edge[bend left] node[align=center, left]{$\styleclock{\clock} = 0$ \\ $\styleact{\silentaction}$} (e1);
	\path (e1) edge[bend left] node[align=center, right]{$\styleclock{\clock} = 0$ \\ $\styleact{\silentaction}$} (e0);
	\path (e0) edge node[align=center, right=-1mm]{$\styleclock{\clock} = 0$ \\ $\styleact{\silentaction}$} (e2);
\end{tikzpicture}

}
\end{table}

For the square construction $\TA_{\sqexp^2}$ (which we give in \cref{figure:discrete-time:EXPSPACE-hardness:square}), we need to add three additional clocks per square occurrence in the original expression~$\sqexp$: the first one, $z$,  manages the particular case of the empty word $\silentaction$ by detecting whether some time has passed during the traversal of $\TA_{\sqexp}$, while the two other clocks $y$ and~$v$ are used to force exactly two traversals in~$\TA_{\sqexp}$.

Indeed, the shift between the clocks $x$, $y$ and $v$ (with values kept between 0 and~2 all along the run) allows to keep in memory the number of remaining traversals in~$\TA_{\sqexp}$ by being modified once during the first traversal (\circleone{}), a second time between the first and second traversals (\circletwo{}), and being tested at the end of the second traversal (\circlethree{}).
These added clocks cannot be reused in nested squares constructions.
Thus we introduce a number of clocks equal to three times the maximal number of nested squares in the original expression~$\sqexp$.

More precisely, the previously built TA $\TA_{\sqexp}$ is modified into
$\tilde{\TA}_{\sqexp}$ by adding on every location silent loop transitions
resetting $y$ and $v$ when they reach $2$, as well as a
silent loop transition with guard $x=1 \wedge y=1$ and reset set $\setsmall{y,v}$.
At most one of the latter loops, denoted by~\circleone{}, is taken during an execution, and it requires at least one letter to be triggered.
This transition ensures that $y=v\neq x$ in the following.
This property is necessary to take the transition~\circletwo{}, which now ensures that $x=v\neq y$, which will allow taking the transition~\circlethree{}.
As mentioned, taking the transition~\circleone{} requires at least one letter to be read,
hence why, when $\sqexp$ contains the empty word, we need the clock~$z$ to give an alternative way to exit the gadget.
Formally, we have:

\begin{figure}
\begin{center}
\begin{tikzpicture}[pta, node distance=2cm, thin]
	\node[location, initial] (e0) at (-5,0) {};
	\node[location, rectangle, minimum width=3.5cm, minimum height=2cm] (e1) at (0, 0) {$\tilde{\TA}_{\sqexp}$};	\node[location, final] (e2) at (5,0) {};

	\node[align=center, below=+2mm] at (-5.5,0) {$\styleclock{\clock} = 0$ \\ $\styleact{\silentaction}$\\$ \styleclock{y} \assign 0$\\$\styleclock{z} \assign 0$\\$\styleclock{v} \assign 0$};

	\path (e0) edge node[align=center, above]{$\styleact{\silentaction}$}(e1);
	\path (e1) edge node[align=center, above]{$\styleclock{z} = 0$ \\ $\styleact{\silentaction}$}(e2);
	\path (e1) edge[bend right] node[align=center, below]{$\styleclock{\clock} = 0 \wedge \styleclock{y} = 1 \wedge \styleclock{v} = 0$ \\ $\styleact{\silentaction}$\\\Large{\circlethree{}}}(e2);
	\path (e1) edge[bend right] node[align=center, above]{\Large{\circletwo{}}\\$\styleclock{\clock} = 0 \wedge \styleclock{y} = 1  \wedge \styleclock{v} = 1$\\$\styleact{\silentaction}$ \\ $\styleclock{v} \assign 0$} (e0);

	\draw[-] (1.2,2) -- (1.2, 4.5) -- (5, 4.5) -- (5,2) -- (1.2, 2);
	\node at (1.75, 3.25) {\Large{\circleone{}:}};
	\node at (0.75, 0.25) {\Large{\circleone{}}};
	\node[location] (e5) at (2.5, 3.25) {};
	\path (e5) edge[loop right] node[align=center, right]{$\styleclock{x} = 1$\\$\land \styleclock{y}=1$ \\ $\styleact{\silentaction}$\\$\styleclock{y} \assign 0$\\$\styleclock{v} \assign 0$} (e5);

\end{tikzpicture}
\caption{TA $\TA_{\sqexp^2}$. $\tilde{\TA}_{\sqexp}$ is the automaton $\TA_{\sqexp}$ modified through~\circleone{}}

\label{figure:discrete-time:EXPSPACE-hardness:square}
\end{center}
\end{figure}

\begin{defi}[Square construction]
Let $\TA_{\sqexp} = (\ActionSet, \LocSet, \locinit, \emptyset, \FinalSet, \ClockSet, I, E)$ be the timed automaton corresponding to the rational expression with square $\sqexp$. Then, the TA corresponding to $\sqexp^2$ is $\TA_{\sqexp^2} = (\ActionSet, \LocSet \cup\{\locinit', \locfinal'\}, \locinit', \emptyset, \setsmall{\locfinal'}, \ClockSet \cup \setsmall{v,y,z}, I', E')$ where
$I'$ is the extension of $I$ such that $I'(\locinit')=I'(\locfinal')=\BTrue{}$,
and
\begin{align*}
E' = & E \cup \big \{(\locinit', \silentaction, \BTrue{}, \emptyset, \locinit) \big\} \cup  \\
& \bigcup\limits_{\loc \in \LocSet}^{} \big \{(\loc, \silentaction, y=2, \setsmall{y}, \loc), (\loc, \silentaction, v=2, \setsmall{v}, \loc), (\loc, \silentaction, x=1 \wedge y=1, \setsmall{v,y}, \loc) \big\} \cup \\
&\bigcup\limits_{\locfinal \in \FinalSet}^{} \big\{(\locfinal, \silentaction, x=0 \wedge y=1 \wedge v=1, \setsmall{v}, \locinit'), \\
&\quad\quad\quad(\locfinal, \silentaction, z=0, \emptyset, \locfinal'), (\locfinal, \silentaction, x=0 \wedge y=1 \wedge v=0, \emptyset, \locfinal') \big\}\text{.}
\end{align*}
\end{defi}

\begin{rem}
	Note that we could most probably save one clock in \cref{figure:discrete-time:EXPSPACE-hardness:square} by resetting the clocks everywhere every 3 instead of 2 time units, and therefore need only 2 instead of 3~additional clocks for each occurrence of a square in the original expression.
	But that would make the proof slightly more cumbersome, without changing the class of complexity.
\end{rem}

Two rational languages with square $\Language_1$ and~$\Language_2$, respectively represented by the expressions $\sqexp_1$ and~$\sqexp_2$, are equal if and only if the TAs $\TA_{\sqexp_1}$ and $\TA_{\sqexp_2}$ recognize the same timed language.
The obtained automata are discrete-time TAs of polynomial size in the rational expressions.
Thus from \cref{theorem:discrete-time:language-inclusion} follows the \EXPSPACE{}-hardness of language inclusion for discrete-time~TAs.
\end{proof}

As the inter-reductions provided by \cref{th:weak-full-eq} and \cref{lem:language-inclusion}
are polynomial, we immediately obtain the following.

\begin{cor}\label{theorem:discrete-time}
The weak and full opacity problems for discrete-time TAs are \EXPSPACE{}-complete.
\end{cor}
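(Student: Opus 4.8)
The plan is to derive this corollary purely from the polynomial inter-reductions already established, plugging in the exact complexity of language inclusion from \cref{th:decidability:inclusion:discrete-time}. No new construction is needed: the whole argument is a transfer of \EXPSPACE{}-completeness along a chain of polynomial reductions.

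First I would handle weak opacity. By \cref{lem:language-inclusion}, the weak opacity problem and the timed language inclusion problem are inter-reducible, and by \cref{th:decidability:inclusion:discrete-time} the latter is \EXPSPACE{}-complete for discrete-time TAs. Composing the two gives membership in \EXPSPACE{} (reducing weak opacity to inclusion) and \EXPSPACE{}-hardness (reducing inclusion to weak opacity). The one point to check carefully is that both directions of \cref{lem:language-inclusion} stay inside the discrete-time class. The forward direction tests $\Trace{\APriv} \subseteq \Trace{\APub}$, where $\APriv$ and $\APub$ are obtained from $\TA$ by duplicating or deleting locations and introduce no new clock constraints, so they inherit $\Time = \setN$; the backward direction builds the TA $\mathcal{O}$ by gluing copies of the two input automata with $\silentaction$-transitions guarded by $\clock = 0$, a constraint that is perfectly meaningful over $\setN$. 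Since both reductions are polynomial, the complexity class is preserved in each direction, yielding \EXPSPACE{}-completeness of weak opacity for discrete-time TAs.

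Full opacity then follows from \cref{th:weak-full-eq}: weak and full opacity are polynomially inter-reducible, and—exactly as above—the constructions behind that reduction only add locations and $\silentaction$-transitions carrying invariants and guards of the form $\clock = 0$, which remain valid in discrete time. Transporting \EXPSPACE{}-completeness across this last polynomial reduction gives the claim for full opacity as well. The main (and essentially only) obstacle is the bookkeeping that every reduction in \cref{lem:language-inclusion} and \cref{th:weak-full-eq} maps discrete-time inputs to discrete-time outputs; I expect no genuine difficulty, since none of the auxiliary gadgets ever introduce strict fractional timing or any other feature distinguishing $\setRgeqzero$ from $\setN$.
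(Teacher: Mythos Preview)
Your proposal is correct and matches the paper's own argument exactly: the paper simply states that since the inter-reductions of \cref{th:weak-full-eq} and \cref{lem:language-inclusion} are polynomial, the \EXPSPACE{}-completeness of \cref{th:decidability:inclusion:discrete-time} transfers immediately. Your additional check that each reduction preserves the discrete-time semantics is a nice point the paper leaves implicit.
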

\subsection{Observable Event-Recording Automata}\label{section:opacity:era}
In~\cite{Cassez09}, the opacity problems were shown to be undecidable for Event-Recording Automata (ERAs)~\cite{AFH99}, a subclass of TAs where every clock is associated with a specific event, and every transition resets exactly the clock associated with the event of the transition.
For this reason, the valuation of a clock is entirely determined by the duration since the last occurrence of its associated event.
One of the main interest of ERAs is that they are determinisable~\cite{AFH99}.
This determinisation is carried out through the standard subset construction.

The undecidability result from~\cite{Cassez09} on ERAs required to make some events unobservable.
Hence, in our framework they would be replaced by $\silentaction$-transitions.
We define \emph{observable ERAs} (oERAs) as ERAs where the actions resetting the clocks must be observable.
This means that the information required for the determinisation now belongs to the trace that is observed.

\begin{defi}[oERAs]
	An \emph{observable event-recording automaton} (oERA) is a TA such that
	\begin{itemize}
		\item the set of clocks is $\ClockSet_\ActionSet = \{ \clock_\action \mid \action \in \ActionSet \}$;
		\item for any edge with action~$\action \in \ActionSet$, the set of clocks to reset is exactly $\{ \clock_\action \}$; and
		\item for any edge with action~$\silentaction$, the set of clocks to reset is empty.
	\end{itemize}
\end{defi}

ERAs do not include $\silentaction$-transitions. Therefore, while 
language inclusion is decidable and \PSPACE{}-complete for ERAs~\cite[Theorem~4]{AFH99},
we cannot directly use this result in conjunction with 
\cref{th:weak-full-eq} and \cref{lem:language-inclusion} to show the decidability 
of the full and weak opacity problems for oERAs.
We can however rely on the usual tools for ERAs in order to prove our result.

Given an oERA~$\TA$, as the information required for the determinisation now belongs to the trace that is observed, we can build through the subset construction a TA~$\mathit{Det}_\TA$ such that any path~$\run$ in~$\TA$ corresponds to a path~$\run_D$ in~$\mathit{Det}_\TA$ with the same trace and ending in a location labelled by the set of all the locations of~$\TA$ that can be reached with a run that has the same trace as~$\run$.
This information, combined with the construction of~$\AMemo$ (see \cref{definition:AMemo}) which stores in the state of the oERA whether the private location was visited or not, provides the following result.

\begin{thm}\label{th:decidability:oera}
The weak and full opacity problems for oERAs are \PSPACE{}-complete.
\end{thm}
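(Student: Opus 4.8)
The plan is to establish \PSPACE{}-completeness of the weak opacity problem for oERAs; by \cref{th:weak-full-eq} the same bounds then transfer to full opacity, so I only argue the weak case. For membership, I would first replace~$\TA$ by~$\AMemo$ (\cref{definition:AMemo}). Duplicating locations and redirecting the edges entering private locations touches neither actions nor reset sets, so $\AMemo$ is again an oERA, it has polynomial size, and by the trace-preserving correspondence noted after \cref{definition:AMemo} it is weakly opaque exactly when~$\TA$ is. The point of this step is that in~$\AMemo$ a run ending in a location of~$\FinalSet^S$ is precisely a private run (the $S$-copy is entered only through an edge into a private location) while a run ending in~$\FinalSet^{\bar{S}}$ is precisely a public run; hence $\PrivateTr{\AMemo}$ (resp.\ $\PublicTr{\AMemo}$) is exactly the set of traces reaching~$\FinalSet^S$ (resp.\ $\FinalSet^{\bar{S}}$).

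Next I would exploit determinisability. Since every clock is event-recording and $\silentaction$-edges reset no clock, the clock valuation reached after an observed trace~$w$ is determined by~$w$ together with the elapsed time; folding the $\silentaction$-closure into each subset, the standard subset construction yields the determinised automaton~$\mathit{Det}_{\AMemo}$ described just before the statement, whose macro-state after~$w$ is exactly the set of locations of~$\AMemo$ reachable by \emph{some} run of trace~$w$. Consequently $w$ is a private trace iff its macro-state meets~$\FinalSet^S$, and a public trace iff its macro-state meets~$\FinalSet^{\bar{S}}$, so \emph{weak opacity fails iff some reachable macro-state meets~$\FinalSet^S$ but is disjoint from~$\FinalSet^{\bar{S}}$}. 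I would decide this reachability question on-the-fly on the region automaton of~$\mathit{Det}_{\AMemo}$: a search configuration is a pair made of a macro-state (a subset of the locations of~$\AMemo$, hence a polynomial number of bits) and a region over the $|\ActionSet|$ event-clocks (whose canonical description is also polynomial). A single step elapses time to a time-successor region, fires an observable edge resetting its clock, and closes under~$\silentaction$; the successor macro-state is computable in polynomial time. Guessing such a path and testing the target condition is thus a nondeterministic search whose configurations are polynomially encoded, i.e.\ an \NLOGSPACE{}-flavoured reachability over an exponential but implicitly represented graph, giving an \NP{}SPACE algorithm, hence \PSPACE{} by Savitch's theorem.

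For the lower bound I would reduce from the language-inclusion problem for ERAs, which is \PSPACE{}-hard~\cite{AFH99}. Given two ERAs~$\TA$ and~$\TB$ over a common alphabet (hence over the common event-clock set), I apply the construction of~$\mathcal{O}$ from the proof of \cref{lem:language-inclusion}. The only edges that construction adds are $\silentaction$-edges taken in zero time and resetting no clock, which are compatible with the oERA shape; therefore $\mathcal{O}$ is again an oERA, and by that proof it is weakly opaque iff $\Trace{\TA}\subseteq\Trace{\TB}$. This yields \PSPACE{}-hardness of weak opacity for oERAs, and of full opacity via \cref{th:weak-full-eq}.

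The delicate part is the membership direction. The two places requiring care are justifying that folding the $\silentaction$-closure into the subset construction is sound—ERAs are classically determinised in the absence of $\silentaction$-edges, and here I rely on the fact that $\silentaction$-edges never reset an event-clock, so they cannot desynchronise the valuation from the observed trace—and confirming that the combined macro-state/region search stays in polynomial space even though $\mathit{Det}_{\AMemo}$ and its region automaton are of exponential size. The hardness side is comparatively light once one checks that the reduction of \cref{lem:language-inclusion} preserves the oERA structure.
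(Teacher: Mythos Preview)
Your membership argument is essentially the paper's: build~$\AMemo$, determinise via the subset construction (folding in the $\silentaction$-closure, which is sound precisely because $\silentaction$-edges in an oERA reset nothing), pass to the region automaton, and test reachability of a macro-state that contains a final $S$-location but no final $\bar{S}$-location. Your on-the-fly description of the search is a bit more explicit than the paper's, but the underlying \PSPACE{} bound is obtained the same way.

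Where you diverge is the hardness direction. The paper reduces from \emph{reachability in TAs}: it takes an arbitrary TA, relabels its transitions so that it becomes an oERA without altering reachability, declares the final locations private, and observes that every accepting run is then private, so weak opacity holds iff no final location is reachable. You instead reduce from \emph{language inclusion for ERAs}, reusing the gadget~$\mathcal{O}$ of \cref{lem:language-inclusion} and checking that it stays within the oERA fragment (it does: ERAs have no $\silentaction$-edges, and the added edges are $\silentaction$-labelled with empty reset set and a guard on an event clock). Both routes are valid. Yours is more in the spirit of the inter-reducibility framework set up in \cref{section:inter-reduc} and avoids the somewhat terse ``relabel the TA into an oERA'' step; the paper's route is more self-contained in that it does not appeal to the \PSPACE{}-hardness of ERA inclusion from~\cite{AFH99}.
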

\begin{proof}
	As said above, our algorithm to test weak opacity consists, given an oERA~$\TA$, in first building the corresponding~$\AMemo$ (see \cref{definition:AMemo}), then determinising it through the subset construction,  computing its region automaton, and finally testing reachability of a location including a final private location, but no final public location. 
If such a location is reachable, then the associated trace can be produced by a private run but no public run in $\AMemo$ (recall that $\AMemo$ has the same opacity properties as $\TA$), which constitutes a violation of weak opacity.

	\begin{itemize}
		\item Let us first explain why this algorithm is in \PSPACE{}.

	The determinisation of the oERA causes the number of locations to become exponential in the size of the entry~\cite{AFH99}, and the construction of the region automaton gives an exponential number of clock regions, bounded by $\NbClockRegions$~\cite{AD94}.
	The size of the region automaton is thus exponential in the number of locations and in the number of clocks of~$\TA$.
	On the region automaton, testing the reachability of a location can be done in \NLOGSPACE{}.
	Hence the problem of weak opacity in oERAs is in \PSPACE{}.

		\item Let us now explain why this problem is \PSPACE{}-hard.
	We reduce from the reachability problem for TAs, which is \PSPACE{}-complete~\cite{AD94}.

	Let $\TA$ be a TA with a set of final locations~$\FinalSet$. We can modify the actions labelling its transition so that it is an oERA without modifying reachability properties.
	We consider $\TA'$ the TA identical to~$\TA$ except that the set of private locations is set to~$\FinalSet$.
	This way, every run of~$\TA'$ is private.
	Thus $\TA'$ is weakly opaque if and only if no final location of~$\TA$ is reachable.
	Hence, the weak opacity problem in oERAs is \PSPACE{}-hard.
	\end{itemize}
	This result extends to full opacity thanks to \cref{th:weak-full-eq}.
\end{proof}

\section{Opacity with Limited Attacker Budget}\label{section:finite}

One of the causes for the undecidability of the opacity problems in~\cite{Cassez09}
stems from the unbounded memory the attacker might require to remember a run of the TA.
As a consequence, one can wonder whether the opacity problems remain undecidable when the attacker performs only a \emph{finite} number of observations.
This models the case of an attacker with a limited attack budget.

This notion of limited budget for a TA has previously been considered 
in~\cite{LRNA17,ALMS22,ALLMS23}. In~\cite{LRNA17}, a determinisation of TAs whose traces have their length bounded by a constant has been studied in~\cite{LRNA17}; here, contrarily to~\cite{LRNA17}, we do not forbid silent transitions loops in our TAs.
In~\cite{ALMS22,ALLMS23}, a notion of opacity, called execution time opacity, was considered and proven decidable for the full class of TAs. In their framework, the only action the attacker could observe was the duration of the full execution of the run. This can be represented by having a single observable action within the TA which labels the transitions reaching the final location.
We extend this line of research by assuming the attacker can observe the system only a \emph{finite} and \emph{a~priori} fixed number of times.
To the best of our knowledge, the results in this section represent 
the second result of the literature (after~\cite{ALMS22,ALLMS23}) providing a decidable opacity result for the full class of TAs over dense time.

We define a new setting for opacity as follows.
The attacker chooses a time~$t$ and observes the first observable action occurring after the selected time.
For instance, if the selected time is~$2.1$ and the trace is $(a,1.9)(b,2.2)(c,2.5)$ is about to be produced, then the attacker only observes $(b,2.2)$.
Repeating this operation several times during the same run, the attacker may constitute an observed timed word and adapt their strategy to select the next time in consequence, until the attacker spent their entire budget.
Such a strategy is given by a \emph{time selection} function.
In this framework, an observed timed word $w$ is \emph{opaque against the attacker's time selection function} iff there exists a private run and a public run which observed timed word 
under the time selection function is $w$.

In this section, we address two natural choices of strategies for the attacker, and the problem of opacity against all possible $N$-finite time selection function.
We establish the decidability and complexity of the opacity problems for each of them.

We consider three main settings:
\begin{enumerate}
	\item when the attacker chooses to observe the first actions of the system (\cref{sec:Nfirst});
	\item when the attacker chooses to follow a sequence of observation times fixed before the run (\cref{sec:Nfixed}); and
	\item when the attacker may adapt their strategy, and thus their times of observations, based on what has been observed up to now (\cref{ss:N-dynamic}).
\end{enumerate}

Formally, in order to represent the ability of the attacker to decide when they observe the
system, we define the time selection function~$\strategy{}$ associating to a timed word the next time when the attacker will start observing the system again.

We denote by $\Subwords(W)$ the set of subwords of timed words of the set $W \subseteq (\ActionSet\times\setRgeqzero)^*$.
Let $\preceq$ be the partial order on timed words such that by $v \preceq w$ if $v$~is a prefix of~$w$ (in particular, $\silentaction\preceq w$~and $w\preceq w$).

\begin{defi}[Time selection function]
A function $\strategy{}:\TimedWords{\ActionSet} \mapsto \setRgeqzero\cup\{+\infty\}$ is a \emph{time selection function} if it is non-decreasing, for the partial order $\preceq$ on timed words.
\end{defi}
\begin{defi}[Projection following a time selection function $\strategy{}$]\label{definition:projection}
Let $\strategy{}$ be a time selection function. We define the associated projection on timed words as follows.
\begin{itemize}
	\item $\projection{}(\silentaction) = \silentaction$, and
	\item for every letter $a\in \ActionSet, t \in \setRgeqzero $ such that $ w \cdot (a,t) $ is a timed word,
\[
\projection{}(w \cdot (a,t)) = \left\{
    \begin{array}{ll}
        \projection{}(w) \cdot (a,t) & \mbox{if } t \geq \strategy{}(\projection{}(w)) \\
        \projection{}(w) & \mbox{otherwise.}
    \end{array}
\right.
\]
\end{itemize}
\end{defi}

As said above, an important assumption of our setting for gaining decidability is that the number of observations performed by the attacker is finite.

\begin{defi}\label{definition:N-bounded}
Given $N \in \setN$, a timed language is \emph{$N$-bounded} if none of the words it contains has strictly more than $N$~letters.
A time selection function $\strategy{}$, defined for a TA~$\TA$, is \emph{$N$-finite} if the language $\projection{}(\Trace{\TA})$ is $N$-bounded.
\end{defi}

We extend the definitions of opacity to TAs %
with time selection functions. Recall that $\PrivateTr{\TA}$ and $\PublicTr{\TA}$ are the sets of traces produced by private and public runs on~$\TA$, respectively.

\begin{defi}[Opacity against time selection function $\strategy{}$]
Let $\TA$ be a TA.
Let $\strategy{}$ be a time selection function and $\projection{}$ its associated projection.
We say that $\TA$ is \emph{fully} (resp.\ \emph{weakly}) \emph{opaque against $\strategy{}$} if $\projection{}(\PrivateTr{\TA}) = \projection{}(\PublicTr{\TA})$ (resp.\ $\projection{}(\PrivateTr{\TA}) \subseteq \projection{}(\PublicTr{\TA})$).
It is \emph{$\exists$-opaque against $\strategy{}$} if $\projection{}(\PrivateTr{\TA}) \cap \projection{}(\PublicTr{\TA}) \neq \emptyset$.
\end{defi}
\subsection{Equivalence of timed words}\label{sec:techtools}

Our method relies on timed words grouping within equivalence classes, where two timed words belong to the same class if and only if there does not exist a TA that
can separate them (\ie{} by accepting one, but not the other). In this section, we detail this equivalence.

A \emph{time sequence} of length~$N$ is a non-decreasing sequence of $N$ non-negative reals.

If $w = (a_0, t_0) \dots (a_{N-1}, t_{N-1})$ is a timed word, we denote by~$\untimed{w}$ the word $a_0 \dots a_{N-1}$; and by~$\timed{w}$ the time sequence $(t_0, \dots, t_{N-1})$.
Note that we only consider \emph{finite} time sequences.

The following definition identifies groups of equivalent time sequences, and is inspired by the equivalence relation between clock valuations.

\begin{defi}[Equivalence of two time sequences]
Let $\tau$ and~$\tau'$ be two time sequences of same length~$N$.
We say they are \emph{equivalent}, denoted by $\tau \simeq \tau'$, if their integral parts are the same and their fractional parts are ordered in the same way and zero at the same time, \ie{} when the following three properties are all verified:
\begin{enumerate}
	\item $\forall i \in \interval{0}{N-1}~ \intpart{\tau_i} = \intpart{\tau'_i}$,
	\item $\forall i \in \interval{0}{N-1}~ \fract{\tau_i} = 0 \iff \fract{\tau'_i} = 0$, and
	\item $\forall i,j \in \interval{0}{N-1}~ \fract{\tau_i} \leq \fract{\tau_j} \iff \fract{\tau'_i} \leq \fract{\tau'_j}$.
\end{enumerate}
\end{defi}

We extend the equivalence relation of time sequences to words in the next definition.

\begin{defi}[Equivalence of two words]
Two timed words $w$ and~$w'$ are \emph{equivalent}, denoted by $w \simeq w'$, if their untimed words are the same and their time sequences are equivalent, i.e.:
\begin{enumerate}
	\item $\untimed{w} = \untimed{w'}$, and
	\item $\timed{w} \simeq \timed{w'}$.
\end{enumerate}
\end{defi}

We now show that equivalent words are accepted (or rejected) by exactly the same sets of TAs, and they can furthermore be produced by the same sequences of locations and transitions on each~TA.
In other words, no TA can distinguish equivalent words.
The two directions of this results are established through \cref{lem:folklore:equivalence,prop:folklore:rec}.

\begin{prop}\label{lem:folklore:equivalence}
Let $\TA$ be a TA, $w$ a timed word in~$\Trace{\TA}$ produced by a run~$\run$ of~$\TA$, and $w'$ a timed word such that $w \simeq w'$.
Then $w'$ is in~$\Trace{\TA}$ and is produced by a run following exactly the same sequence of locations and transitions in~$\TA$ as~$\run$.
\end{prop}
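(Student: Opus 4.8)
The plan is to realise the equivalence of timed words as a replay of the run with adjusted firing times. First I would record a run $\run$ of~$\TA$ not by its delays but by the \emph{absolute firing times} $0 \le T_0 \le \dots \le T_{n-1}$ of its transitions, observable \emph{and} silent, so that the delay before edge~$\edgei{i}$ is $d_i = T_i - T_{i-1}$ (with $T_{-1}=0$). The timed word~$w$ records only the pairs $(a_i, T_i)$ at observable indices, hence $\timed{w}$ is the subsequence $(T_{j_0}, \dots, T_{j_m})$ of these times. To prove the statement it then suffices to produce a new time sequence $T'_0 \le \dots \le T'_{n-1}$ that (i)~agrees with $\timed{w'}$ on the observable indices, and (ii)~is fully $\simeq$-equivalent (as a length-$n$ time sequence) to $(T_0,\dots,T_{n-1})$; replaying the same edges with delays $d'_i = T'_i - T'_{i-1}$ will then give the desired run $\run'$.

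The heart of the argument, and what I expect to be the main obstacle, is the construction of the unobservable coordinates of this new sequence. The hypothesis $w \simeq w'$ fixes the observable coordinates to $\timed{w'}$ and guarantees that, \emph{among themselves}, they already match the integral parts, the zero/non-zero pattern, and the fractional ordering prescribed by $(T_0,\dots,T_{n-1})$. I would group all indices by equality of fractional part in the original sequence, obtaining a chain of classes $C_0 \prec \dots \prec C_p$ (with $C_0$ the zero class, if present). By condition~2 of $w\simeq w'$ the zero class is pinned to fractional value~$0$, and by condition~3 the classes containing an observable index receive strictly increasing fractional values in the new sequence. For any class containing \emph{no} observable index, density of the reals lets me pick a fresh value in~$(0,1)$ strictly between the values already assigned to the two neighbouring classes. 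Setting $\intpart{T'_i} := \intpart{T_i}$ and $\fract{T'_i}$ to its class value yields a sequence that satisfies all three conditions by construction, hence $(T'_0,\dots,T'_{n-1}) \simeq (T_0,\dots,T_{n-1})$; monotonicity is automatic since the integral parts are unchanged and the fractional order is preserved. The delicate point to check is precisely that the fixed observable values always leave room to insert every unobservable class, which is exactly what conditions~2--3 (inherited on the observable indices) ensure.

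It remains to verify that this equivalent time sequence induces a legal run along the \emph{same} edges. The value of any clock when edge~$\edgei{i}$ fires is $T_i - T_j$, where $T_j$ is its last reset time (taking $T_j=0$ if never reset). A short case split on whether $\fract{T_i} \ge \fract{T_j}$ shows, using that integral parts coincide and that the fractional order and zero-pattern are preserved, that $\intpart{T_i - T_j} = \intpart{T'_i - T'_j}$ and $\fract{T_i - T_j} = 0 \iff \fract{T'_i - T'_j} = 0$. Since every constant in a guard or invariant is an integer, each atomic constraint $\clock \compOp d$ holds in $\run'$ exactly when it held in $\run$; applying the same computation at the endpoints of each delay shows the invariant of $\loci{i}$ is met throughout the delay~$d'_i$ (each clock traverses the same regions as under~$d_i$). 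Thus every discrete and delay transition of $\run'$ is valid, the sequence of locations is identical, and the final location is unchanged, so $\run'$ is a genuine run of~$\TA$.

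Finally, $\run'$ produces~$w'$: its untimed trace equals $\untimed{w'} = \untimed{w}$ because the edges are the same, and its observable timestamps are the $T'_{j_k}$, which equal $\timed{w'}$ by construction. Hence $w' \in \Trace{\TA}$, produced along exactly the same sequence of locations and transitions as~$\run$, as claimed.
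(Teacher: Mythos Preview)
Your argument is correct. The route differs from the paper's in a noteworthy way. The paper builds an explicit continuous, piecewise-linear time distortion $\gamma_{f\to f'}:[0,1)\to[0,1)$ from the fractional parts occurring in $w$ and $w'$, lifts it to $\tweak{f}{f'}(t)=\lfloor t\rfloor+\gamma_{f\to f'}(\fract{t})$, and applies this single function to \emph{every} time in the run (silent transitions included); preservation of guards and invariants then follows in one stroke from strict monotonicity of $\gamma_{f\to f'}$, and invariants during delays are covered automatically by continuity. You instead work combinatorially: you partition the full vector of firing times into fractional-part classes, let the observable classes inherit their new fractional values from $w'$, and fill the remaining classes by density; guards are then checked by a direct case split on $\fract{T_i}$ versus $\fract{T_j}$, and invariants by a convexity argument (holding at both endpoints of a delay suffices for conjunctions of atoms $x\bowtie c$). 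Both are sound. The paper's approach yields a canonical, reusable object---the distortion $\tweak{f}{f'}$ is exploited again later in the paper when comparing opacity under equivalent time sequences---whereas your construction is more elementary and self-contained but produces a one-off choice. One small point worth making explicit in your write-up: include the implicit timestamp $0$ (your $T_{-1}$) when forming the classes, so that clocks never reset are covered by the same case analysis; this is what anchors the zero class even when no observable timestamp has zero fractional part.
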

\begin{proof}
Given two equivalent timed words $w$ and~$w'$ of length~$N$ and a run $\run$ of trace~$w$ in some TA~$\TA$, we explain how to build a run~$\run'$ of trace~$w'$ in~$\TA$, visiting the same locations and taking the same transitions as~$\run$.
The idea, inspired by a technique from~\cite{ADLL25journal}, is to make a continuous distortion of the times occurring in~$\run$, mapping the timestamps of~$w$ onto those of~$w'$.
We denote by $\tau$ and~$\tau'$ the time sequences $\timed{w}$ and $\timed{w'}$, and $N$ their length.

Let $f$ (resp.~$f'$) be the increasing sequence whose elements are those of $\set{0, 1} \cup \set{\fract{\tau_i} \mid 0 \leq i < N}$ (resp.\ those of $\set{0, 1} \cup \set{\fract{\tau'_i} \mid 0 \leq i < N}$).

Since $w \simeq w'$, the fractional parts of the $\tau_i$ and the $\tau'_i$ are ordered in the same way in~$f$ and in~$f'$.
We also have $\vert f \vert = \vert f' \vert$, $f_0 = f'_0 = 0$ and $f_{\vert f \vert-1} = f'_{\vert f' \vert-1} = 1$.
We define the distortion function from $f$ to~$f'$ as follows:
\[\begin{array}{rrcl}
\gamma_{f \rightarrow f'}: & [0;1) & \longrightarrow & [0;1) \\
 & t & \longmapsto & f'_j + \frac{f'_{j+1}-f'_j}{f_{j+1}-f_j} (t-f_j) \text{ where $f_j,f_{j+1}$ are such that } f_j \leq t < f_{j+1}\text{.} \\
\end{array}\]
This distortion expands or shrinks each of the time intervals between fractional parts of timestamps of~$w$.
It is a continuous piecewise linear function on the interval~$[0;1)$.

\begin{exa}
	We depict an example of a distortion function in \cref{fig:folklore:proof}.
\end{exa}

\begin{figure}
\begin{tikzpicture}[scale=0.5]
	\draw[->,line width=1pt] (0,0) -- (10.3,0); %
	\draw[->,line width=1pt] (0,0) -- (0,10.3); %

	\def\A{10* 0.1}
	\def\B{10* 0.4}
	\def\C{10* 0.9}
	\def\a{10* 0.3}
	\def\b{10* 0.5}
	\def\c{10* 0.6}
	\def\ecart{-0.8}

	\draw (10 - 1.5*\ecart,0) node{t};
	\draw (0,10 - 1.5*\ecart) node{$\gamma_{f \rightarrow f'}(t)$};

	\node at (\ecart, 0) {0};
	\node at (\ecart, \A) {$f'_1$};
	\node at (\ecart, \B) {$f'_2$};
	\node at (\ecart, \C) {$f'_3$};
	\node at (\ecart, 10) {1};
	\node at (0, \ecart) {0};
	\node at (\a, \ecart) {$f_1$};
	\node at (\b, \ecart) {$f_2$};
	\node at (\c, \ecart) {$f_3$};
	\node at (10, \ecart) {1};
		\draw[dashed] (\a,0) -- (\a,\A);
		\draw[dashed] (0,\A) -- (\a,\A);
		\draw[dashed] (\b,0) -- (\b,\B);
		\draw[dashed] (0,\B) -- (\b,\B);
		\draw[dashed] (\c,0) -- (\c,\C);
		\draw[dashed] (0,\C) -- (\c,\C);
		\draw[dashed] (10,0) -- (10,10);
		\draw[dashed] (0,10) -- (10,10);

	\coordinate (O) at (0,0);
	\coordinate (aA) at (\a,\A);
	\coordinate (bB) at (\b,\B);
	\coordinate (cC) at (\c,\C);
	\coordinate (I) at (10,10);

	\draw[color=redColorBlind, ultra thick, densely dashed] (O) -- (I);
	\draw[color=blueColorBlind, ultra thick] (O) -- (aA) -- (bB) -- (cC) -- (I);

	\draw (O) node {$\bullet$};
	\draw (aA) node {$\bullet$};
	\draw (bB) node {$\bullet$};
	\draw (cC) node {$\bullet$};
	\draw (I) node {$\bullet$};

\end{tikzpicture}
\caption{Graph of the function $\gamma_{f \rightarrow f'}$ with $f = (0, 0.3, 0.5, 0.6, 1)$ and $f' = (0, 0.1, 0.4, 0.9, 1)$}
\label{fig:folklore:proof}
\end{figure}

Let $\tweak{f}{f'} : t \mapsto \intpart{t} + \gamma_{f \rightarrow f'}(\fract{t})$ be the distortion lifted to~$\setRgeqzero$.
It is strictly increasing, continuous on~$\setRgeqzero$ and its inverse function is $\tweak{f'}{f}$.
Moreover, it preserves the ceiling and floor of each time and maps the timestamps of $w$ on those of~$w'$: for each $i \leq N$, $\tweak{f}{f'}(\tau_i) = \intpart{\tau_i} + \gamma_{f \rightarrow f'}(\fract{\tau_i}) = \intpart{\tau_i} + \fract{\tau_i'} = \tau_i'$.
The strict monotonicity of $\gamma_{f \rightarrow f'}$ and the periodicity (period~1) of $\fract{\tweak{f}{f'}}$ ensure that the satisfaction of the guards, which compare clocks to integers, is preserved.
Indeed, when a clock is tested by a guard, its value corresponds to the difference of two times (the current time minus the time when this clock was last reset, or~$0$ if it has never been reset), let us say $t_1 - t_2$.
We have
\[t_1 - t_2 = \intpart{t_1} + \fract{t_1} - \intpart{t_2} - \fract{t_2} \text{ ;} \]
\[ \tweak{f}{f'}(t_1) - \tweak{f}{f'}(t_2) = \intpart{t_1} + \gamma_{f \rightarrow f'}(\fract{t_1}) - \intpart{t_2} - \gamma_{f \rightarrow f'}(\fract{t_2}).\]

Hence $\tweak{f}{f'}$ preserves a comparison $t_1 - t_2 \compOp k$ with $k \in \setN$ and ${\compOp} \in \set{<,=,>}$ if and only if we have
\[ \gamma_{f \rightarrow f'}(\fract{t_1}) \compOp \gamma_{f \rightarrow f'}(\fract{t_2}) \iff \fract{t_1} \compOp \fract{t_2} \] which is true because $\gamma_{f \rightarrow f'}$ is strictly increasing.

If $\run$ is 
$(\loci{0}, \clockval_0), (d_0, \edgei{0}), (\loci{1}, \clockval_1), \ldots, (\loci{n}, \clockval_n )$
where for each  $i \in \interval{0}{n-1}$, $e_i = (\loc_i , g_i, a_i , \resets_i, \loc_{i+1})$ and $a_i \in \ActionSet \cup \set{\silentaction}$, then we set $\run' = (\loci{0}, \clockval'_0), (d'_0, \edgei{0}), (\loci{1}, \clockval'_1), \ldots, (\loci{n}, \clockval'_n )$ with $d'_i = \tweak{f}{f'}(\sum\limits_{j=0}^{i} d_j) - \tweak{f}{f'}(\sum\limits_{j=0}^{i-1} d_j)$; $\clockval'_0 = \clockval_0$ and $\clockval'_{i+1} = \reset{\clockval'_i + d'_i}{\resets_i}.$

As $w \simeq w'$, both have the same untimed word, so $\run$ and~$\run'$ take a sequence of transitions whose labels form the same untimed word.
Moreover, if the action~$a_i$ is not~$\silentaction$, then we have $\sum\limits_{j=0}^{i} d'_j = \tweak{f}{f'}(\sum\limits_{j=0}^{i} d_j) = \tweak{f}{f'}(\tau_i) = \tau'_i$.
So $\run'$ has trace~$w'$.

The run~$\run'$ visits the same locations and takes the same transitions as~$\run$.
As we just proved above, all guards and invariants verified by the clock valuation within~$\run$ at a time $t \in \setRgeqzero$ are also verified in~$\run'$ at time $\tweak{f}{f'}(t)$ by the corresponding clock valuation.
Hence $\run'$ is a run of~$\TA$ of trace~$w'$.
\end{proof}

As a consequence of this result, no timed automaton can distinguish two equivalent words. 
We now show the converse result: given a timed word~$w$, there exists a timed automaton whose language is exactly the set of timed words equivalent to~$w$.

\begin{prop}\label{prop:folklore:rec}
The equivalence class of a timed word is recognizable by a~TA.
\end{prop}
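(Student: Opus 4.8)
The plan is to construct, for a given timed word~$w$ with $\untimed{w} = a_0 \cdots a_{N-1}$ and $\timed{w} = (\tau_0, \dots, \tau_{N-1})$, an explicit TA~$\TB$ whose accepted language is exactly $\set{w' \mid w' \simeq w}$. Write $n_i = \intpart{\tau_i}$ and $\phi_i = \fract{\tau_i}$. Since $\simeq$ fixes the untimed word, $\TB$ will be a single linear chain of locations $\loci{0}, \dots, \loci{N}$, with $\loci{0}$ initial, $\loci{N}$ final, and the $i$-th edge ($\loci{i} \to \loci{i+1}$) labelled by~$a_i$; all invariants are set to~$\BTrue{}$ so that time may elapse freely. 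The clock set is $\ClockSet = \set{g, \clocki{0}, \dots, \clocki{N-1}}$, where $g$ is never reset (so at every event it holds the absolute time) and $\clocki{i}$ is reset exactly on the $i$-th edge. A run of~$\TB$ reading $a_0 \cdots a_{N-1}$ at dates $t'_0 \le \cdots \le t'_{N-1}$ then produces precisely the timed word we wish to constrain, and the whole task reduces to writing guards that force $(t'_i)_i$ to be $\simeq$-equivalent to $(\tau_i)_i$.

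The three clauses defining $\simeq$ on time sequences are encoded separately on the $i$-th edge. Conditions~(1) and~(2) (same integral parts, and zero fractional part at the same positions) are local to event~$i$ and are imposed on the global clock~$g$: if $\phi_i = 0$ I add the guard $g = n_i$, and otherwise the guard $n_i < g \wedge g < n_i + 1$, which forces $\intpart{t'_i} = n_i$ together with $\fract{t'_i} \neq 0$. Condition~(3) (the fractional parts are ordered in the same way) is the delicate one, since our guards may only compare a single clock to an integer, not two clocks to each other. The key observation is that at the moment event~$i$ is read, the clock~$\clocki{j}$ reset at an earlier event $j < i$ holds the difference $t'_i - t'_j$, and \[\fract{t'_j} \le \fract{t'_i} \iff t'_j - n_j \le t'_i - n_i \iff t'_i - t'_j \ge n_i - n_j \iff \clocki{j} \ge n_i - n_j.\] Because the timestamps are non-decreasing we have $n_i - n_j \ge 0$, so $n_i - n_j$ is a genuine non-negative integer constant and the comparison is a legal guard. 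Accordingly, on the $i$-th edge I add, for every $j < i$, the guard $\clocki{j} > n_i - n_j$ if $\phi_j < \phi_i$, the guard $\clocki{j} = n_i - n_j$ if $\phi_j = \phi_i$, and the guard $\clocki{j} < n_i - n_j$ if $\phi_j > \phi_i$. The guard of the $i$-th edge is the conjunction of these atomic constraints together with the clause on~$g$.

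It remains to check that $\Trace{\TB} = \set{w' \mid w' \simeq w}$. For soundness, any accepting run reads $a_0 \cdots a_{N-1}$ by construction, the $g$-guards enforce conditions~(1)--(2) at each position, and the $\clocki{j}$-guards enforce condition~(3) for every pair $j < i$ via the equivalences above; hence its trace is $\simeq$-equivalent to~$w$. For completeness, given any $w' \simeq w$ with $\timed{w'} = (t'_0, \dots, t'_{N-1})$, the same computation shows that reading $a_0 \cdots a_{N-1}$ at the dates~$t'_i$ (delaying $t'_i - t'_{i-1} \ge 0$ before the $i$-th edge) satisfies every guard, so $w' \in \Trace{\TB}$; the corner cases of equal consecutive timestamps and of zero fractional parts are absorbed, since there $\clocki{j} = 0 = n_i - n_j$ and the equality guard holds. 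As $N$ is finite, $\TB$ has finitely many locations, $N+1$ clocks and finitely many edges, so it is a bona fide TA. The main obstacle is exactly the treatment of condition~(3): one must realise the pairwise ordering of fractional parts---an inherently two-clock, diagonal-style comparison---using only single-clock-to-constant guards, which is made possible by resetting one dedicated clock per event and by the fact that the relevant thresholds $n_i - n_j$ are non-negative integers.
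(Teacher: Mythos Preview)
Your proof is correct and follows essentially the same approach as the paper: a linear chain of locations, one clock reset per event, and guards on the $i$-th edge comparing each earlier clock $\clocki{j}$ with an integer threshold derived from $\tau_i-\tau_j$. The only cosmetic difference is that the paper packages the integral-part and ordering constraints into a single two-sided guard $\lfloor\tau_i-\tau_j\rfloor < \clocki{j} < \lceil\tau_i-\tau_j\rceil$ (respectively $\clocki{j}=\tau_i-\tau_j$ in the integer case), whereas you split them into a guard on the global clock~$g$ plus a one-sided guard on~$\clocki{j}$; once the $g$-guards fix the integral parts, the two formulations are equivalent.
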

\begin{proof}
Let $w = (a_1, \tau_1) \dots (a_n, \tau_n)$ be a timed word, and $[w]$ the set of timed words equivalent to~$w$.
We define the timed automaton $\TA_{[w]}$ as follow:

\begin{itemize}
\item The set of locations is $L = \set{\loc_i \mid i \in \interval{0}{|w|}}$;
\item The set of clocks is $\ClockSet = \set{\clock_i \mid i \in \interval{0}{|w|-1}}$;
\item For each~$i$ in $\interval{1}{|w|}$, there is a unique transition $(\loc_{i-1}, a_{i}, g_i, R ,\loc_{i})$ from~$\loc_{i-1}$ to~$\loc_{i}$ with $R = \set{x_{i}}$ if $i < |w|$ and $R = \emptyset$ otherwise, and denoting $\tau_0 = 0$:
\[g_i = \bigwedge\limits_{0 \leq j < i} g_{i,j}\]
\noindent{}with $g_{i,j} = (x_j = \tau_i - \tau_j)$ if $\tau_i - \tau_j \in \setN$ and $g_{i,j} = \lfloor \tau_i - \tau_j \rfloor < x_j < \lceil \tau_i - \tau_j \rceil$ otherwise.
\end{itemize}

Note that the timed word~$w$ is accepted by~$\TA_{[w]}$ (the value of each clock $\clock_j$, for $j < i$, is exactly $\tau_i - \tau_j$ before taking the transition labelled by~$a_i$).
We now show that the language of~$\TA_{[w]}$ is exactly~$[w]$.

Let $w'= (a'_1, \tau'_1) \dots (a'_n, \tau'_m)$ be a timed word produced by a run of~$\TA_{[w]}$.
First, as every run of~$\TA_{[w]}$ must take the same sequence of transitions,
$w'$~has the same untimed word as~$w$, in particular~$n=m$.
Moreover, the constraints $g_{i,0}$ compel the upper and lower integral parts of~$\tau'_i$  to be the same as for~$\tau_i$.
Consequently, the fractional parts of the timestamps in~$w'$ are zero if and only if the corresponding ones in~$w$ are so.

Finally, the constraints~$g_{i,j}$ on~$x_j$ determine the order of the fractional values  of~$\tau'_i$ and~$\tau'_j$.
Indeed, assuming that $\fract{\tau_i}= \fract{\tau_j}$,
then $\tau_i-\tau_j\in \setN$ and thus~$g_{i,j}$ ensures that $\tau'_i=\tau'_j$.
If $\fract{\tau_i}> \fract{\tau_j}$,
$g_{i,j}$ implies that 
\begin{align*}
&\lfloor \tau_i - \tau_j \rfloor <\tau'_i-\tau'_j < \lceil \tau_i - \tau_j \rceil \\
\iff&
0  <\fract{\tau_i}-\fract{\tau_j}< 1\\
\end{align*}
which ensures that $\fract{\tau'_i}> \fract{\tau'_j}$; and similarly, if $\fract{\tau_i}< \fract{\tau_j}$,
$g_{i,j}$ implies that $\fract{\tau'_i}< \fract{\tau'_j}$.

Therefore, $w'\simeq w$.
We have thus shown that every word accepted by~$\TA_{[w]}$ is equivalent to~$w$.
We can conclude thanks to \cref{lem:folklore:equivalence} that the traces of~$\TA_{[w]}$
are exactly those which timed word is in~$[w]$.
\end{proof}
\begin{figure}[tb]
   \centering
	\begin{tikzpicture}[pta, node distance=2cm, thin]
	
	\node[location, initial] (l0) at (-7.5, 0) {$\loci{0}$};
	\node[location] (l1) at (-5, 0) {$\loci{1}$};
	\node[location] (l2) at (-2.5, 0) {$\loci{2}$};
	\node[location] (l3) at (0, 0) {$\loci{3}$};
	\node[location] (l4) at (2.5, 0) {$\loci{4}$};
	\node[location, final] (l5) at (5, 0) {$\loci{5}$};

	\path
	(l0) edge node[align=center, above]{$0<\styleclock{\clock_0}<1$} node[align=center, below]{$\styleact{a}$\\$\styleclock{\clock_1} \assign 0$} (l1)
	(l1) edge node[align=center, above]{$\styleclock{\clock_0}=1$\\$\land~0<\styleclock{\clock_1}<1$} node[align=center, below]{$\styleact{b}$\\$\styleclock{\clock_2} \assign 0$} (l2)
	(l2) edge node[align=center, above]{$\styleclock{\clock_0}=1$\\$\land~0<\styleclock{\clock_1}<1$\\$\land~\styleclock{\clock_2}=0$} node[align=center, below]{$\styleact{c}$\\$\styleclock{\clock_3} \assign 0$} (l3)
	(l3) edge node[align=center, above]{$1<\styleclock{\clock_0}<2$\\$\land~0<\styleclock{\clock_1}<1$\\$\land~0<\styleclock{\clock_2}<1$\\$\land~0<\styleclock{\clock_3}<1$} node[align=center, below]{$\styleact{d}$\\$\styleclock{\clock_4} \assign 0$} (l4)
	(l4) edge node[align=center, above]{$3<\styleclock{\clock_0}<4$\\$\land~3<\styleclock{\clock_1}<4$\\$\land~2<\styleclock{\clock_2}<3$\\$\land~2<\styleclock{\clock_3}<3$\\$\land~2<\styleclock{\clock_4}<3$} node[align=center, below]{$\styleact{e}$} (l5)
	;
	\end{tikzpicture}
  \caption{The TA $\TA_{[w]}$ for $w = (a, 0.3) (b, 1) (c,1) (d,1.1) (e, 3.7)$}
  \label{figure:Aw}
\end{figure}
\begin{exa}
	The timed automaton $\TA_{[w]}$ whose language is the equivalence class of the timed word $w = (a, 0.3) (b, 1) (c,1) (d,1.1) (e, 3.7)$ is pictured in \cref{figure:Aw} to illustrate the construction presented in the proof of \cref{prop:folklore:rec}.
\end{exa}
\subsection{First $N$ observations}\label{sec:Nfirst}
We consider an attacker's strategy that involves wiretapping the first actions of the system until their entire budget is exhausted, allowing for a total of $N$~observations.
This setting can model an attack on a system whose security protocol prohibits any further access after a predefined number of suspected infiltrations.

Formally, this strategy is represented by the $N$-finite time selection function 
$\strategy{N}$ that associates $0$ to each timed word of length strictly smaller than~$N$ (hence authorizing the next observation), and $+\infty$ to every other words.
We thus adapt the opacity problems with respect to $\strategy{N}$.
\vspace*{0.3cm}

\defProblem{Weak / full / $\exists$-opacity \wrt{} first $N$ observations decision problem}
{A TA~$\TA$, $N\in \setN$}
{Is $\TA$ weakly / fully / $\exists$-opaque against $\strategy{N}$?}

\begin{exa}
If $w = (a,1.2)(b,1.4)(b,1.5)(a,2.1)$ is the trace of a public run of the system, and $N = 2$, then the attacker only observes the trace $\projection{2}(w) = (a,1.2)(b,1.4)$.
If $v = (a,1.2)(b,1.4)(c,1.6)$ is the trace of a private run, the trace observed by the attacker
is $\projection{2}(v) =(a,1.2)(b,1.4)$ again, and the attacker cannot conclude whether a private run occurred or not.
\end{exa}

To formally represent this framework, given a TA~$\TA$, we define a new TA (depicted in \cref{figure:finite-obs:unfolding}) which emulates the behaviour of~$\TA$ up to the \mathnth{N} observation.
This TA is an unfolding of~$\TA$ with $N+1$ copies of~$\TA$, where
$\silentaction$-transitions are taken within each copy, and transitions with an observable
action lead to the next copy.
A run ends when either a final location or the final copy is reached.

\begin{defi}[$N$-observation unfolded TA]

	Let $\TA = \TAprivextend$ be a TA and let $N \in \setN$.
	We call \emph{$N$-unfolded TA} of~$\TA$ the TA $\Unfold{\TA} = (\ActionSet, \LocSet', \loci{0}^0, \PrivSet', \FinalSet', \ClockSet, \invariant', \EdgeSet')$ where
	\begin{enumerate}
		\item $\LocSet' = \bigcup\limits_{i=0}^{N} \LocSet^i$ where the sets $\LocSet^i$ are $N+1$ disjoint copies of $\LocSet$ where each location $\loc \in \LocSet$ has been renamed $\loc^i$ in~$\LocSet^i$: that is, for $0 \leq i \leq N$, $\LocSet^i = \lbrace \loc^i \mid \loc \in \LocSet \rbrace$;
		\item $\loci{0}^0 \in \LocSet^0$ is the initial location;
		\item $\PrivSet' = \bigcup\limits_{i=0}^{N-1} \PrivSet^i$
where $\PrivSet^i$ are the copies within $\LocSet^i$ of the private locations of~$\TA$;
		\item $\FinalSet' = (\bigcup\limits_{i=0}^{N} \FinalSet^i)\cup \LocSet^N$ where $\FinalSet^i$ are the copies within $\LocSet^i$ of the final locations of~$\TA$;
		\item $\invariant' (\loc^i) = \invariant (\loc)$ for $l\in \LocSet$ and $i\leq N$; %
		\item $\EdgeSet' = \bigcup\limits_{i=0}^{N-1} \EdgeSet^i \cup  \EdgeSet^{i \rightarrow i+1}$ is the set of transitions where, given $0 \leq i<N$
		\begin{itemize}
		\item $\EdgeSet^i = \big\{ (\loc^i, \silentaction, g, R, \loc'^i) \mid (\loc, \silentaction, g, R, \loc') \in \EdgeSet \big\}$, and
		\item  $\EdgeSet^{i \rightarrow i+1} = \big\{ (\loc^i, a, g, R, \loc'^{i+1}) \mid (\loc, a, g, R, \loc') \in \EdgeSet \land a \in \ActionSet \big\}$.
		\end{itemize}
	\end{enumerate}
\end{defi}

\begin{figure}[tb]
   \centering
	\begin{tikzpicture}[pta, node distance=2cm, thin]

	\node[location, initial] (l0) at (-3, 0) {$\locinit^0$};
	\node[location] (li) at (-2.25, 0.5) {$\loci{i}^0$};
	\node[location] (lj) at (-1, 0.25) {$\loci{j}^0$};
	\node[rectangle, minimum width=3cm, minimum height=2cm, align=center, draw, black!60] (rectangleA) at (-2cm, 0cm) {$\TA_0$};
	\node[location, final] (lf0) at (-1.5,-0.5) {$\locfinal^0$};

	\node[location] (lk) at (+1.25, 0) {$\loci{k}^1$};
	\node[rectangle, minimum width=3cm, minimum height=2cm, align=center, draw, black!60] (rectangleB) at (2cm, 0cm) {$\TA_1$};
	\node[location, final] (lf1) at (2.5,-0.5) {$\locfinal^1$};

	\node (ln) at (4,0) {\dots};
	\node[location,final] (lm) at (+5, 0) {$\loci{m}^N$};
	\node[rectangle, minimum width=3cm, minimum height=2cm, align=center, draw, black!60] (rectangleB) at (6cm, 0cm) {$\TA_N$};

	\path
	(li) edge node[align=center, above]{$\styleact{\silentaction}$} (lj)
	(lj) edge node[align=center, below]{$\styleact{a_1}$%
	} (lk)
	(ln) edge[bend right] node[align=center, below]{\ \ \ \ \ $\styleact{a_N} $ %
	} (lm)
	;
	\end{tikzpicture}
  \caption{The construction of an $N$-observation unfolded TA}
  \label{figure:finite-obs:unfolding}
\end{figure}

The set of traces of the resulting automaton $\Unfold{\TA}$ is exactly the set of traces of~$\TA$ truncated after the \mathnth{N} letter, \ie{} $\projection{N}(\Trace{\TA})$.
Indeed, all runs from~$\TA$ also exist in~$\Unfold{\TA}$ but the number of the current copy increases by one each time a letter is observed during the run, and once it has reached~$N$, all observable transitions of~$\TA$ become non-observable so that only the first $N$~letters of each word can be seen.

Testing weak/full/$\exists$-opacity of a TA $\TA$ against the time selection 
function~$\strategy{N}$ is thus equivalent to testing weak/full/$\exists$-opacity
of $\Unfold{\TA} $. This remark allows us to handle $\exists$-opacity.

\begin{prop}
The $\exists$-opacity \wrt{} first $N$ observations decision problem is \PSPACE{}-complete.
\end{prop}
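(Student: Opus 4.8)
The plan is to exploit the equivalence just established: $\TA$ is $\exists$-opaque against $\strategy{N}$ if and only if $\Unfold{\TA}$ is $\exists$-opaque in the ordinary sense. This reduces the problem to ordinary $\exists$-opacity, which by \cref{theorem:decidability-exists-opacity} is inter-reducible with reachability in TAs and hence \PSPACE{}-complete. For the upper bound, I would apply the $\exists$-opacity-to-reachability reduction of \cref{theorem:decidability-exists-opacity} to $\Unfold{\TA}$: form the product of the private and public automata derived from $\Unfold{\TA}$, whose language is $\projection{N}(\PrivateTr{\TA}) \cap \projection{N}(\PublicTr{\TA})$, and test reachability of a final location.

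The one point requiring care is that $\Unfold{\TA}$ consists of $N+1$ copies of~$\TA$, so it is only of polynomial size when $N$ is encoded in unary; for a binary encoding it is exponentially large and cannot be built explicitly. I would therefore avoid constructing it, and instead run reachability on the fly over the region automaton of the product. A configuration there is a pair of locations of $\Unfold{\TA}$ (each tagged with the $S/\bar S$ bit of the private-runs construction) together with a clock region; crucially, $\Unfold{\TA}$ introduces no new clocks, so the clock regions are exactly those induced by the clocks of~$\TA$ (doubled by the product), each describable in polynomial space, while a location of $\Unfold{\TA}$ is a location of~$\TA$ decorated with a copy index $i \leq N$ needing only $\log N$ bits even in binary. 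Every configuration is thus polynomially describable and its successors computable in polynomial space, so guessing a path step by step decides reachability in nondeterministic polynomial space, which equals \PSPACE{} by Savitch's theorem.

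For the lower bound, I would reuse the reduction from reachability given in the proof of \cref{theorem:decidability-exists-opacity}, which from a TA $\TA$ builds a TA $\TA'$ that is $\exists$-opaque exactly when a final location of $\TA$ is reachable. The key observation is that this reduction is insensitive to truncation: in $\TA'$ the public component produces every timed word over $\ActionSet$, so whenever a private run of $\TA'$ exists its first-$N$ truncation is automatically matched by a public run. Hence $\TA'$ is $\exists$-opaque against $\strategy{N}$ if and only if a final location of $\TA$ is reachable, for any fixed $N \geq 1$ (indeed already for $N=1$), which establishes \PSPACE{}-hardness regardless of how $N$ is encoded.

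The main obstacle is purely the succinctness issue in the membership direction: the equivalence with $\Unfold{\TA}$ yields decidability immediately, but turning it into a genuine \PSPACE{} bound requires the on-the-fly, symbolic treatment of the exponentially many copies rather than a naive construction. The hardness direction, by contrast, is essentially inherited from the general $\exists$-opacity hardness, since bounding the number of observations does not weaken the reduction.
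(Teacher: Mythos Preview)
Your proposal is correct and follows essentially the same approach as the paper. For membership, both you and the paper reduce to reachability on $\Unfold{\TA}$ via \cref{theorem:decidability-exists-opacity} and observe that the exponential blow-up affects only the locations, not the clocks, so the standard on-the-fly region-automaton reachability stays in \PSPACE{}; your write-up is just a bit more explicit about the symbolic configuration encoding. For hardness, there is a minor variation: the paper builds a fresh single-observation gadget (making all of~$\TA$ silent and adding a public/private fork labelled by one fresh action just before a new final location), whereas you reuse the construction of \cref{figure:exists-opacity:PSPACE-hardness} verbatim and argue that truncating to $N$ observations is harmless because the public side already produces every timed word. Both arguments are valid and of the same spirit; yours has the advantage of not introducing a new construction, while the paper's makes the $N=1$ case completely self-contained.
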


\begin{proof}
By \cref{theorem:decidability-exists-opacity} and the previous remark, the $\exists$-opacity problem against time selection function~$\strategy{N}$ for $\TA$ is inter-reducible
with the reachability problem for $\Unfold{\TA}$.
While $\Unfold{\TA}$ is exponentially larger than $\TA$, this blow up is limited to the number of states (and crucially not the number of clocks), meaning that 
applying the standard algorithm for reachability on $\Unfold{\TA}$ still provides
a \PSPACE{} algorithm.

Concerning \PSPACE{}-hardness, we adapt the reduction of \cref{figure:exists-opacity:PSPACE-hardness} to a single observation.
Formally, let $\TA = (\ActionSet, \LocSet, \locinit, \set{\locfinal}, \ClockSet, I, E)$ be a timed automaton.
Let $\TA' = (\set{a}, \LocSet \cup \setsmall{\loc_{\text{pub}}, \loc_{\text{priv}}, \locfinal'}, \locinit, \setsmall{\locfinal'}, \ClockSet, I', E')$ be the timed automaton (see \cref{fig:firstN:hardness-exists}) with
	\begin{itemize}
	\item $I'$ an extension of~$I$ such that $I'(\loc_{\text{pub}}) = I'(\loc_{\text{priv}}) = I'(\locfinal') = \BTrue{}$;
	\item $E' = \setsmall{(\loc, \silentaction, g, R, \loc') \mid \exists b \in \ActionSet~ (\loc, b, g, R, \loc') \in E} \\ \cup \setsmall{(\locfinal, \silentaction, \BTrue{}, \emptyset, \loc_{\text{pub}}), (\locfinal, \silentaction, \BTrue{}, \emptyset, \loc_{\text{priv}}), (\loc_{\text{pub}}, a, \BTrue{}, \emptyset, \locfinal'),(\loc_{\text{priv}}, a, \BTrue{}, \emptyset, \locfinal')}$.
	\end{itemize}
	\begin{figure}[h]
\begin{center}
\begin{tikzpicture}[pta, node distance=2cm, thin]
	\node[location, initial, rectangle, minimum width=3cm, minimum height=2cm] (l0) at (0, 0) {$\TA$};
	\node[location] (lf) at (1, 0) {$\locfinal$};
	\node[location] (lpub) at (2.5,1) {$\loc_{\text{pub}}$};
	\node[location] (lpriv) at (2.5, -1) {$\loc_{\text{priv}}$};
	\node[location, final] (lf') at (4, 0) {$\locfinal'$};
	\path (lf) edge node[above] {$\styleact{\silentaction}$} (lpub);
	\path (lf) edge node[below] {$\styleact{\silentaction}$} (lpriv);
	\path (lpub) edge node[above] {$\styleact{a}$} (lf');
	\path (lpriv) edge node[below] {$\styleact{a}$} (lf');
\end{tikzpicture}
\end{center}
	\caption{Reduction from reachability to $\exists$-opacity against $\sigma_1$}
	\label{fig:firstN:hardness-exists}
	\end{figure}
Then, $\TA'$ is $\exists$-opaque with respect to 1 observation if and only if $\locfinal$ is reachable in~$\TA$.
Indeed, the runs reaching~$\locfinal'$ are obtained by extending the runs reaching~$\locfinal$, which can be done both publicly (through $\loc_{\text{pub}}$) or privately (through $\loc_{\text{priv}}$) with the same observation.
This ensures that $\TA'$ is fully opaque. 
In order to be $\exists$-opaque, it however needs that there is at least one run reaching~$\locfinal'$ which is equivalent to reachability of~$\locfinal$.
\end{proof}

When we defined the tick automaton in \cref{section:opacity:discrete-time}, we added a tick clock which counted the time elapsed since the start of the run, and thanks to it we got within the untimed trace of the runs the integral part of all timestamps of observed actions.
By completing this with the order of the timestamps' fractional parts, one can get enough information in the untimed trace to determine the equivalence class of the associated timed words. However, to keep track of this order along the run in the~TA, we need to use one clock per observation, which is reset whenever it reaches~1.
We thus cannot do this for an unbounded number of observations, which explains why our construction solves the opacity problem for the first $N$~observations, but not for an unbounded number of observations---which is undecidable~\cite{Cassez09}.
We then add a gadget plugged after each final location of the unfolded~TA to make the outcome of this ordering of timestamps fractional parts appear explicitly in the untimed trace of the run.
This extends the run by less than two time units.
In the gadget, we make visible the (possibly simultaneous) reset of the added clocks with a tick alphabet $\setsmall{f_K \mid K \subseteq \ClockSet \land K \neq \emptyset}$, ``$f$'' standing for ``fractional part''.

This way, a suffix with untimed word $f_{\setsmall{0,2}}f_{\setsmall{3}}f_{\setsmall{1}}$ can be read after the observations $(a, \tau_1), (b, \tau_2) , (c, \tau_3)$ if and only if $0 = \fract{\tau_2} < \fract{\tau_3} < \fract{\tau_1}$.
With the global time tick clock labelling the integral times (as in \cref{section:opacity:discrete-time} for the discrete time setting), the timed word $(a,1.4)(b,2)(c,5.1)$ now produces the untimed word $tatbtttctf_{\setsmall{0,2}}f_{\setsmall{3}}f_{\setsmall{1}}$.

We denote by~$\ClockSet_K$ the set of added clocks with indices in $K \subseteq \interval{0}{N}$: $\ClockSet_K = \setsmall{x_k \mid k \in K}$.
We also denote for each $J \subseteq \interval{0}{N}$:
\begin{align*}
	g_J 			& = \bigwedge\limits_{x \in \ClockSet_J}(x = 1) \land \bigwedge\limits_{y \in \ClockSet_{\interval{1}{N}\setminus J}}(0<y<1) \\
	g^{(0)}_J 		& = \bigwedge\limits_{x \in \ClockSet_J}(x = 1) \land \bigwedge\limits_{y \in \ClockSet_{\interval{0}{N}\setminus J}}(0<y<1) \\
	g_{<1}			& = \bigwedge\limits_{x \in \ClockSet_{\interval{1}{N}}}(x < 1) \\
	g^{(0)}_{<1}	& = \bigwedge\limits_{x \in \ClockSet_{\interval{0}{N}}}(x < 1)
\end{align*}

\begin{defi}[The ``tick'' construction]\label{def:tick-construction}
	Let $\TA$ be a TA, $N \in \setN$ and let $\Unfold{\TA} = (\ActionSet, \LocSet', \loci{0}^0, \PrivSet', \FinalSet', \ClockSet, \invariant', \EdgeSet')$ be the $N$-unfolded TA of~$\TA$.
	We define the ``tick'' construction $\TickN{\TA} = (\ActionSet', \LocSet'', \loci{0}^0, \PrivSet', \setsmall{\loc_{G1}}, \ClockSet', \invariant', \EdgeSet'')$ where
	\begin{enumerate}
		\item $\ActionSet' = \ActionSet \cup \setsmall{t} \cup \ActionSet_N$ where $\ActionSet_N = \set{f_K \mid K \subseteq \interval{0}{N}, K \neq \emptyset}$ is the alphabet labelling the transitions resetting the set of added clocks~$\ClockSet_{\interval{0}{N}}$ (see below);
		\item $\LocSet'' = \LocSet' \cup \setsmall{\loc_{G0}, \loc_{\textit{G1}}}$
		\item $\ClockSet' = \ClockSet \cup \ClockSet_{\interval{0}{N}}$ where $\ClockSet_{\interval{0}{N}}$ is the set of added clocks: $x_0$ is the global time ``tick'' clock and each other $x_i$ is associated with the \mathnth{i} observation;
		\item $\EdgeSet'' = \bigcup\limits_{i=0}^{N-1} \EdgeSet^i \cup  \EdgeSet^{i \rightarrow i+1} \cup \EdgeSet_t \cup \EdgeSet_{\gamma} \cup \EdgeSet_G$ is the set of transitions where, given $0 \leq i<N$
		\begin{itemize}
		\item $\EdgeSet^i = \big\{ (\loc^i, \silentaction, g \land g^{(0)}_{<1}, R, \loc'^i) \mid (\loc^i, \silentaction, g, R, \loc'^i) \in \EdgeSet' \big\}$;
		\item  $\EdgeSet^{i \rightarrow i+1} = \big\{ (\loc^i, a, g \land g^{(0)}_{<1}, R \cup \lbrace x_{i+1} \rbrace, \loc'^{i+1}) \mid (\loc^i, a, g, R, \loc'^{i+1}) \in \EdgeSet' \big\}$;
		\item $ \EdgeSet_t = \big\{ (\loc, t, x_0 = 1, \setsmall{x_0}, \loc) \mid \loc \in \LocSet' \setminus \FinalSet' \big\}$;
		\item $ \EdgeSet_{\gamma} = \big\{ (\loc, \silentaction, g_I, \ClockSet_I, \loc) \mid \loc \in \LocSet' \land I \subseteq \interval{1}{N} \land I \neq \emptyset \big\}$;
		\item $ \EdgeSet_{G} = \big\{ (\loc_f, f_{K \cup \setsmall{0}}, g^{(0)}_{K \cup \set{0}}, \ClockSet_{K \cup \setsmall{0}}, \loc_{G0}) \mid \loc_f \in \FinalSet' \land K \subseteq \interval{1}{N} \big\} \cup
		\big\{ (\loc_{G0}, f_K, g^{(0)}_K, \ClockSet_K, \loc_{G0}) \mid K \subseteq \interval{1}{N} \big\} \cup
		\big\{ (\loc_{G0}, \silentaction, g^{(0)}_{K \cup \set{0}}, \emptyset, \loc_{G1}) \mid K \subseteq \interval{1}{N} \big\} $.
		\end{itemize}
	\end{enumerate}
\end{defi}

Note that the global time ``tick'' clock~$x_0$ is reset exactly every time unit, and so are the other added clocks since their first reset, occuring with the corresponding observation. Hence the fractional part of the timestamp of each observation is encoded by the associated added clock. 

See \cref{figure:finite-obs:tick} in \cref{appendix:tick-construction} for an illustration of the ``tick'' construction.

\begin{defi}[Ticked word]
Let $w = (a_1, \tau_1) (a_2, \tau_2) \dots (a_N, \tau_N)$ be a timed word.
Then the \emph{ticked word} of~$w$ is
\[\Tick{w} = t^{\intpart{\tau_1}} a_1 t^{\intpart{\tau_2} - \intpart{\tau_1}} a_2 \dots t^{\intpart{\tau_{N}} - \intpart{\tau_{N-1}}} a_N f_{K_0} f_{K_1} f_{K_2} \dots f_{K_m}\]
where $m = \vert \setsmall{ \fract{\tau_i} \mid i \leq N} \setminus \set{0}\vert$ (several observation timestamps may have the same fractional part, in which case $m < N$), $K_i = \set{i \in \interval{0}{N} \mid \fract{\tau_i} = \gamma_i}$ with $\gamma_0 = 0$ and $\gamma_i = \min \big(\set{\fract{\tau_j} \mid j \in \interval{1}{N}} \setminus \set{\gamma_k \mid k < i} \big)$ for $i \in \interval{0}{m}$.
\end{defi}

This definition extends to timed languages, defining what we call \emph{ticked languages}.
The untimed language of $\TickN{\TA}$ is $\Tick{\projection{N}(\Trace{\TA})}$, and is the ticked language of $\projection{N}(\Trace{\TA})$.
Indeed, the guard $g^{(0)}_{<1}$ on each transition that was already in~$\Unfold{\TA}$ ensures that the clocks~$x_i$ (for $0 \leq i \leq N$) have a value strictly less than~$1$.
This does not affect the behaviour of the system since the new loops with guard~$g_I$ (for each $I \subseteq \interval{1}{N}$) allow to reset these clocks as soon as they reach~$1$, and this is also done for the clock~$x_0$ by the $t$-labelled transitions.
This way, the letter~$t$ occurs only at integral times.
When the current copy of~$\TA$ is the copy number~$i$, the clocks~$x_j$ for $j > i$ are all synchronized with~$x_0$ since their only resets occurred at the same time.
The clocks $x_k$ for $k \leq i$ are equal to the fractional part of the time elapsed since the~\mathnth{k} observation because they are reset during this observation and then only at integral numbers of time units after that.

\begin{exa}
See in \cref{table:example-computation} the computation of the ticked word associated with the timed word $(a,1.2)(b,1.5)(c,2)(d,2.3)$, with the corresponding clock valuations in the ``tick'' construction for $N = 4$.
Note that this construction solely depends on the timed word, not on a given~TA, which explains why we do not need to assume any particular TA in this example.

\begin{table}
\caption{Computation of the ticked word associated with the timed word $(a,1.2)(b,1.5)(c,2)(d,2.3)$}
\label{table:example-computation}
\[\begin{array}{c|c|c|c|c|c|c|c}
\rowcolor{USPNsable}\textbf{global time} & x_0 & x_1 & x_2 & x_3 & x_4 & \textbf{ticked word} & \textbf{copy id} \\
\hline
0				   & 0	 &  0  &  0  &  0  &  0  &    & 0\\
1				   & 0	 &  0  &  0  &  0  &  0  &  t & 0\\
1.2				   & 0.2 &  0  & 0.2 & 0.2 & 0.2 &  a & 0\rightarrow 1\\
1.5				   & 0.5 & 0.3 &  0  & 0.5 & 0.5 &  b & 1\rightarrow 2\\
2				   & 0   & 0.8 & 0.5 &  0  &  0  &  tc& 2,2\rightarrow 3\\
2.2				   & 0.2 &  0  & 0.7 & 0.2 & 0.2 &    & 3\\
2.3				   & 0.3 & 0.1 & 0.8 & 0.3 &  0  &  d & 3 \rightarrow 4\\
2.5				   & 0.5 & 0.3 &  0  & 0.5 & 0.2 &    & 4 \\
3				   & 0   & 0.8 & 0.5 &  0  & 0.7 &  f_{\set{0,3}} & 4 \\
3.2				   & 0.2 &  0  & 0.7 &  0  & 0.2 &  f_{\set{1}} & 4 \\
3.3				   & 0.3 & 0.1 & 0.8 & 0.1 & 0.3 &  f_{\set{4}} & 4 \\
3.5				   & 0.5 & 0.3 &  0  & 0.3 & 0.5 &  f_{\set{2}} & 4 \\
\end{array}\]
\end{table}

\end{exa}

The idea is to bring enough information into the untimed traces to solve opacity in a finite (untimed) automaton.
As we prove in the following (\cref{lem:untimedtick}), this ticked trace corresponds exactly to the equivalence class of the timed word~$w$, with the relation~$\simeq$ presented in \cref{sec:techtools}.
A timed language being a union of such equivalence classes (\cref{lem:folklore:equivalence}), we can compare the languages $\projection{N}(\PrivateTr{\TA})$ and $\projection{N}(\PublicTr{\TA})$ by comparing their associated ticked languages.

Let $\Regions{\TickN{\TA}}$ be the region automaton of this automaton, whose language is exactly $\Tick{\projection{N}(\TA)}$, the untimed projected traces of~$\TA$ with~$\projection{N}$ enriched with timing information given by the ticks.

Thanks to these added ticks, paths of $\Regions{\TickN{\TA}}$ sharing the same trace correspond to runs of~$\TA$ for which the (at most) $N$ observations occurred within the same time intervals (due to the ticks representing the total time), and for which the fractional part of the timing of those $N$~observations have the same order (given by the sequence of~$f_i$ added at the end of the trace).

\begin{lem}\label{lem:untimedtick}
Let $w$ and~$v$ be two timed words.
Then $w \simeq v$ if and only if $\Tick{w} = \Tick{v}$.
\end{lem}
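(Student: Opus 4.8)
The plan is to show that the ticked word is a \emph{complete invariant} for the equivalence $\simeq$: it records exactly the three pieces of data that define $\simeq$ on timed words (the untimed word, the integral parts of the timestamps, and the zero-pattern together with the order of their fractional parts), and nothing else. I would start by noting that $\Tick{w}$ splits canonically into two blocks living over disjoint sub-alphabets. The prefix $t^{\intpart{\tau_1}} a_1\, t^{\intpart{\tau_2}-\intpart{\tau_1}} a_2 \cdots a_N$, read over $\ActionSet \cup \set{t}$, determines the action sequence $\untimed{w}$ (the $\ActionSet$-letters in order) and the integral parts: the number of $t$'s occurring before $a_i$ telescopes to $\intpart{\tau_i}$. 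The suffix $f_{K_0} f_{K_1} \cdots f_{K_m}$, read over the $f$-alphabet, encodes the ordered partition of the index set $\interval{0}{N}$ induced by fractional values, where $K_0$ is the block of indices whose fractional part is $0$ (always containing the reserved index~$0$) and $K_1, \dots, K_m$ list the remaining blocks in strictly increasing order of their common fractional value. The whole proof is then a direct unfolding of definitions matching these two blocks against the two clauses defining $w \simeq v$.

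For the forward direction, suppose $w \simeq v$. Then they have the same length $N$, the same untimed word, and equivalent time sequences. Equality of the prefixes is immediate: the $a_i$ agree since $\untimed{w} = \untimed{v}$, and the $t$-exponents agree since $\intpart{\tau_i} = \intpart{\tau'_i}$ for all $i$. For the suffix, conditions~(2) and~(3) of time-sequence equivalence give precisely that the two ordered partitions coincide. Indeed, by antisymmetry, condition~(3) forces $\fract{\tau_i} = \fract{\tau_j} \iff \fract{\tau'_i} = \fract{\tau'_j}$ and $\fract{\tau_i} < \fract{\tau_j} \iff \fract{\tau'_i} < \fract{\tau'_j}$, so the blocks and their order match, while condition~(2) identifies the same zero-block~$K_0$. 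Hence the greedy enumeration defining the $\gamma_i$ yields the same $m$ and the same sets $K_i$ for both words, and $\Tick{w} = \Tick{v}$.

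For the converse, suppose $\Tick{w} = \Tick{v}$. Since the sub-alphabets $\ActionSet$, $\set{t}$ and the $f$-alphabet are pairwise disjoint, the two blocks are read off unambiguously and must agree separately. Equality of the $\ActionSet$-projection gives $\untimed{w} = \untimed{v}$, in particular the same length $N$; equality of the $t$-exponents, telescoped, yields $\intpart{\tau_i} = \intpart{\tau'_i}$ for all $i$, which is condition~(1). Equality of the suffix gives the same ordered partition: the common $K_0$ is exactly condition~(2), and the common ordered list $K_1, \dots, K_m$ is exactly condition~(3). Therefore $\timed{w} \simeq \timed{v}$ and $\untimed{w} = \untimed{v}$, that is, $w \simeq v$.

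The only subtlety I would treat with care is the bookkeeping around the reserved index~$0$, which stands for the global tick clock $x_0$: by convention its fractional part is constantly $0$, so it always lies in $K_0$ and appears identically in every ticked word. Consequently it can neither distinguish two words nor accidentally merge them, and its presence merely guarantees $K_0 \neq \emptyset$; I would make explicit the (purely notational) reindexing between the length-$N$ time sequences of the definition of $\simeq$ and the $\interval{0}{N}$-indexing used in the ticked word so that this observation is unambiguous. Apart from this, the argument is purely combinatorial on timed words and requires none of the run-level distortion machinery of \cref{lem:folklore:equivalence}.
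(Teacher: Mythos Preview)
Your proof is correct and follows essentially the same approach as the paper's: both unfold the definitions and match the prefix of $\Tick{w}$ against the untimed word and integral parts, and the suffix $f_{K_0}\cdots f_{K_m}$ against the zero-pattern and fractional-part order. The paper presents this as a terse chain of iff's (after introducing the convention $\tau_0 = 0$ to handle the reserved index), while you spell out the alphabet-disjointness argument more explicitly, but the substance is identical.
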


\begin{proof}

Let $w$ and~$v$ be two timed words.
If $|w| \neq |v|$ then $w$ and $v$ are not equivalent and $\Tick{w} \neq \Tick{v}$, so we can suppose now that $|w| = |v| = N$.
For convenience, we set $\tau_0^w = \tau_0^v = 0$.

For $u \in \set{w, v}$ we denote $u = (a_1^u, \tau_1^u) \dots (a_N^u, \tau_N^u)$ and we have
\[ \Tick{u} = t^{\intpart{\tau_1^u}} a_1^u t^{\intpart{\tau_2^u} - \intpart{\tau_1^u}} a_2^u \dots t^{\intpart{\tau_N^u} - \intpart{\tau_{N-1}^u}} a_N^u f_{K^u_0} f_{K^u_1} f_{K^u_2} \dots f_{K^u_{m^u}}\]
with $(K_i)_{0 \leq i \leq m^u}$ the partition of $\interval{0}{N}$ such that for all $j \in K_i^u$ and $k \in  K_l^u$, we have $\fract{\tau_j^u} \leq \fract{\tau_k^u}$ if and only if $i \leq l$. 

Then, according to the definitions:
\begin{equation*}
\begin{split}
w \simeq v & \iff \left\{ \begin{array}{l} \untimed{w} = \untimed{v} \\ \land \forall i \in \interval{1}{N} , \intpart{\tau_i^w} = \intpart{\tau_i^v} \\ \land \forall i \in \interval{1}{N} , \fract{\tau_i^w} = 0 \iff \fract{\tau_i^v} = 0 \\ \land \forall i,j \in \interval{1}{N} , \fract{\tau_i^w} \leq \fract{\tau_j^w} \iff \fract{\tau_i^v} \leq \fract{\tau_j^v} \end{array}\right.\\
& \iff \left\{ \begin{array}{l} \forall i \in \interval{1}{N} , a_i^w = a_i^v \\ \land \forall i \in \interval{1}{N} , \intpart{\tau_i^w} - \intpart{\tau_{i-1}^w} = \intpart{\tau_i^v} - \intpart{\tau_{i-1}^v} \\ \land K_0^w = K_0^v \\ \land \forall i \in \interval{1}{m^v} , K_i^w = K_i^v \end{array}\right. \\
& \iff \Tick{w} = \Tick{v} \qedhere
\end{split}
\end{equation*}%
\end{proof}

\begin{lem}\label{cor:inclusion-tick}
Timed language inclusion is equivalent to ticked language inclusion.
\end{lem}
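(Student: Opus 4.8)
The plan is to prove the equivalence as a pair of inclusions $\Language_1 \subseteq \Language_2 \iff \Tick{\Language_1} \subseteq \Tick{\Language_2}$ for the timed languages $\Language_1, \Language_2$ we care about (each is the language of a TA — e.g. $\projection{N}(\PrivateTr{\TA})$ and $\projection{N}(\PublicTr{\TA})$ are recognized by suitable unfoldings), where by the \emph{ticked language} of $\Language_i$ we mean $\Tick{\Language_i} = \set{\Tick{w} \mid w \in \Language_i}$. The crucial preliminary observation I would isolate first is that every such timed language is \emph{saturated} under the equivalence relation $\simeq$: this is precisely what \cref{lem:folklore:equivalence} gives, since if a TA accepts $w$ and $w \simeq w'$ then it accepts $w'$ too. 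Hence each $\Language_i$ is a union of $\simeq$-equivalence classes, and this is the single structural fact the whole argument rests on.

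The forward direction I expect to be immediate: $\Tick{\cdot}$ is applied letter-sequence-wise to individual words, so $\Language_1 \subseteq \Language_2$ trivially yields $\Tick{\Language_1} \subseteq \Tick{\Language_2}$.

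For the converse I would assume $\Tick{\Language_1} \subseteq \Tick{\Language_2}$ and take an arbitrary $w \in \Language_1$. Then $\Tick{w} \in \Tick{\Language_1} \subseteq \Tick{\Language_2}$, so there exists $v \in \Language_2$ with $\Tick{v} = \Tick{w}$. Applying \cref{lem:untimedtick}, the equality $\Tick{v} = \Tick{w}$ is equivalent to $v \simeq w$; and since $\Language_2$ is saturated under $\simeq$ and contains $v$, it must contain $w$ as well. This shows $w \in \Language_2$ and hence $\Language_1 \subseteq \Language_2$.

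The manipulation above is short, so I do not anticipate a genuine obstacle in the set-theoretic step; the only place where something must really be verified is the saturation of the timed languages under $\simeq$, which I would not reprove but inherit directly from \cref{lem:folklore:equivalence}, combined with the injectivity-up-to-equivalence of $\Tick{\cdot}$ supplied by \cref{lem:untimedtick}. Once these two ingredients are in place, the same reasoning applied symmetrically shows that \emph{equality} of timed languages is equivalent to equality of their ticked images — the form actually needed downstream, since it lets us reduce full and weak opacity against $\strategy{N}$ (an equality, resp.\ inclusion, of $\projection{N}$-projected trace languages) to the corresponding decidable question on the finite ticked languages recognized by the region automata $\Regions{\TickN{\cdot}}$.
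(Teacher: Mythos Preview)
Your proof is correct and follows essentially the same approach as the paper: both arguments combine the saturation of TA languages under $\simeq$ (\cref{lem:folklore:equivalence}) with the fact that $\Tick{\cdot}$ is a complete invariant for $\simeq$ (\cref{lem:untimedtick}) to conclude that $\Language_1\subseteq\Language_2$ iff $\Tick{\Language_1}\subseteq\Tick{\Language_2}$. Your presentation is slightly more explicit in separating the two directions and in noting that saturation is the only nontrivial ingredient, but the substance is identical.
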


\begin{proof}
Let $L$ be a timed language with alphabet $\ActionSet$, and let $s \in \Tick{L}$.
According to \cref{lem:untimedtick}, there is a unique timed word equivalence class~$W$ such that $s$ is the ticked word of all timed words in~$W$.
As established in \cref{lem:folklore:equivalence} a timed language is a disjoint union of such word equivalence classes.
So $W \subseteq L$, and therefore we have
\[ L = \biguplus\limits_{s \in \Tick{L}} \setsmall{w \in \TimedWords{\ActionSet} \mid \Tick{w} = s}\text{.}\]
It follows that if $L'$ is another timed language, $\Tick{L} \subseteq \Tick{L'}$ if and only if $L \subseteq L'$. 
\end{proof}

Recall from \cref{definition:N-bounded} that a language~$L$ is $N$-bounded if no word it contains has a length greater than~$N$.
Since the private and public languages of an $N$-unfolded TA are $N$-bounded, we address in \cref{prop:Nfirst:inclusion} the $N$-bounded language inclusion problem.
This result immediately answers weak and full opacity, as
the reductions of \cref{sec:reducWincl} still hold when considering $N$-bounded languages,
and thus inclusion (resp.\ equality) of $N$-bounded languages and weak (resp.\ full) opacity against the time selection function $\strategy{N}$ are inter-reducible.

We can decide $N$-bounded language inclusion by using the fact that the tick language of a bounded language is regular and finite. Instead of proceeding naively for this computation and get a \twoEXPSPACE{} algorithm, we present an algorithm with optimal theoretical complexity.

\begin{thm}\label{prop:Nfirst:inclusion}
Inclusion of $N$-bounded languages is a \coNEXPTIME{}-complete problem.
\end{thm}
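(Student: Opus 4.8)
The plan is to prove membership in \coNEXPTIME{} and matching \coNEXPTIME{}-hardness separately. For the upper bound, the starting point is \cref{cor:inclusion-tick}: deciding $\Trace{\TA}\subseteq\Trace{\TB}$ for $N$-bounded languages is equivalent to deciding the inclusion of the ticked languages $\Tick{\Trace{\TA}}\subseteq\Tick{\Trace{\TB}}$, and by construction these ticked languages are exactly the (untimed, regular) languages recognised by the region automata $\RegionAutomaton{\TickN{\TA}}$ and $\RegionAutomaton{\TickN{\TB}}$. Each of these finite automata has a number of locations polynomial in $N$ and in $|\TA|$ (resp.\ $|\TB|$), but an exponential number of regions, because of the binary-encoded constants together with the $N+1$ extra tick clocks; hence both automata are of single-exponential size. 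A naive complementation would determinise $\RegionAutomaton{\TickN{\TB}}$ and incur a second exponential, yielding only the \twoEXPSPACE{} bound that we want to beat.

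The key is therefore a structural bound on the length of a shortest witness of non-inclusion. I would first note that, since membership of a timed word in $\Trace{\TA}$ or $\Trace{\TB}$ depends only on its $\simeq$-class (\cref{lem:folklore:equivalence}), a witness of non-inclusion is a $\simeq$-class, equivalently (by \cref{lem:untimedtick}) a single ticked word~$s$, that is accepted by $\RegionAutomaton{\TickN{\TA}}$ but rejected by $\RegionAutomaton{\TickN{\TB}}$. By \cref{definition:N-bounded}, such an~$s$ carries at most~$N$ action letters and at most~$N$ fractional-part letters~$f_K$, so its length is governed by the number of~$t$'s, \ie{} by the integral parts of the timestamps. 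Letting $M$ be the largest constant appearing in $\TA$ and~$\TB$, I would show that any maximal block of consecutive~$t$'s of length exceeding $M+1$ can be shortened to $M+1$ while preserving both acceptance by $\RegionAutomaton{\TickN{\TA}}$ and rejection by $\RegionAutomaton{\TickN{\TB}}$: once all original clocks have passed~$M$ inside such a block, additional time units only loop among unbounded regions and change neither the set of reachable original-clock regions in $\TA$ nor the corresponding set in~$\TB$. Consequently there is a witness whose ticked word has length at most $(2N+1)(M+2)$, \ie{} single-exponential in the input. The \NEXPTIME{} algorithm for non-inclusion then guesses such an~$s$, checks $s\in\Tick{\Trace{\TA}}$ by simulating $\RegionAutomaton{\TickN{\TA}}$ on~$s$, and checks $s\notin\Tick{\Trace{\TB}}$ by the subset simulation of $\RegionAutomaton{\TickN{\TB}}$ on~$s$; both tests run one fixed, exponentially long word through an exponential-size NFA, which costs exponential time without any global determinisation. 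Thus inclusion is in \coNEXPTIME{}.

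For hardness I would reduce an \NEXPTIME{}-complete problem---for instance the acceptance problem of a nondeterministic Turing machine with an exponential time bound, or equivalently an exponential-grid tiling problem---to the \emph{non}-inclusion of $N$-bounded timed languages, yielding \coNEXPTIME{}-hardness of inclusion. Using clocks as binary counters (exploiting the binary encoding of constants) together with a large value of~$N$, I would build in polynomial time two TAs $\TA$ and~$\TB$ such that $\TA$ generates all candidate certificates (encoded as timed, hence ticked, words of exponential length) while $\TB$ recognises exactly the \emph{flawed} ones, so that $\Trace{\TA}\not\subseteq\Trace{\TB}$ holds precisely when a valid accepting computation (resp.\ a correct tiling) exists; the witness of non-inclusion is then exactly that computation.

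The main obstacle I anticipate is the single-exponential witness bound: one must make the $t$-block shortening argument rigorous and, crucially, establish that it preserves acceptance in $\TA$ and rejection in $\TB$ \emph{simultaneously}, carefully handling the internal $\silentaction$- and $\gamma$-transitions (and the clock resets they may perform) occurring inside a long block, rather than relying on generic NFA pumping, which would only give a doubly-exponential witness and thus fall back to \twoEXPSPACE{}. The hardness reduction is technically standard but still requires a careful counter encoding to keep $\TA$ and $\TB$ of polynomial size while forcing the intended exponential structure on the certificates.
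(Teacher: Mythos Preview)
Your upper-bound plan has a genuine gap at exactly the point you flag as the main obstacle: the claim that a maximal block of $t$'s can be shortened to $M+1$ while preserving acceptance in~$\TA$ and rejection in~$\TB$ is false in general, precisely because $\silentaction$-transitions may reset original clocks during the block. Concretely, take $\TA$ with clocks $\clockx,\clocky$, two $\silentaction$-self-loops on $\locinit$ with guards $\clockx=2$ (reset~$\clockx$) and $\clocky=3$ (reset~$\clocky$), and an $a$-transition to $\locfinal$ with guard $\clockx=0\wedge\clocky=0$; then $a$ is producible only at multiples of~$6$, so with $\TB$ accepting just $(a,0)$ the shortest witness is $t^{6}a\ldots$ while $M+1=4$. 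With more clocks and coprime constants the required $t$-block grows like an LCM and can be exponential in~$M$, hence doubly exponential in the input. No local ``all clocks are unbounded'' argument survives resets, and you are then back to pumping on subsets of regions, which only gives a doubly-exponential witness.

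The paper does not try to bound the \emph{length} of the witness by a single exponential. Instead it bounds, for any fixed ticked word, the number of regions that can ever be visited while reading it (\cref{lemma:N-obs:weak-full:region-number}); this bound $B$ is polynomial in~$N$ and single-exponential in the rest. Pumping on pairs of subsets of visited regions then bounds the shortest witness by $2^{2B}$---doubly exponential---but such a witness has the shape $t^{k_1}a_1\cdots t^{k_n}a_n f_{K_0}\cdots f_{K_m}$ with $n\leq N$, so it admits an \emph{exponential-size description} by writing each $k_i$ in binary. One then guesses this description and evaluates it on the region automata via repeated squaring of the $t$-transition matrix, all in exponential time. The point you are missing is that succinct representation plus fast matrix powering replaces the need for a short witness.

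For hardness, your Turing-machine/tiling sketch could in principle be made to work, but the paper's route is much more direct: it reduces from inequivalence of regular expressions with squaring and \emph{without} Kleene star (known \NEXPTIME{}-complete), reusing verbatim the TA encoding of squaring developed in \cref{section:opacity:discrete-time}. Since star-free expressions have finite languages of length at most $2^{n-1}$, the resulting TAs have $N$-bounded languages with $N$ polynomial in binary, and the reduction is immediate.
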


\begin{cor}\label{cor:Nfirst:w-f-opacity}
Given $N \in \setN$, the weak and full opacity \wrt{} first $N$ observations decision problems are \coNEXPTIME{}-complete.
\end{cor}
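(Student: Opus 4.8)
The plan is to derive the corollary from \cref{prop:Nfirst:inclusion} by showing that weak (resp.\ full) opacity against $\strategy{N}$ is polynomially inter-reducible with inclusion (resp.\ equality) of $N$-bounded timed languages, so that the \coNEXPTIME{}-completeness transfers directly.

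For membership in \coNEXPTIME{}, I would start from the observation that testing weak opacity of a TA~$\TA$ against $\strategy{N}$ is exactly deciding $\projection{N}(\PrivateTr{\TA}) \subseteq \projection{N}(\PublicTr{\TA})$. Building $\APriv$ and $\APub$ from~$\TA$ as in \cref{sec:reducWFopac}, and then applying the $N$-observation unfolding to each, I obtain two TAs $\Unfold{\APriv}$ and $\Unfold{\APub}$ whose languages are precisely $\projection{N}(\PrivateTr{\TA})$ and $\projection{N}(\PublicTr{\TA})$, and which are $N$-bounded by construction of the unfolding. Weak opacity against $\strategy{N}$ is then a genuine instance of $N$-bounded language inclusion, hence in \coNEXPTIME{} by \cref{prop:Nfirst:inclusion}. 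Full opacity against $\strategy{N}$ asks for both inclusions at once; since \coNEXPTIME{} is closed under intersection (a single \coNEXPTIME{} procedure can verify both inclusions), full opacity against $\strategy{N}$ is in \coNEXPTIME{} as well.

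For hardness I would reduce $N$-bounded language inclusion, \coNEXPTIME{}-hard by \cref{prop:Nfirst:inclusion}, to weak opacity against $\strategy{N}$. Given two TAs $\TA$ and~$\TB$ whose languages are $N$-bounded, I reuse the construction of~$\mathcal{O}$ from \cref{lem:language-inclusion}, which places the two automata as the private and public branches below a fresh initial location linked by $0$-time $\silentaction$-transitions. Because every trace of~$\mathcal{O}$ has at most $N$~letters, the projection $\projection{N}$ acts as the identity, so $\projection{N}(\PrivateTr{\mathcal{O}}) = \Trace{\TA}$ and $\projection{N}(\PublicTr{\mathcal{O}}) = \Trace{\TB}$, whence $\mathcal{O}$ is weakly opaque against $\strategy{N}$ if and only if $\Trace{\TA} \subseteq \Trace{\TB}$. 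This yields \coNEXPTIME{}-hardness of weak opacity against $\strategy{N}$. To obtain hardness for full opacity, I invoke \cref{th:weak-full-eq}, after checking that its gadgets introduce only $0$-time $\silentaction$-transitions and therefore neither add observations nor alter the action of $\projection{N}$, so the inter-reducibility remains valid in the first-$N$-observations setting.

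The main obstacle is the bookkeeping in these reductions: confirming that the unfolding commutes with both $\projection{N}$ and the private/public partition (in particular that $\Unfold{\APriv}$ recognizes exactly the projected private traces, respecting whether a private location is visited within the first $N$~observations), and that the gadgets of \cref{lem:language-inclusion} and \cref{th:weak-full-eq} preserve the $N$-bounded structure. Once these commutation properties are verified, the complexity bounds follow immediately from \cref{prop:Nfirst:inclusion} together with the closure of \coNEXPTIME{} under intersection.
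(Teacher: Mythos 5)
Your proposal matches the paper's own argument: the paper derives this corollary from \cref{prop:Nfirst:inclusion} precisely by observing that the private/public languages of the $N$-unfolded TA are $N$-bounded and that the inter-reductions of \cref{sec:reducWincl} (together with \cref{th:weak-full-eq}) carry over to the $N$-bounded setting, so that weak (resp.\ full) opacity against $\strategy{N}$ is inter-reducible with inclusion (resp.\ equality) of $N$-bounded languages. Your treatment of full opacity (two inclusions plus closure of \coNEXPTIME{} under intersection for membership, and \cref{th:weak-full-eq} for hardness) is only a cosmetic variation of the paper's use of $N$-bounded language equality, whose hardness the paper establishes directly in \cref{sec:conexphard-Nfirst}.
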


The rest of this subsection is dedicated to the proof of \cref{prop:Nfirst:inclusion}.

\subsubsection{\coNEXPTIME{}-membership}

We present an algorithm with complexity \NEXPTIME{} to decide whether the language inclusion of two $N$-bounded timed languages, given by some TAs $\TA$ and~$\TB$, does \emph{not} hold.

First we build the ``ticked'' automata $\Tick{\TA}$ and $\Tick{\TB}$ of $\TA$ and~$\TB$.
They are exponential in size since $N$~is encoded in binary.
The region automata $\mathcal{R}_{\TA}^{\textit{Tick}}$ and $\mathcal{R}_{\TB}^{\textit{Tick}}$ of the respective TAs $\Tick{\TA}$ and $\Tick{\TB}$ define some rational languages $T_{\TA}$ and~$T_{\TB}$.
Thanks to \cref{cor:inclusion-tick}, showing $\Trace{\TA} \not\subseteq \Trace{\TB}$ amounts to show $T_{\TA} \not\subseteq T_{\TB}$.

However, in order to keep an exponential complexity, we cannot build the region automata in their entirety (as a region automaton is doubly exponential in the problem data size).
So, to study language inclusion, we have to be careful in our exploration of the region automata, and in particular bound the number of regions that are necessary to visit in order to compare languages.
We do this through an analysis of the shape of one of the smallest words separating the two languages.
In \cref{lemma:N-obs:weak-full:region-number}, we bound the number of regions that can appear while reading this word, then we provide in \cref{lemma:N-obs:weak-full:word-size} a doubly exponential bound to its length, and a representation of simply exponential size.
After that, we ``guess'' a word of the right shape using non-determinism, and we only have to test whether it belongs to~$T_{\TA}$ but not to~$T_{\TB}$.

\paragraph{Claim}
Due to the ``tick'' construction, the words accepted by the region automata are of the form
\begin{equation}
t^{k_1} a_1 t^{k_2} a_2 \dots t^{k_n} a_n f_{K_0} \dots f_{K_m}
\label{word-structure}
\end{equation}
for some integer $n \leq N$; with for all $i \in \interval{1}{n}$, $a_i \in \ActionSet$, $k_i \in \setN$ and $(K_i)_{0 \leq i \leq m}$ is a partition of~$\interval{0}{n}$ such that $0 \in K_0$.

\begin{lem}\label{lemma:N-obs:weak-full:region-number}
Let $\TA = (\ActionSet, \LocSet, \locinit, \set{\locfinal}, \ClockSet, I, E)$ be a TA. We denote $\ClockCard = \vert \ClockSet \vert$. Let $M$ be the greatest constant appearing in its guards and invariants. Let $\TickN{\TA}$ be the ``tick'' construction applied on the $N$-unfolding of~$\TA$ for some $N \in \setN$, and $\mathcal{R}_{\textit{tick}}$ be its region automaton.
Then the number of regions that can be visited while reading
a word of the form \cref{word-structure} in $\mathcal{R}_{\textit{tick}}$ is bounded by
\[B = (N+1)^3 \cdot \vert \LocSet \vert \cdot (2N + 3)\cdot (2M+2)^\ClockCard  \cdot 2^\ClockCard \cdot (N+\ClockCard+1)^\ClockCard.\]
\end{lem}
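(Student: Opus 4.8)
The plan is to bound separately the two ingredients of a region of $\mathcal{R}_{\textit{tick}}$, namely the location and the clock region, and then to exploit the decisive fact that along a word of the form \cref{word-structure} the $N+1$ added clocks $x_0, \ldots, x_N$ of the ``tick'' construction (\cref{def:tick-construction}) do \emph{not} behave as independent clocks. Their fractional parts remain tightly correlated, so they contribute only polynomially in $N$ to the region count, rather than the factorial blow-up hidden in $\NbClockRegions$. Throughout I write $H = \ClockCard$.

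First I would count the reachable locations. A location of $\TickN{\TA}$ is either a copy $\loc^i$ of an original location ($\loc \in \LocSet$, $0 \le i \le N$) or one of the two gadget locations. Since the copy index only increases (by one at each observed action) and the gadget is entered solely after a final location is reached, I would record the underlying location of $\TA$ (factor $|\LocSet|$) separately from the copy index and the progress through the terminal $f_K$-block; these last two pieces of bounded structural information, being linear in $N$, are folded into the polynomial-in-$N$ factors $(N+1)^3$ collected at the end.

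The heart of the argument is the clock region. Following the reset discipline of \cref{def:tick-construction}, I would first establish the invariant that every added clock stays in $[0,1]$ (forced by the guard $g^{(0)}_{<1}$ on the inherited transitions together with the $\EdgeSet_t$ and $\EdgeSet_\gamma$ reset loops), and that the fractional part of each $x_i$ is pinned down by the global time and the time $\tau_i$ of the $i$-th observation: before its observation $x_i$ is synchronized with $x_0 = \fract{\text{global time}}$, and afterwards $\fract{x_i} = \fract{\,t-\tau_i}$. The consequence I want is that the fractional parts of $x_0,\ldots,x_N$ all advance at the same rate and reset, one group at a time, in the fixed cyclic order of the values $\fract{\tau_i}$; hence at any instant their linear order is merely a \emph{rotation} of a single fixed cyclic order, and as the global time sweeps across one unit interval the added clocks pass through at most $2(N+1)+1 = 2N+3$ distinct fractional configurations (the boundary instants where a clock hits an integer, interleaved with the open arcs between them). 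This yields the factor $(2N+3)$ and is exactly what collapses the doubly-exponential count to the stated bound, since the added clocks never realize more than $O(N)$ orderings instead of $(N+1)!$.

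It then remains to account for the $H$ original clocks relative to this backbone. Their coordinate type—integer part capped at $M$ together with the information of whether the fractional part is zero—contributes $(2M+2)^H$ exactly as in the single-clock region count, and a further per-clock bounded bit contributes $2^H$. To record the ordering of the original clocks' fractional parts I would insert each of the $H$ original clocks into the ordered list of reference values already laid down by the added clocks (at most $N+1$ distinct ones) and by the other original clocks (at most $H-1$) together with the value $0$, giving at most $N+H+1$ admissible slots per clock and hence $(N+H+1)^H$ placements. Multiplying the location factor $|\LocSet|$, the polynomial-in-$N$ structural factors $(N+1)^3(2N+3)$, and the original-clock factors $(2M+2)^H \cdot 2^H \cdot (N+H+1)^H$ produces the bound $B$. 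The main obstacle I anticipate is the rigorous justification of the rotation invariant for the added clocks: one must verify, along the specific phased shape of the word (blocks of ticks, then an observed action, repeated, then the terminal $f_K$-block), that the $\EdgeSet_\gamma$ and $\EdgeSet_G$ resets keep the added clocks grouped into at most $n+1 \le N+1$ fractional-value classes which reorder only by cyclic rotation, and in particular that the gadget phase sweeps through and resets these classes in increasing fractional-part order. Getting this bookkeeping right is precisely where the polynomial-in-$N$ saving is earned.
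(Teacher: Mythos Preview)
Your proposal follows essentially the same route as the paper: isolate the location/copy contribution, observe that the $N+1$ added clocks stay in $[0,1]$ with their fractional parts in a fixed cyclic order so that only a rotation (and which group, if any, sits at an integer) needs to be recorded---this yields the $(2N+3)$ and the polynomial $(N+1)$ factors---and then insert the $H$ original clocks into this backbone to obtain $(2M+2)^H\cdot 2^H\cdot (N+H+1)^H$. The paper attributes the three $(N+1)$ factors slightly differently (one to the copy index, one to the cyclic rotations of the added-clock ordering, one to the progressive appearance of the added clocks as the copies are traversed), but the decomposition and the key polynomial-in-$N$ saving are the same.
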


\begin{proof}
The tick unfolding being equipped with $\ClockCard+N+1$ clocks, a region is defined by
\begin{enumerate}
\item \emph{a location}: the total number of locations is bounded by $(N+1) \cdot | \LocSet |$;
\item \emph{an interval ($\set{m}$, $(m;m+1)$ or $(M, \infty)$) for each clock (of $\ClockSet \cup \ClockSet_t$) with $m \leq M$}: naively at most $(2M+2)^{\ClockCard+N+1}$ possibilities;
\item \emph{an ordering of the fractional parts of all clocks}: naively at most $2^{\ClockCard+N+1} \cdot (\ClockCard+N+1)!$ possibilities.
\end{enumerate}

The bound on the number of intervals and ordering possibilities given above can be refined to a polynomial number in~$N$ by relying on the structure of the words accepted by the region automaton---which we explain in the following.

\paragraph{On intervals.}
Noting that added clocks have values limited by~1, we can immediately reduce the bound to
$3^{N+1}\cdot (2M+2)^\ClockCard$ (an added clock can be valuated in $\set{0}$, in $(0;1)$ or in~$\set{1}$).
A slightly finer analysis of the added clocks gives a better bound than~$3^{N+1}$:
the ticks appearing in the word~$w$ group together the clocks having the same fractional part, hence only clocks from the same group can simultaneously have an integer value.
Within each copy of the unfolding of the initial TA, this reduces the number of possibilities to at most $2(N+1)+1$ (the worst case being when no pair of clocks share the same fractional part). Indeed, the groups of clocks with same fractional part are fixed within the copy.
There are at most $N+1$ possibilities for the group with integer value, and these clocks can either be equal to~0 or to~1.
The last possibility is that all clocks are valuated in~$(0;1)$.
As there are $N+1$ copies, we can bound the number of intervals by $(N+1)(2(N+1)+1)$.

\paragraph{On orderings.}
Once again, we can reduce the bound on the number of orderings thanks to the information related to the added clocks appearing in the word.
In fact, the comparison of the fractional parts of the $N+1$ added clocks is entirely determined by the partition~$(K_i)$ given by at the end of the ticked word.
We thus first assume the added clocks have been ordered thanks to the word, thus we do not need to consider any permutation for the added clocks.
Let us now count the number of orderings due to the remaining $\ClockCard$~clocks.
To choose an ordering of fractional parts, we can select a permutation of the clocks and determine between which ones equality holds.
Let us first illustrate this using an example.

\begin{exa}\label{exa:clock-ordering}
Let us assume a fictional example with (at least) 5~added clocks.
Assume that $\fract{x_1} < \fract{x_0} = \fract{x_3} = \fract{x_4} < \fract{x_2}$ is an ordering of the fractional parts of the first 5~added clocks; this ordering involves two equalities.
There are 7~manners to add a clock $y$ to this ordering: $\fract{y}< \fract{x_1}$, $\fract{y} = \fract{x_1}$, $\fract{x_1} < \fract{y} < \fract{x_0}$, $\fract{y} = \fract{x_0}$, $\fract{x_0} < \fract{y} < \fract{x_2}$, $\fract{y} = \fract{x_2}$ or $\fract{x_2} < \fract{y}$.%
\end{exa}

We insert (as sketched in \cref{exa:clock-ordering}) the fractional parts of the $\ClockCard$~clocks into the existing ordering, one after the other.
When adding the \mathnth{i} clock, there are $2(N+i-k)+1$ positions for this clock compared to the already ordered clock, with $k$ being the number of equality signs in the ordering before adding the \mathnth{i} clock.
Indeed, either the fractional part of the new clock is equal to one of the already ordered ones, which corresponds to $N+ i - k$ choices, or it is inserted in between two clocks, and there are $N+i -k +1$ such choices.
In the worst case, $k = 0$ (all fractional parts are different).
Moreover, we can roughly bound~$m$ by~$\ClockCard$ and $2(N+\ClockCard)+1$ by $2(N+\ClockCard+1)$ to get a general bound of $2^\ClockCard \cdot (N+\ClockCard+1)^\ClockCard$ for the number of possibilities of ordering of the fractional parts of the $\ClockCard$ clocks compared to those, already ordered, of the $N+1$ added clocks.

The $N+1$ added clocks are reset each time they reach~1.
The ordering of their fractional parts, previously considered fixed to add those of the $\ClockCard$~clocks, is determined by the part of the word that has already been read.
Moreover, since the behaviour of added clocks is cyclic, all their possible orderings (for a given word) are the same up to rotation.
Hence we encounter at most $N+1$ different possibilities.

Additionally, the $N+1$ added clocks appear progressively with each of the $N$~letters and thus the orderings can change and correspond to different regions for each, so we multiply our bound by~$N+1$. This multiplication corresponds to the multiplication by the number of copies we needed for the bound on the number of orderings, and can in fact be shared for both orderings and intervals.

We deduce from what precedes that the number of reachable regions during the reading of~$w$ is bounded by
\[(N+1) \cdot \vert \LocSet \vert \cdot (2N + 3)\cdot (2M+2)^\ClockCard  \cdot 2^\ClockCard \cdot (N+\ClockCard+1)^\ClockCard \cdot (N+1) \cdot(N+1)\]
which is polynomial in~$N$.
This concludes the proof of \cref{lemma:N-obs:weak-full:region-number}.
\end{proof}

Moreover, given two equivalent timed words, the over-approximation of the set of regions computed above is the same. According to \cref{lem:folklore:equivalence}, equivalent words are indeed produced by runs crossing the same locations and which valuations verify the same constraints; this means in particular that the associated runs in the region automaton cross the same regions.

As we plan to guess the smallest word distinguishing $T_{\TA}$ and~$T_{\TB}$, it is necessary to first determine an upper bound for its size.
In the following lemma, we show that it is at most doubly exponential in the sizes of $\TA$, $\TB$ and~$N$; but that, by expressing the $k_i$~exponents in binary, we obtain an exponential size description.

\begin{lem}\label{lemma:N-obs:weak-full:word-size}
The smallest word distinguishing $T_{\TA}$ and~$T_{\TB}$ (which is of the form given in \cref{word-structure}) is of size at most doubly exponential in the size of $\TA$ and~$N$ (binary), and can be described in exponential size.
\end{lem}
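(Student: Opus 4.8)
The plan is to bound the shortest word separating $T_{\TA}$ and $T_{\TB}$ by a classical product-and-determinisation argument on the two region automata, and then to observe that, although this bound is doubly exponential on the \emph{length} of the word, the rigid shape imposed by \cref{word-structure} lets us encode such a word in only exponential space.

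First I would set up the separating construction. Since $T_{\TA}$ and $T_{\TB}$ are recognised by the (non-deterministic, $\silentaction$-bearing) region automata $\mathcal{R}_{\TA}^{\textit{Tick}}$ and $\mathcal{R}_{\TB}^{\textit{Tick}}$, I would determinise the latter by the subset construction (with $\silentaction$-closure) and run it synchronously with $\mathcal{R}_{\TA}^{\textit{Tick}}$. A word lies in $T_{\TA}\setminus T_{\TB}$ exactly when this product reaches a state whose $\TA$-component is an accepting region and whose $\TB$-component is a subset containing no accepting region. By \cref{lemma:N-obs:weak-full:region-number}, only $B$ regions are relevant in each region automaton, so the $\TA$-component ranges over at most $B$ values and the determinised $\TB$-component over at most $2^{B}$ subsets; since $B$ is exponential in the size of $\TA$ and in $N$ (written in binary), the product has at most $B\cdot 2^{B}$ states, which is doubly exponential. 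Note also that, as every accepted word already has the shape of \cref{word-structure}, the shortest separating word is automatically of that form.

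Next I would extract the length bound. Consider a shortest word $w\in T_{\TA}\setminus T_{\TB}$ together with an accepting run of the product on it. If some product state were visited twice with distinct word prefixes, excising the loop between the two visits would yield a strictly shorter word still in $T_{\TA}\setminus T_{\TB}$ (and still of the form of \cref{word-structure}), contradicting minimality; pure $\silentaction$-loops are removed the same way. Hence the run meets each product state at most once, so the number of transitions---and therefore $|w|$, counting every tick letter---is at most the number of product states, i.e.\ doubly exponential in the size of $\TA$ and~$N$. This settles the first part.

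Finally I would argue the exponential-size representation. Writing $w = t^{k_1}a_1\cdots t^{k_n}a_n f_{K_0}\cdots f_{K_m}$ as in \cref{word-structure}, the non-tick skeleton has only $n+m+1\le 2N+1$ symbols; each $a_i$ costs $O(\log|\ActionSet|)$ bits and each index set $K_j\subseteq\interval{0}{N}$ costs $O(N)$ bits, so the skeleton is described in exponential space (both $N$ and $2N+1$ being exponential in the input size). Each tick exponent $k_i$ is at most $|w|$, hence doubly exponential, so its binary encoding uses only exponentially many bits; with at most $n\le N$ such exponents this again contributes exponential space. Summing these exponential contributions (exponents add, so a product of two exponentials stays exponential) yields an exponential-size description of~$w$, as claimed. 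The main obstacle is the bookkeeping in the two previous steps: making the cut argument robust in the presence of $\silentaction$-transitions, and reconciling the doubly exponential length with an exponential description---which hinges precisely on encoding the large tick multiplicities $k_i$ in binary rather than in unary.
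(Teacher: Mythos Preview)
Your argument is correct and follows essentially the same route as the paper: both bound the length of a shortest separating word by the number of states in a subset-style product on the relevant regions from \cref{lemma:N-obs:weak-full:region-number}, then invoke a pumping cut, and finally compress the tick blocks $t^{k_i}$ by writing the $k_i$ in binary. The only noteworthy difference is that the paper tracks a \emph{pair} of Boolean vectors, i.e.\ it determinises \emph{both} region automata and obtains the bound $2^{2B}$, whereas you determinise only $\mathcal{R}_{\TB}^{\textit{Tick}}$ and keep a single non-deterministic state of $\mathcal{R}_{\TA}^{\textit{Tick}}$, yielding the slightly sharper bound $B\cdot 2^{B}$; this is the standard inclusion-via-complement construction and is arguably the cleaner choice, but both bounds are doubly exponential, so the distinction is cosmetic for the lemma's purposes.
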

\begin{proof}
We bound the length of the shortest word of~$T_{\TA}$ which is not in~$T_{\TB}$ with an idea reminiscent of the pumping lemma.
Reading a letter in $\Tick{\TA}$ or~$\Tick{\TB}$ corresponds to reaching a set of regions, in both automata.
We represent the sets of regions reachable after reading a word~$w$ by a pair of Boolean vectors $V_w = (V_w^{\TA}, V_w^{\TB})$ of respective dimension
$\vert L_{\mathcal{R}_{\TA}^{\textit{Tick}}} \vert$ and $\vert L_{\mathcal{R}_{\TB}^{\textit{Tick}}} \vert$ where $L_{\mathcal{R}_{\TA}^{\textit{Tick}}}$ and $L_{\mathcal{R}_{\TB}^{\textit{Tick}}}$ are the respective sets of (reachable) regions of $\mathcal{R}_{\TA}^{\textit{Tick}}$ and $\mathcal{R}_{\TB}^{\textit{Tick}}$ during the reading of~$w$.
Hence, for any word~$w$, denoting $B$ the bound computed in \cref{lemma:N-obs:weak-full:region-number}, the dimension of~$V_w$ is bounded by~$2B$.

By the bound on the size of the vectors~$V_w$ and the fact it is Boolean, it can take at most $2^{2B}$ different assignments.
Moreover, the acceptance of a word~$w$ is entirely determined by the vector~$V_w$ as it incorporates the presence of a final location.

If a word~$w$ distinguishing both region automata is longer than~$2^{2 B}$, it visits twice the same set of regions so we can remove the factor of the word responsible for this loop to get a shorter word which also distinguishes $T_{\TA}$ and~$T_{\TB}$.
The bound~$B$ is exponential in the size of the binary writing of~$N$ and in the size of~$\TA$, hence the length of the word $\mathcal{O}(N) + \sum\limits_{i=1}^{N} k_i$, bounded by~$2^{2B}$, is of doubly exponential size.
But choosing to represent the $k_i$ exponents in binary gives an exponential description.
\end{proof}

Using non-determinism, we guess the exponentially long description of a word~$w$ among those of the form given in \cref{word-structure}.
We will check that it distinguishes $T_{\TA}$ and~$T_{\TB}$ as expected.
We can do this by reading the word~$w$ on both region automata and testing whether final regions have been reached on both sides or not.
Since we treat both $\mathcal{R}_{\TA}^{\textit{Tick}}$ and~$\mathcal{R}_{\TB}^{\textit{Tick}}$ the same way, we use $\Regions{}$ to denote one of them in the following.

Knowing the size of~$\Regions{}$, we need to avoid building the whole automaton: instead, we use a multiplication of adjacency matrices associated with each letter or word.
Each of these matrices has as many columns and as many lines as the number of reachable regions of~$\Regions{}$, hence bounded by~$B$.
Those regions are given by the over-approximation previously deduced by analysing the guessed word as in \cref{lemma:N-obs:weak-full:region-number}'s proof.
We define the matrices in the following.

Since we are concerned only with positivity, we assume that each matrix multiplication is followed by setting every positive coefficient to~1.
We begin by computing a first matrix~$M^{\Regions{}}_{\silentaction}$ in which each coefficient $(i,j)$ is~$1$ if there is a silent transition from the \mathnth{i} region to the \mathnth{j} region, and $0$ otherwise.
Similarly, we define, for each letter $a \in \ActionSet \cup \set{t} \cup \ActionSet_{N}$ that is in~$w$, a matrix $M^{\Regions{}}_a$ where the coefficient $(i,j)$ is~$1$ if there is a transition labelled by $a$ from region $i$ to region~$j$, and $0$ otherwise.
Then we compute a matrix $M^{\Regions{}}_{\silentaction^*}$ which coefficients are $\left(M^{\Regions{}}_{\silentaction^*}\right)_{(i,j)} = \max\limits_{0 \leq k \leq B} \left((M^{\Regions{}}_{\silentaction})^k\right)_{(i,j)}$. %
In this matrix, a coefficient~$1$ at position $(i,j)$ means that we can go from region $i$ to region~$j$ without taking any observable transition.

Thus, if we compute for each letter~$a$ the matrix $M^{\Regions{}}_{a_{\silentaction}} = M^{\Regions{}}_{\silentaction^*} \cdot M^{\Regions{}}_a \cdot M^{\Regions{}}_{\silentaction^*}$, we get $1$ at position $\left(M^{\Regions{}}_{a_{\silentaction}}\right)_{(i,j)}$ if and only if there is a path labelled by~$a$ from region $i$ to region~$j$.
We extend this process to words, by setting for any letter~$a$ and any non-empty word~$v$ the matrix $M^{\Regions{}}_{(va)_{\silentaction}} = M^{\Regions{}}_{v_{\silentaction}} \cdot M^{\Regions{}}_{a_{\silentaction}}$.
For instance, the word $abc$ would be represented by the matrix $M^{\Regions{}}_{abc_{\silentaction}} = M^{\Regions{}}_{a_{\silentaction}} \cdot M^{\Regions{}}_{b_{\silentaction}} \cdot M^{\Regions{}}_{c_{\silentaction}}$.
All this can be done in exponential time since the matrices are of exponential dimension in the data size.

We can detect a violation of opacity by finding a word that distinguishes $T_{\TA}$ and~$T_{\TB}$.
To do this, we have to check if their final regions are reached after reading the guessed word~$w$.
We compute the matrices corresponding to~$w$ for both region automata, and apply them to get the vector of reached regions after reading. However, this requires too many multiplications, thus we present below a more efficient alternative to compute these matrices.
To reduce the number of needed multiplications, we take advantage of the structure \cref{word-structure} of the word~$w$, relying on the binary writing of the $k_i$~exponents.
We compute the matrices associated with factors~$t^{2^j}$ by successively squaring the matrix of~$t$.
We limited the exponents~$k_i$ by~$2^{2B}$, thus the exponent~$j$ is limited by~$2B$ and an exponential number of multiplications is required.

We set as initial vector the vector $V^{\Regions{}}_{\silentaction}$ whose \mathnth{i} component is~$1$ only if it corresponds to an initial region.
The word~$w$ is accepted by the region automaton if at least one component associated with a final region is equal to~$1$ in the vector $V^{\Regions{}}_w = V^{\Regions{}}_{\silentaction} \cdot M^{\Regions{}}_{w_{\silentaction}}$.

Hence we can check whether the word~$w$ distinguishes $T_{\TA}$ and~$T_{\TB}$ by looking at the pair of vectors $(V^{\mathcal{R}_{\TA}^{\textit{Tick}}}_w, V^{\mathcal{R}_{\TB}^{\textit{Tick}}}_w)$.
As stated before, this algorithm requires exponential time for the computation of these vectors.
Moreover, the size of the guessed representation of~$w$ is exponential in the size of~$N$, since it consists of giving the exponents $k_i$ written in binary (thus of length~$2B$), the at most $N$ factors $a_i$ and~$f_{K_l}$.
The selection of the word witnessing non-opacity at the beginning uses non-determinism, so the total complexity is \NEXPTIME{}.
Since this algorithm solves the non-inclusion problem for $N$-bounded languages, we obtain a \coNEXPTIME{} algorithm for the inclusion problem.

\subsubsection{\coNEXPTIME{}-hardness}\label{sec:conexphard-Nfirst}

For the following reduction, we consider the problem of non-equivalence of rational expressions without Kleene star and with squaring, known to be \NEXPTIME{}-complete~\cite{SM73}.
For two such expressions $\sqexp_1$ and~$\sqexp_2$, and as done in \cref{section:opacity:discrete-time}, we build the discrete-time TAs $\TA_1$ and~$\TA_2$ representing the same expressions.
The construction of $\TA_1$ and~$\TA_2$ is polynomial, and they have the same language if and only if $\sqexp_1$ and~$\sqexp_2$ are equivalent.
We now need to show that the languages of $\TA_1$ and~$\TA_2$ are $N$-bounded for some~$N$ described polynomially in the size of the rational expressions with square.

Since $\sqexp_1$ and~$\sqexp_2$ are described without the Kleene star, their languages are finite and the length of the words of the languages of $\TA_1$ and~$\TA_2$ is bounded.
More precisely, if $\sqexp_1$ and~$\sqexp_2$ are of length denoted by~$n$, they cannot describe a word longer than $2^{n-1}$ (squaring being the only operation adding succinctness in the expression).
The languages of $\TA_1$ and~$\TA_2$ are hence $2^{n-1}$-bounded.
Furthermore, in the input of the (non-)equality problem of $N$-bounded languages, $N$ is expressed in binary, hence in logarithmic size compared to the number's value.
We thus write $N = 2^{n-1}$ in polynomial space in the size of the given rational expressions.

We have provided a polynomial-time reduction from the non-equality problem for $N$-bounded languages to the inequivalence problem for rational expressions without Kleene star and with squaring.

The $N$-bounded language equality problem is hence \coNEXPTIME{}-hard, and so are the weak and full opacity \wrt{} first $N$ observations decision problems.

\subsection{$N$ fixed times}\label{sec:Nfixed}

In this section, we suppose the attacker can observe the system a finite number~$N$ of times, but not necessarily limited to the first $N$ actions.
We define this new setting for opacity as follows.
Given a time sequence~$\tau$ of length~$N$, we consider that the attacker can only observe the first action immediately following each time from the sequence~$\tau$.
They can be seen as sensor switch-on times, the sensor switching off as soon as an action has been observed.

\begin{exa}
	Consider the following timed word:
		\[(a,1.2)(b,1.5)(c,2)(d,2.3)(a,2.5)(c,2.5)(c,4)(d,5)\]

		Consider the following time sequence~$\tau$ of length~3: $\tau = (0, 2.5, 3)$.
		Then the attacker will observe $(a,1.2)(a,2.5)(c,4)$.
		Now consider the time sequence~$\tau' = (0.7, 1, 6)$.
		Then the attacker will observe $(a,1.2)$.
\end{exa}

In this section, we adapt the definition of opacity to one time sequence, and we prove that it can decided in the same way as opacity with the first $N$ observations (\cref{sec:Nfirst}), and thus generalizes this former setting.

Since the attacker observes only the first action occurring after each observation time~$\tau_i$, given a time sequence~$\tau$, we define the time selection function 
representing this choice.

\begin{defi}[Time selection function following a time sequence]
Let $\tau$ be a time sequence.
We define the \emph{time selection function following~$\tau$} as follows:
\[\begin{array}{rrcl}
\strategy{\tau}: & \TimedWords{\ActionSet} & \longrightarrow & \setRgeqzero \\
			& w & \longmapsto & \tau_{ind_{\tau}(w)}
\end{array}\]
with $\textit{ind}_{\tau}(w \cdot (a,t)) = \min \setsmall{i \in \setN \mid i > \textit{ind}_{\tau}(w) \land \tau_i \geq t}$ and $\textit{ind}_{\tau}(\silentaction)=0$.
\end{defi}
\begin{rem}
The associated projection~$\projection{}$ (defined in \cref{definition:projection}) can be rephrased in the present case of a time sequence~$\tau$.
Indeed, it associates to a timed word $w = w_0 \dots w_n$, with $w_j = (a_j, t_j)$ for $j \in \interval{0}{n}$, the projected trace $w_{k_0} \dots w_{k_{N-1}}$ where for each $i \in \interval{0}{N-1}$, we set
\[k_i = \min \big( \set{k \in \interval{k_{i-1} +1}{n} \mid \tau_{i} \leq t_k \leq \tau_{i+1}} \cup \set{+ \infty} \big)\]
with $w_{\infty} = \silentaction$, $k_{-1} = -1$ and $\tau_{N} = +\infty$.
\end{rem}

Recall we say a TA~$\TA$ is fully (resp.\ weakly) opaque against a time selection function $\strategy{}$ if $\projection{}(\PrivateTr{\TA}) = \projection{}(\PublicTr{\TA})$ (resp.\ $\projection{}(\PrivateTr{\TA}) \subseteq \projection{}(\PublicTr{\TA})$).
In the remainder of this section, we address the following problem of opacity in the setting of time selection functions following time sequences.

\smallskip

\defProblem{Weak (resp.\ full) opacity \wrt{} $N$ static observations decision problem}
{A TA~$\TA$, a finite time sequence $\tau$}
{Is $\TA$ weakly (resp.\ fully) opaque against $\strategy{\tau}$?}

\

In the following lemma, we establish a bijective correspondence between the sets of projected traces associated with two equivalent time sequences.
We use for this a generalisation of the distortion function $\tweak{f}{f'}$ defined on times in the proof of \cref{prop:folklore:rec}.
This is defined for $f$ and~$f'$, the ordered sequences of distinct fractional parts (including 0 and~1) of times from two time sequences $\tau$ and~$\tau'$.
We extend the distortion function to time sequences by denoting $\tweak{f}{f'}((t_i)_{i\in \set I}) = (\tweak{f}{f'}(t_i))_{i \in I}$ for any time sequence~$t$ indexed by a set~$I$.
It is then naturally lifted to timed words by asserting $\untimed{\tweak{f}{f'}(w)} = \untimed{w}$ and $\timed{\tweak{f}{f'}(w)} = \tweak{f}{f'}(\timed{w})$.

\begin{lem}
Let $\tau$ and~$\tau'$ be two equivalent time sequences, and $f,f'$ their respective ordered sequences of distinct fractional parts including 0 and~1.
Then $\tweak{f}{f'} \circ \projection{\tau} = \projection{\tau'} \circ \tweak{f}{f'}$.
\end{lem}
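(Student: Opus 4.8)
We need to prove that for two equivalent time sequences $\tau$ and $\tau'$ with associated fractional-part sequences $f, f'$, the distortion $\tweak{f}{f'}$ commutes with projection in the sense $\tweak{f}{f'} \circ \projection{\tau} = \projection{\tau'} \circ \tweak{f}{f'}$, as functions on timed words. Let me think about what this says. We take a timed word $w$, distort it to $\tweak{f}{f'}(w)$, then project using the strategy for $\tau'$; this should give the same result as first projecting $w$ with the $\tau$-strategy, then distorting the result. So the diagram commutes.

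Let me think about what I'd actually prove.
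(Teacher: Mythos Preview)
Your proposal is not a proof: you have only restated what the lemma claims and then stopped at ``Let me think about what I'd actually prove.'' There is no argument here to evaluate.

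What is missing is precisely the substance. The key observation, which the paper's proof makes explicit, is twofold. First, the projection $\projection{\tau}$ applied to a timed word $w$ depends only on the \emph{relative order} between the timestamps of $w$ and the values $\tau_0, \tau_1, \dots$ (which letters of $w$ fall between consecutive $\tau_i$'s, and which is the first after each). Second, the distortion $\tweak{f}{f'}$ is strictly increasing and maps each $\tau_i$ to $\tau'_i$ (because $\tau \simeq \tau'$ and $f, f'$ are built from their fractional parts), so it preserves exactly this relative order: a timestamp $t_j$ of $w$ satisfies $\tau_i \leq t_j$ iff $\tau'_i \leq \tweak{f}{f'}(t_j)$. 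Hence the same letters are selected by $\projection{\tau}$ on $w$ as by $\projection{\tau'}$ on $\tweak{f}{f'}(w)$, and the retained letters carry the distorted timestamps, giving $\tweak{f}{f'}(\projection{\tau}(w)) = \projection{\tau'}(\tweak{f}{f'}(w))$. You need to write out this argument.
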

\begin{proof}
Let $w$ be a timed word.
First, we notice that $\timed{w} \simeq \tweak{f}{f'}(\timed{w})$ since the distortion function preserves the integral parts of timestamps and the order of their fractional parts.
Let us denote by~$w'$ the timed word equivalent to~$w$ and such that $\timed{w'} = \tweak{f}{f'}(\timed{w})$, \ie{} $w' = \tweak{f}{f'}(w)$.
Then, using the definition of $\tweak{f}{f'}$, it is easy to show that the times in~$\tau$ and~$\timed{w}$ are ordered in the same way as those in~$\tau'$ and~$\timed{w'}$.
Moreover the definition of the projections $\projection{\tau}$ and~$\projection{\tau'}$ only depends on this order.
Hence a letter~$w_i$ is kept in~$\projection{\tau}(w)$ if and only if the corresponding letter~$w'_i$ is kept in~$\projection{\tau'}(w')$.
Consequently, $\tweak{f}{f'}(\projection{\tau}(w)) = \projection{\tau'}(w')$.
In other words, $\tweak{f}{f'} (\projection{\tau}(w)) = \projection{\tau'}(\tweak{f}{f'}(w))$---and this for any timed word~$w$.
\end{proof}
\begin{prop}[Equivalent time sequences define the same opacity properties]\label{prop:equivalent-time-seq:opacity}
Let $\TA$ be a timed automaton.
Let $\tau$ and~$\tau'$ be two equivalent time sequences.
Then, $\TA$ is fully (resp.\ weakly) opaque following~$\tau$ if and only if $\TA$ is fully (resp.\ weakly) opaque following~$\tau'$.
\end{prop}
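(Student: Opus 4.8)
The plan is to leverage the commutation identity just established in the preceding lemma together with the fact that the timed languages $\PrivateTr{\TA}$ and $\PublicTr{\TA}$ are saturated under the equivalence~$\simeq$, and that the distortion is a bijection. First I would record that the lifted distortion $\tweak{f}{f'}$ is a bijection on timed words, its inverse being $\tweak{f'}{f}$, and that it maps every timed word to an equivalent one: since $\gamma_{f \rightarrow f'}$ is strictly increasing, continuous, and fixes~$0$, the map $\tweak{f}{f'}$ preserves the integral part of every timestamp as well as the order and the vanishing of the fractional parts, so $\tweak{f}{f'}(w) \simeq w$ for \emph{every} timed word~$w$ (this uses only the strict monotonicity of $\gamma_{f \rightarrow f'}$ on $[0;1)$, hence holds regardless of whether the fractional parts of~$w$ belong to~$f$).

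Next I would invoke \cref{lem:folklore:equivalence}: equivalent words are produced by runs following exactly the same sequence of locations and transitions, so whether a run visits or avoids $\PrivSet$ is an invariant of the $\simeq$-class. Consequently both $\PrivateTr{\TA}$ and $\PublicTr{\TA}$ are unions of $\simeq$-equivalence classes. Since $\tweak{f}{f'}$ is a bijection sending each word to an equivalent one, it stabilizes every $\simeq$-saturated language; thus $\tweak{f}{f'}(\PrivateTr{\TA}) = \PrivateTr{\TA}$ and $\tweak{f}{f'}(\PublicTr{\TA}) = \PublicTr{\TA}$.

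Then I would apply the preceding lemma pointwise and pass to sets. Using $\tweak{f}{f'}(\PrivateTr{\TA}) = \PrivateTr{\TA}$, this yields
\[
\projection{\tau'}(\PrivateTr{\TA}) = \projection{\tau'}\big(\tweak{f}{f'}(\PrivateTr{\TA})\big) = \tweak{f}{f'}\big(\projection{\tau}(\PrivateTr{\TA})\big),
\]
and analogously $\projection{\tau'}(\PublicTr{\TA}) = \tweak{f}{f'}(\projection{\tau}(\PublicTr{\TA}))$. Because $\tweak{f}{f'}$ is a bijection, applying it to two sets preserves both equality and inclusion; hence $\projection{\tau'}(\PrivateTr{\TA}) = \projection{\tau'}(\PublicTr{\TA})$ if and only if $\projection{\tau}(\PrivateTr{\TA}) = \projection{\tau}(\PublicTr{\TA})$ (full opacity), and likewise with $\subseteq$ in place of $=$ (weak opacity). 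This is exactly the claimed equivalence.

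I expect the only delicate point to be the justification that $\PrivateTr{\TA}$ and $\PublicTr{\TA}$ are $\simeq$-saturated and therefore fixed by $\tweak{f}{f'}$: this rests entirely on \cref{lem:folklore:equivalence}, since the private/public status of a trace is determined by the run's sequence of locations, which is common to all equivalent words. Everything else is routine bookkeeping of the bijection and the commutation identity.
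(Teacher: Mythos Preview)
Your proof is correct and rests on the same two ingredients as the paper's argument: the commutation identity $\tweak{f}{f'}\circ\projection{\tau}=\projection{\tau'}\circ\tweak{f}{f'}$ from the preceding lemma, and the $\simeq$-saturation of $\PrivateTr{\TA}$ and $\PublicTr{\TA}$ via \cref{lem:folklore:equivalence}. The paper carries this out at the level of individual runs and traces (showing an opaque trace maps to an opaque trace), whereas you package the same reasoning more compactly at the level of languages; the content is essentially identical.
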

\begin{proof} We only treat the case of weak opacity, since it remains inter-reducible with full opacity (\cref{th:weak-full-eq}).

Let $\tau$ and $\tau'$ be two equivalent time sequences, and $f$, $f'$ their respective ordered sequences of distinct fractional parts including 0 and~1.

Assume $\TA$ is fully opaque following~$\projection{\tau}$.
We prove that it is also fully opaque following~$\projection{\tau'}$.

The distortion function $\tweak{f}{f'}$ is a bijection from $\projection{\tau}(\Trace{\TA})$ to $\projection{\tau'}(\Trace{\TA})$.
If we show that the image by $\tweak{f}{f'}$ of an opaque trace is opaque, then it is also shown for its inverse function $\tweak{f'}{f}$.
Hence if there exists some non opaque projected trace with~$\tau$, the corresponding projected trace with~$\tau'$ is well defined and is also non opaque, and \emph{vice-versa}. 
Thus $\projection{\tau}(\Trace{\TA})$ contains a non opaque trace if and only if $\projection{\tau'}(\Trace{\TA})$ also does, which is the expected result.
We now detail the proof of this property of $\tweak{f}{f'}$.

Let $\run , \run'$ be runs on~$\TA$ such that $\run$ is private, $\run'$~is public and $\projection{\tau}(\Trace{\run}) = \projection{\tau}(\Trace{\run'})$.
Since $\tweak{f}{f'}$ preserves both the integral parts and the order of fractional parts of time sequences, we have $\timed{\Trace{\run}} \simeq \tweak{f}{f'}(\timed{\Trace{\run}})$ and the timed word $\tweak{f}{f'}(\Trace{\run})$, denoted by~$w$, is equivalent to $\Trace{\run}$.
According to \cref{lem:folklore:equivalence}, $w$ is hence the trace of a private run in~$\TA$.
We proceed similarly to get $w' = \tweak{f}{f'}(\Trace{\run'})$ the trace of a public run in~$\TA$.
Then, since $\projection{\tau}(\Trace{\run}) = \projection{\tau}(\Trace{\run'})$, we can apply the distortion function $\tweak{f}{f'}$ on both sides and get:
\[\tweak{f}{f'}(\projection{\tau}(\Trace{\run})) = \tweak{f}{f'}(\projection{\tau}(\Trace{\run'})). \]
Applying the previous lemma gives us the following identity.
\[ \projection{\tau'}(\tweak{f}{f'}(\Trace{\run})) = \projection{\tau'}(\tweak{f}{f'}(\Trace{\run'}))\]
\ie{} $\projection{\tau'}(w) = \projection{\tau'}(w')$, this projected trace being thus produced both by a private and by a public run.
This proves that if a trace $v$ is opaque for $\projection{\tau}$ then $\tweak{f}{f'}(v)$ is an opaque trace for~$\projection{\tau'}$.

This concludes the proof that opacity notions are the same when defined by projections associated with equivalent time sequences.
\end{proof}

A consequence of \cref{prop:equivalent-time-seq:opacity} is that we only need to study opacity properties for \emph{some} time sequences to decide them on \emph{all} equivalent time sequences.
Given a time sequence~$\tau$ of size~$N$, we choose a representing equivalent time sequence~$\tausimple$, with only rational times and of reasonable size.
We proceed as follows.
We denote by~$N_f$ the number of different fractional parts appearing in~$\tau$ also different from~$0$.
We deduce from~$\tau$ the order of the fractional parts of the times in~$\tausimple$ and represent it by a function $s : \interval{0}{N-1} \longrightarrow \interval{0}{N_f}$, such that for all $i,j \in \interval{0}{N-1}$,$s(i) \leq s(j)$ iff $\fract{\tau_i} \leq \fract{\tau_j}$; and $s(i) = 0$ iff $\fract{\tau_i} = 0$.
Notice that $(\frac{s(0)}{N_f+1}, \dots, \frac{s(N-1)}{N_f+1}) \simeq (\fract{\tau_0}, \dots, \fract{\tau_{N-1}})$.
Then for each $i \in \interval{0}{N-1}$ we set $\tausimple_i = \intpart{\tau_i} + \frac{s(i)}{N_f+1}$.
This way, we have $\tausimple \simeq \tau$.
Moreover, if we take another $\tau' \simeq \tau$ as starting point, this construction gives the same time sequence $\tausimple' = \tausimple$.

Time sequences of the above chosen form are called \emph{simple} time sequences in the following because all their values are rational and of reasonable size.
These are the time sequences we use to study opacity properties of all equivalent time sequences.

Given a simple time sequence~$\tausimple$, we can now build a timed automaton~$\TA_{\tausimple}$ which simulates the observations with the sequence~$\tausimple$ (in a similar way as the unfolding of the first $N$ observations), and show that $\TA$ is fully opaque with observations following~$\tausimple$ if and only if $\TA_{\tausimple}$ is fully opaque with respect to the first $N$ observations.
The TA $\TA_{\tausimple}$ is obtained by making $2N+1$ copies of~$\TA$, among which $N+1$ copies correspond to the sensor turned off, and $N$ to a sensor turned on (waiting for the next observable action of the system).
The choice of~$\tausimple$ guarantees that only rational constants will occur in the constraints, and we can ensure that $\TA_{\tausimple}$ respects our definition of timed automata by multiplying all constants by $(N_f + 1)$ to obtain only integral constants without changing the opacity properties (indeed, languages comparisons remain the same when all timestamps are multiplied by a factor $(N_f + 1)$).
For this purpose, if $k$ is an integer, we denote by $k\cdot g$ the constraint~$g$ in which all constants have been multiplied by~$k$.

We illustrate this construction in \cref{figure:finite-obs:Nfixed} and formalize it in the following definition.

\begin{defi}[$\tausimple$-unfolded TA]\label{def:unfoldtau}
Let $\TA = \TAprivextend$ be a TA and let $\tausimple$ be a simple time sequence of size~$N$, with $N_f = \vert\set{\fract{\tausimple_i}\mid 0 \leq i < N  }\setminus \set{0}\vert$ the number of different fractional parts in $\tausimple$ also different from~$0$.
We call \emph{$\tausimple$-unfolding of~$\TA$} the timed automaton $\TA_{\tausimple} = (\ActionSet, \LocSet', \loc^0_{0~\textit{Off}}, \PrivSet', \FinalSet', \ClockSet \cup \set{z}, \invariant', \EdgeSet')$ where
	\begin{enumerate}
		\item $\LocSet' = \bigcup\limits_{i=0}^{N-1} \LocSet^i_{\textit{On}} \cup \bigcup\limits_{j=0}^{N} \LocSet^j_{\textit{Off}}$ where the sets $\LocSet^i_{\textit{On}}$ and $\LocSet^j_{\textit{Off}}$ are $2N+1$ disjoint copies of $\LocSet$ where each location $\loc \in \LocSet$ has been renamed into $\loc^i_{\textit{On}} \in \LocSet^i_{\textit{On}}$ and $\loc^j_{\textit{Off}} \in \LocSet^j_{\textit{Off}}$: for $0 \leq i \leq N-1$, $\LocSet^i_{\textit{On}} = \lbrace \loc^i_{\textit{On}} \mid \loc \in \LocSet \rbrace$, and for $0 \leq j \leq N$, $\LocSet^j_{\textit{Off}} = \lbrace \loc^j_{\textit{Off}} \mid \loc \in \LocSet \rbrace$;
		\item $\loc^0_{0~\textit{Off}} \in \LocSet^0_{\textit{Off}}$ is the initial location;
		\item $\PrivSet' = \bigcup\limits_{i=0}^{N-1} \PrivSetOn^i \cup \bigcup\limits_{j=0}^{N} \PrivSetOff^i$
where $\PrivSetOn^i$ and $\PrivSetOff^j$ are the copies respectively within $\LocSet^i_{\textit{On}}$ and $\LocSet^j_{\textit{Off}}$ of the set of private locations of~$\TA$;
		\item $\FinalSet' = (\bigcup\limits_{i=0}^{N-1} \LocSet_{\textit{On}, f}^i) \cup (\bigcup\limits_{j=0}^{N-1}\LocSet_{\textit{Off}, f}^j) \cup \LocSet_{\textit{Off}}^N$ where $\LocSet_{\textit{On}, f}^i$ and $\LocSet_{\textit{Off}, f}^j$ are the copies respectively within $\LocSet^i_{\textit{On}}$ and $\LocSet^j_{\textit{Off}}$ of the set of final locations of~$\TA$;
		\item $\invariant' (\loc^i_{\textit{On}}) = N_f \cdot (\invariant (\loc) \land (z \leq \tau_{i+1}))$ and $\invariant' (\loc^j_{\textit{Off}}) = N_f \cdot (\invariant (\loc) \land (z \leq \tau_{j}))$ for $\loc \in \LocSet$ and $0 \leq i \leq N-1$ and $0 \leq j \leq N$f
extends $\invariant$ to each $\LocSet_{i, \textit{On}}$ and $\LocSet_{j, \textit{On}}$ while keeping all constants integral;
		\item $\EdgeSet' = \bigcup\limits_{i=0}^{N-1} \EdgeSet^i_{\textit{On}} \cup \bigcup\limits_{j=0}^{N} \EdgeSet^j_{\textit{Off}} \cup \bigcup\limits_{i=0}^{N-2} \EdgeSet^i_{\textit{On/On}} \cup \bigcup\limits_{i=0}^{N-1} \EdgeSet^i_{\textit{Off/On}} \cup \bigcup\limits_{i=0}^{N-1} \EdgeSet^i_{\textit{obs}}$ is the set of transitions where, given $0 \leq i \leq N-1$ and $0 \leq j \leq N$
		\begin{itemize}
		\item $\EdgeSet^i_{\textit{On}} = \big\lbrace (\loc^i_{\textit{On}}, \silentaction, N_f \cdot g, R, \loc'^i_{\textit{On}}) \mid (\loc, \silentaction, g, R, \loc') \in \EdgeSet \big\rbrace$;
		\item  $\EdgeSet^j_{\textit{Off}} = \big\lbrace (\loc^j_{\textit{Off}}, \silentaction, N_f \cdot g, R, \loc'^j_{\textit{Off}}) \mid (\loc, a, g, R, \loc') \in \EdgeSet \land a \in \ActionSet \cup \set{\silentaction}\big\rbrace$;
		\item $\EdgeSet^i_{\textit{On/On}} = \big\lbrace (\loc^i_{\textit{On}}, \silentaction, N_f \cdot (z = \tau_{i+1}), \emptyset, \loc^{i+1}_{\textit{On}}) \mid \loc \in \LocSet \big\rbrace$;
		\item $\EdgeSet^i_{\textit{Off/On}} = \big\lbrace (\loc^i_{\textit{Off}}, \silentaction, N_f \cdot (z = \tau_{i}), \emptyset, \loc^{i}_{\textit{On}}) \mid \loc \in \LocSet \big\rbrace$;
		\item $\EdgeSet^i_{\textit{obs}} = \big\lbrace (\loc^i_{\textit{On}}, a, N_f \cdot g, R, \loc'^{i+1}_{\textit{Off}}) \mid (\loc, a, g, R, \loc') \in \EdgeSet, a \in \ActionSet \big\rbrace$.
		\end{itemize}
	\end{enumerate}
\end{defi}
\begin{prop}
Let $\TA$ be a TA, $\tausimple$ a simple time sequence and $\TA_{\tausimple}$ the $\tausimple$-unfolding of~$\TA$.
Then $\TA_{\tausimple}$ is weakly (resp.\ fully) opaque if and only if $\TA$ is weakly (resp.\ fully) opaque against~$\strategy{\tausimple}$.
\end{prop}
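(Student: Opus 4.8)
The plan is to establish a trace- and privacy-preserving correspondence between the complete runs of~$\TA$, observed through~$\strategy{\tausimple}$, and the complete runs of~$\TA_{\tausimple}$. Concretely, I would prove the two language identities $\PrivateTr{\TA_{\tausimple}} = \projection{\tausimple}(\PrivateTr{\TA})$ and $\PublicTr{\TA_{\tausimple}} = \projection{\tausimple}(\PublicTr{\TA})$, up to the uniform rescaling of all timestamps by the common factor introduced to clear denominators in \cref{def:unfoldtau}. Since multiplying every timestamp by a fixed positive factor preserves both inclusion and equality of timed languages (as noted before \cref{def:unfoldtau}), these two identities immediately give that $\PrivateTr{\TA_{\tausimple}} \subseteq \PublicTr{\TA_{\tausimple}}$ iff $\projection{\tausimple}(\PrivateTr{\TA}) \subseteq \projection{\tausimple}(\PublicTr{\TA})$, i.e.\ the weak statement, and likewise for equality, i.e.\ the full statement; alternatively one may prove only the weak case and invoke \cref{th:weak-full-eq}. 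The whole argument mirrors the correctness of the $N$-observation unfolding~$\Unfold{\TA}$ from \cref{sec:Nfirst}, the only change being that the ``next observation'' is now triggered by the scheduled window boundaries $\tausimple_0, \tausimple_1, \dots$ rather than by every observable action.

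For the inclusion $\projection{\tausimple}(\PrivateTr{\TA}) \subseteq \PrivateTr{\TA_{\tausimple}}$ (and its public analogue) I would take a complete run~$\run$ of~$\TA$ and build a run of~$\TA_{\tausimple}$ simulating it. The simulation keeps the same locations and transitions of~$\TA$ but threads them through the copies, using the never-reset global clock~$z$ to measure total elapsed time: it starts in the off copy~$0$, fires each $\TA$-transition as its silent/off-copy image, and at each scheduled time~$\tausimple_i$ inserts the corresponding zero-delay switch ($\EdgeSet^i_{\textit{Off/On}}$ from off copy~$i$ to on copy~$i$, or $\EdgeSet^i_{\textit{On/On}}$ to the next on copy when the window was empty). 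The first observable action occurring inside a window $[\tausimple_i, \tausimple_{i+1}]$ is fired through the observation edge~$\EdgeSet^i_{\textit{obs}}$, which records it and moves to off copy~$i+1$; every further observable action of the same window then occurs while the sensor is off and is relabelled~$\silentaction$, as are all actions before~$\tausimple_0$ and after the \mathnth{N} observation. Hence the trace of the simulating run is exactly $\projection{\tausimple}(\Trace{\run})$ (rescaled), and since $\PrivSet'$ contains a copy of each private location in every on and off copy, the simulating run visits~$\PrivSet'$ iff $\run$ visits~$\PrivSet$, so the private/public status is preserved.

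For the converse inclusion I would take a complete run of~$\TA_{\tausimple}$, erase the copy indices, and delete the switch transitions $\EdgeSet^i_{\textit{Off/On}}$ and $\EdgeSet^i_{\textit{On/On}}$, recovering a path of~$\TA$; the point is then to check that the invariants $z \leq \tausimple_j$ together with the guards $z = \tausimple_j$ force each switch to be taken at exactly the scheduled instant, so that the observable edges fired in the on copies are precisely the first actions of the successive windows. This shows that the recovered $\TA$-behaviour projects under~$\projection{\tausimple}$ onto the trace read by~$\TA_{\tausimple}$, again with matching private/public status.

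The main obstacle is the careful verification of the timing discipline carried by~$z$ and of the acceptance in the terminal off copy~$N$. On the timing side one must handle the boundary cases cleanly: observations landing exactly on a window boundary, several $\tausimple_i$ sharing an integral or a fractional part, and the unbounded last window with $\tausimple_N = +\infty$; here the fact that we work with the \emph{simple} representative~$\tausimple$ and its integral rescaling keeps all guards well defined. The acceptance of the whole off copy~$N$ encodes exhaustion of the attacker's budget exactly as the last copy does for~$\Unfold{\TA}$ in \cref{sec:Nfirst}: once $N$ observations have been made, the remaining behaviour is unobservable and hence must not constrain the observed word. Because this all-accepting copy, the duplication of private locations, and the silent relabelling are all applied symmetrically to private and public runs, the correspondence respects the private/public partition, which is precisely what makes the two language identities---and therefore the equivalence of the opacity notions---go through.
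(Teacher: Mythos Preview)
Your proposal is correct and follows essentially the same approach as the paper: the paper's proof also establishes a run-to-run correspondence between $\TA$ and $\TA_{\tausimple}$, observes that the copy index records exactly the information needed by~$\projection{\tausimple}$ (sensor on/off and which~$\tausimple_i$ is pending), and concludes that $\Trace{\TA_{\tausimple}}$ is $\projection{\tausimple}(\Trace{\TA})$ up to the $(N_f+1)$ rescaling. Your write-up is in fact more explicit than the paper's sketch about the two inclusions, the role of the global clock~$z$, the switch edges, and the all-accepting terminal off copy.
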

\begin{proof}

There is a bijective correspondence between the runs of~$\TA$ and those of~$\TA_{\tausimple}$.
In particular, every run~$\run$ on~$\TA$ is associated with a run~$\run'$ on~$\TA_{\tausimple}$ such that every location visited in~$\run'$ is the unique copy in~$\TA_{\tausimple}$ of the corresponding location in~$\TA$ with same private or public property and that keeps in memory the observation time~$\tau_i$ which currently must be reached before the next observation.

This is exactly the information contained in the projection~$\projection{\tausimple}$.
For instance, if an action has already been observed since the last observation time~$\tausimple_{i}$, the current action is not observed and the current location is in the copy $\LocSet^{i+1}_{\textit{Off}}$.
This corresponds to the case when a letter $(a,t)$ does not appear in the projected trace $\projection{\tausimple}(w \cdot (a,t))$ in the definition of the projection.
Conversely, the letter is kept in the projected trace when no action was observed after the last observation time $\tausimple_i$, which happens in the copy $\LocSet^{i}_{\textit{On}}$.
From this copy, any observable transition leads to copy $\LocSet^{i+1}_{\textit{Off}}$.
Once the last copy $\LocSet^{N+1}_{\textit{Off}}$ is reached, no observation can be performed any more by the attacker.
This corresponds to the $N$-boundedness of the language of the projected traces $\projection{\tausimple}(\Trace{\TA})$, given by the length of the time sequence~$\tausimple$.

The timed automaton $\TA_{\tausimple}$ produces a set of traces which is exactly $\projection{\tausimple}(\Trace{\TA})$ where all times are multiplied by a factor $N_f +1$, which ensure all constants of the TA are integers. 
\end{proof}

\begin{thm}\label{static-N}
Let $N \in \setN$, and let $\tau$ be a $N$-finite time sequence.
The weak and full opacity against the time selection function~$\strategy{\tau}$ decision problem are \coNEXPTIME-complete.
\end{thm}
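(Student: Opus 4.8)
The plan is to sandwich the static-observation problem between the two results already available for the first-$N$ setting, exploiting the $\tausimple$-unfolding of \cref{def:unfoldtau}. Membership will follow by reducing opacity against $\strategy{\tau}$ to opacity w.r.t.\ the first $N$ observations of a polynomially larger automaton; hardness will follow by embedding the first-$N$ setting (equivalently, the rational-expression construction of \cref{sec:conexphard-Nfirst}) into the static setting. Throughout, \cref{th:weak-full-eq} lets me argue only for weak opacity, the full-opacity case being inter-reducible.

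For membership I would proceed in three steps. First, replace $\tau$ by the simple time sequence $\tausimple \simeq \tau$ built just before \cref{def:unfoldtau}; this is computable in polynomial time, and by \cref{prop:equivalent-time-seq:opacity} the automaton $\TA$ is weakly opaque against $\strategy{\tau}$ iff it is weakly opaque against $\strategy{\tausimple}$. Second, build the $\tausimple$-unfolding $\TA_{\tausimple}$ of \cref{def:unfoldtau}: it consists of $2N+1$ copies of $\TA$ plus one clock, and the multiplication of all constants by $N_f+1$ only adds polynomially many bits, so $\TA_{\tausimple}$ has size polynomial in $|\TA|+|\tau|$. By the proposition immediately preceding the theorem, $\TA$ is weakly opaque against $\strategy{\tausimple}$ iff $\TA_{\tausimple}$ is weakly opaque w.r.t.\ its first $N$ observations, the point being that the observable language of $\TA_{\tausimple}$ is $N$-bounded (after the $N$-th observation every run is trapped in an ``off'' copy where nothing observable can occur). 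Third, invoke \cref{cor:Nfirst:w-f-opacity} --- equivalently \cref{prop:Nfirst:inclusion} applied to the private and public automata of $\TA_{\tausimple}$ --- which places weak opacity of an automaton with $N$-bounded observable language in \coNEXPTIME{}. As all three steps are polynomial, the whole procedure stays in \coNEXPTIME{}.

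For hardness I would reuse the \NEXPTIME{}-complete non-equivalence problem for rational expressions without Kleene star and with squaring~\cite{SM73}, exactly as in \cref{sec:conexphard-Nfirst}: two expressions $\sqexp_1,\sqexp_2$ of size $n$ yield polynomial discrete-time TAs whose languages are $N$-bounded with $N=2^{n-1}$ and whose $j$-th observable action necessarily fires at time $2j$. The key observation is that the time selection function following $\tau=(0,2,4,\dots,2(N-1))$ makes sensor $i$ catch precisely the action at time $2i$, i.e.\ the $i$-th letter; since the languages are $N$-bounded this recovers the whole word, so $\strategy{\tau}$ realises exactly the first-$N$ projection. Plugging these TAs into the inclusion-to-opacity gadget of \cref{sec:reducWincl} then makes non-(full-)opacity against $\strategy{\tau}$ equivalent to $\sqexp_1\neq\sqexp_2$, and \cref{th:weak-full-eq} carries this over to weak opacity.

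The main obstacle is the size of this time sequence: $\tau=(0,2,\dots,2(N-1))$ has length $N=2^{n-1}$, so listing it naively breaks polynomiality of the reduction. The crux is therefore to reconcile the binary encoding of the observation budget (used on the hardness side, where $N$ is exponential in $n$) with the explicit sequence $\tau$ demanded as input on the static side. I expect this to be settled by presenting $\tau$ through its arithmetic, hence logarithmically describable, shape and checking that the membership procedure only ever manipulates $\tausimple$ and $\TA_{\tausimple}$ --- whose descriptions remain exponential-but-succinct and are thus exactly the inputs for which \cref{cor:Nfirst:w-f-opacity} yields \coNEXPTIME{}; equivalently, one re-derives the squaring construction so that the distinguishing information is carried by binary-encoded observation times rather than by an exponential explicit listing. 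Once this encoding point is fixed, \coNEXPTIME{}-completeness follows immediately by combining the membership reduction with the embedding of the first-$N$ hardness.
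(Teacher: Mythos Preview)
Your membership argument is correct and is exactly the paper's: pass to the equivalent simple sequence $\tausimple$ via \cref{prop:equivalent-time-seq:opacity}, build the $\tausimple$-unfolding (polynomial in $|\TA|+|\tau|$), and invoke \cref{cor:Nfirst:w-f-opacity} on the resulting automaton with $N$-bounded observable language.

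For hardness you overcomplicate. The paper simply takes $\tau=(0,0,\dots,0)$ of length~$N$: since every sensor is already switched on at time~$0$, one has $\projection{\tau}=\projection{N}$ \emph{literally}, so the static problem with this particular $\tau$ \emph{is} the first-$N$ problem and the \coNEXPTIME{}-hardness of \cref{sec:conexphard-Nfirst} transfers verbatim. There is no need to argue that ``sensor $i$ catches the action at time $2i$'' in the discrete-time TAs built from the rational expressions, and no need for the inclusion-to-opacity gadget of \cref{sec:reducWincl}. Your choice $(0,2,4,\dots,2(N-1))$ would also work, but only after the extra timing argument you sketch.

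The encoding obstacle you flag---that an explicitly listed length-$N$ sequence is exponential in the binary size of~$N$, so the reduction from the size-$n$ rational-expression problem produces a static instance of size $\Theta(2^{n})$---is a genuine concern, and the paper does not address it: it simply invokes the first-$N$ hardness without discussing how $\tau$ is encoded. Your proposed fix (a succinct arithmetic description of $\tau$) would change the input convention of the decision problem and hence the statement being proved, so it does not close the gap as formulated; it is best to note the issue and move on rather than to paper over it.
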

\begin{proof}
Again, we solely focus on weak opacity since both problems remain inter-reducible.

As a consequence of \cref{prop:equivalent-time-seq:opacity}, $\TA$~is weakly opaque against the time selection function~$\strategy{\tau}$ if and only if it is weakly opaque against the time selection function following the simple time sequence equivalent to~$\tau$.
It thus suffices to test weak opacity of the unfolded TA $\TA_{\tausimple}$.
As stated above, the language of this TA is $N$-bounded, hence we can apply the \coNEXPTIME algorithm of \cref{sec:Nfirst} to check the weak opacity of~$\TA_{\tausimple}$.

We get the hardness result by considering the particular case of the time sequence of length $N$ whose elements are all~0, whose associated projection is exactly~$\projection{N}$, the one for which the hardness result obtained in \cref{sec:Nfirst} holds.
\end{proof}
\begin{figure}[tb]
	\begin{subfigure}[b]{0.2\textwidth}
		\begin{tikzpicture}[pta, node distance=2cm, thin]

		\node[location, initial] (li) at (-8, -2) {$\locinit$};
		\node[location, final] (lf) at (-6, -2) {$\locfinal$};

		\path
		(li) edge[loop above] node{$\styleact{a}$} (li)
		(li) edge node[align=center]{$\styleclock{x}=1$} node[below]{$\styleact{b}$} (lf)
		;

		\end{tikzpicture}
		\caption{TA}
		\label{figure:finite-obs:Nfixed:TA}
	\end{subfigure}

	\begin{subfigure}[b]{0.78\textwidth}
   \centering
	\begin{tikzpicture}[pta, node distance=2cm, thin]

	\node[location, initial] (li0off) at (-3, 0) {$\locinit^0$};
	\node[location, final] (lf0off) at (-1, 0) {$\locfinal^0$};
	
	\path
	(li0off) edge[loop above, align=center] node{$\styleclock{z} < \tau_0$ \\ $\styleact{\silentaction}$ } (li0off)
	(li0off) edge node[align=center]{$\styleclock{x}=1$\\$\land \styleclock{z}<\tau_0$ } node[below]{$\styleact{\silentaction}$} (lf0off)
	;

	\node[location] (li1off) at (1, 0) {$\locinit^1$};
	\node[location, final] (lf1off) at (3, 0) {$\locfinal^1$};
	
	\path
	(li1off) edge[loop above, align=center] node{$\styleclock{z} < \tau_1$ \\ $\styleact{\silentaction}$} (li1off)
	(li1off) edge node[align=center]{$\styleclock{x}=1$\\$\land \styleclock{z}<\tau_1$} node[below]{ $\styleact{\silentaction}$} (lf1off)
	;

	\node[location] (li2off) at (5, 0) {$\locinit^2$};
	\node[location, final] (lf2off) at (7, 0) {$\locfinal^2$};
	
	\path
	(li2off) edge[loop above, align=center] node{$\styleclock{z} < \tau_2$ \\ $\styleact{\silentaction}$} (li2off)
	(li2off) edge node[align=center]{$\styleclock{x}=1$\\$\land \styleclock{z}<\tau_2$} node[below]{$\styleact{\silentaction}$} (lf2off)
	;

	\node[location] (li0on) at (-3, -3) {$\locinit^0$};
	\node[location, final] (lf0on) at (-1, -3) {$\locfinal^0$};
	
	\path
	(li0on) edge node{$\styleact{a}$} (li1off)
	(li0on) edge node[align=center, below, xshift=-3pt, yshift=-2.5pt]{$\styleclock{x}=1$ \\ $\styleact{b}$} (lf1off)
	;

	\node[location] (li1on) at (1, -3) {$\locinit^1$};
	\node[location, final] (lf1on) at (3, -3) {$\locfinal^1$};
	
	\path
	(li1on) edge node{$\styleact{a}$} (li2off)
	(li1on) edge node[align=center, below right, yshift=-1em]{$\styleclock{x}=1$ \\ $\styleact{b}$} (lf2off)
	;

	\path
	(li0off) edge node[align=center, left]{$\styleclock{z}= \tau_0$ \\ $\styleact{o_0}$} (li0on)
	(li1off) edge node[align=center, right]{$\styleclock{z}= \tau_1$ \\ $\styleact{o_1}$} (li1on)
	;

	\path
	(li0on) edge[bend right] node[align=center, below=+0.5pt]{$\styleclock{z}=\tau_1$ \\ $\styleact{o_1}$} (li1on)
	;

 	\node[left of=li0off] {\textit{Off}};
 	\node[left of=li0on] {\textit{On}};

	\node[draw, dashed, fit=(li0off)(lf0off)(li0on)(lf0on), inner xsep=8pt, inner ysep=3.5em, label={[label distance=0.5em]above:{\textbf{0 observation}}} ] {};
	\node[draw, dashed, fit=(li1off)(lf1off)(li1on)(lf1on), inner xsep=8pt, inner ysep=3.5em, label={[label distance=0.5em]above:{\textbf{1 observation}}} ] {};
	\node[draw, dashed, fit=(li2off)(lf2off), inner xsep=8pt, inner ysep=3.5em, label={[label distance=0.5em]above:{\textbf{2 observations}}} ] {};

	\end{tikzpicture}
  \caption{$\tau$-unfolding for $N=2$}

  \end{subfigure}
  \caption{Timed automaton and its $\tau$-unfolding}
  \label{figure:finite-obs:Nfixed}
\end{figure}
\subsection{$N$ times with dynamical strategy}\label{ss:N-dynamic}

In the previous two sections, the strategy of the attacker did not depend on  what they observed: they either observed the first $N$ observations or had $N$~times that were \emph{a~priori} selected.
However, an attacker may wish to adapt their strategy based on the timed word they observed so far.
In this section, we only assume that we are given~$N$ such that the attacker's time selection function is $N$-finite.

\begin{defi}
Let $\TA$ be a TA and $N \in \setN$.
We say that $\TA$ is \emph{weakly (resp.\ fully) opaque \wrt{} $N$ dynamic observations} iff for all $N$-finite time selection function $\strategy{}$
$\TA$ is weakly (resp.\ fully) opaque against $\strategy{}$.
\end{defi}
\defProblem{Weak (resp.\ full) opacity \wrt{} $N$ dynamic observations decision problem}
{A TA~$\TA$, $N\in \setN$}
{Is $\TA$ weakly (resp.\ fully) opaque \wrt{} $N$ dynamic observations?}
\begin{thm}\label{dyn-N}
The weak (resp.\ full) opacity \wrt{} $N$ dynamic observations decision problem is
\coNEXPTIME{}-complete.
\end{thm}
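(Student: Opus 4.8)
The plan is to establish matching bounds by reducing everything to the static ($N$ fixed times) setting of \cref{sec:Nfixed}, which in turn rests on the first-$N$ machinery of \cref{sec:Nfirst}. As usual I would first invoke \cref{th:weak-full-eq} to restrict attention to weak opacity, the full case following by inter-reducibility. The conceptual heart of the membership argument is that an \emph{adaptive} attacker collapses to a \emph{static} one \emph{along the single observation it produces on a witnessing run}. Concretely, I would prove the equivalence: $\TA$ is \emph{not} weakly opaque \wrt{} $N$ dynamic observations if and only if there exists an $N$-finite time sequence~$\tau$ such that $\TA$ is not weakly opaque against~$\strategy{\tau}$. The $(\Leftarrow)$ direction is immediate since each $\strategy{\tau}$ is a particular $N$-finite time selection function. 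For $(\Rightarrow)$, suppose some $N$-finite adaptive $\strategy{}$ and some private run~$\run$ witness non-opacity, with observed word $o=\projection{}(\Trace{\run})\notin\projection{}(\PublicTr{\TA})$, say $o=(a_1,t_1)\cdots(a_k,t_k)$ with $k\leq N$. The key point is that $\strategy{}$ uses a \emph{fixed} sequence of switch-on thresholds $\strategy{}(o_{<0}),\dots,\strategy{}(o_{<k-1})$ along~$o$, and these depend only on~$o$, not on which run produced~$o$. Defining $\tau$ from these thresholds (clamping $\tau_j \assign \max(\strategy{}(o_{<j}), t_j)$ so that the index bookkeeping $\textit{ind}_\tau$ of \cref{sec:Nfixed} does not skip an observation) yields a static sequence for which $\projection{\tau}$ agrees with $\projection{}$ on every run producing~$o$. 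Hence $o\in\projection{\tau}(\PrivateTr{\TA})$ via~$\run$ while no public run attains~$o$ under~$\strategy{\tau}$, so $\strategy{\tau}$ already breaks opacity.

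This equivalence shows that the dynamic problem is exactly the \emph{universal} quantification of the static problem over all $N$-finite time sequences. For \coNEXPTIME{}-membership I would therefore put non-opacity into \NEXPTIME{}: guess a witnessing time sequence and a separating word. Using \cref{prop:equivalent-time-seq:opacity} and the normalization of \cref{sec:Nfixed}, it suffices to guess a \emph{simple} time sequence~$\tausimple$ of length at most~$N$, build the $\tausimple$-unfolding $\TA_{\tausimple}$ of \cref{def:unfoldtau}, and then run the \NEXPTIME{} non-inclusion procedure of \cref{prop:Nfirst:inclusion} on its $N$-bounded private and public languages (this is legitimate because the language of $\TA_{\tausimple}$ is $N$-bounded, and the reductions of \cref{sec:reducWincl} relate weak opacity of $\TA_{\tausimple}$ to $N$-bounded language inclusion). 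The integral parts of the guessed $\tausimple$ need not be unbounded: an observation threshold beyond the ``saturation'' of the witnessing run has the same effect as observing at saturation, and since a witnessing run may be taken to follow the region automaton of~$\TA$, only observation times whose integral parts are bounded by (at most) the length bound of \cref{lemma:N-obs:weak-full:word-size} matter. This keeps the guess of exponential description size, so the whole certificate (time sequence plus the separating word of \cref{lemma:N-obs:weak-full:word-size}) is exponential and the verification runs in \NEXPTIME{}.

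For hardness I would reuse the instances of \cref{sec:conexphard-Nfirst} built from inequivalence of square-expressions: two discrete-time TAs $\TA_1,\TA_2$ with $N$-bounded languages ($N=2^{n-1}$). On such instances every $N$-finite strategy observes no more than the whole word; moreover, if $\PrivateTr{\TA}\subseteq\PublicTr{\TA}$ then \emph{every} private trace is itself a public trace, so weak opacity holds against every strategy, whereas if inclusion fails the full-observation strategy (the zero time sequence of length~$N$, which is $N$-finite) projects identically on the $N$-bounded language and exhibits an unmatched private trace. Thus on these instances weak opacity \wrt{} $N$ dynamic observations coincides with $N$-bounded language inclusion, and the \coNEXPTIME{}-hardness of \cref{prop:Nfirst:inclusion} transfers verbatim.

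The step I expect to be the main obstacle is the collapse lemma, and in particular verifying that the static sequence~$\tau$ extracted from~$\strategy{}$ reproduces $\strategy{}$ \emph{exactly} on all runs yielding~$o$: one must handle the subtleties of the $\textit{ind}_\tau$ definition (repeated or skipped thresholds when $\strategy{}$ chose a threshold below the previously observed timestamp), ensuring both that $\strategy{\tau}$ still observes $o$ on~$\run$ and that its set of matching public runs is unchanged, so that the no-match status is preserved. The remaining delicate point is the a~priori bound on the integral parts of the guessed observation times, which I would obtain from the region-automaton analysis already used in \cref{lemma:N-obs:weak-full:region-number,lemma:N-obs:weak-full:word-size}; once these two points are secured, membership and hardness combine to give \coNEXPTIME{}-completeness.
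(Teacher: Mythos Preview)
Your collapse lemma (dynamic attacker $\Rightarrow$ static time sequence along the witnessing observation) is sound, and the hardness argument is exactly the paper's. The membership argument, however, has a complexity gap. You bound the integral parts of the guessed~$\tausimple$ by the word-length bound of \cref{lemma:N-obs:weak-full:word-size}, which is \emph{doubly} exponential in the input. That is fine for the description of~$\tausimple$ (its binary encoding is exponential), but once you feed~$\tausimple$ into the $\tausimple$-unfolding of \cref{def:unfoldtau}, the clock~$z$ is compared against constants of order~$\tausimple_{N-1}$, so $\TA_{\tausimple}$ carries doubly-exponential constants. Re-applying \cref{lemma:N-obs:weak-full:region-number,lemma:N-obs:weak-full:word-size} to $\TA_{\tausimple}$ then yields a separating word whose description is doubly exponential in the \emph{original} input, and the overall procedure is only in 2-\NEXPTIME{}. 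The sentence ``the whole certificate \ldots\ is exponential'' conflates ``exponential in $\TA_{\tausimple}$'' with ``exponential in~$\TA$''.

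The paper avoids this blowup by a different reduction: instead of guessing~$\tau$ and then unfolding, it makes the attacker's switch-on choices part of the \emph{system's} observable behaviour. Concretely, it builds an $N$-free-unfolding~$\TA_N$ with fresh observable letters $o_0,\dots,o_{N-1}$ on unguarded transitions from the ``Off'' to the ``On'' copies; crucially $\TA_N$ has the \emph{same clocks and constants} as~$\TA$. One then checks weak/full opacity of~$\TA_N$ against~$\strategy{2N}$: a run of~$\TA$ observed as $(a_1,t_1)\cdots(a_n,t_n)$ under some $N$-finite~$\strategy{}$ corresponds to a run of~$\TA_N$ observed as $(o_0,\tau_0)(a_1,t_1)(o_1,\tau_1)\cdots$ under~$\strategy{2N}$. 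Because the constants of~$\TA_N$ are unchanged, the \coNEXPTIME{} bound of \cref{cor:Nfirst:w-f-opacity} applies directly; the (possibly doubly-exponential) values~$\tau_i$ are absorbed into the $t^{k_i}$ blocks of the ticked separating word, whose binary description stays exponential. Your collapse lemma is morally the same reduction, but staging it as ``guess $\tau$, then unfold, then solve'' loses the tight complexity; folding the guess into the observed word, as the paper does, is what keeps it in \coNEXPTIME{}.
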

\begin{proof}
Let us start by discussing the hardness.
In \cref{sec:conexphard-Nfirst}, we showed that the 
weak (resp.\ full) opacity against the time selection function $\strategy{N}$ (detecting the first $N$ observations) is \coNEXPTIME{}-hard.
This result was obtained via a reduction from the inequivalence problem for rational expressions with squaring, but without the Kleene star operator.
These expressions produce a language that is exponentially bounded, thus the reduction consisted in selecting $N$ large enough so that everything could be observed.
The same reduction can immediately be used in our dynamic setting: as the traces of runs produced by the system have at most $N$~letters, then there is no better strategy than detecting the first $N$ observations.
Hence, the \coNEXPTIME{}-hardness immediately carries over to the weak and full opacity \wrt{} $N$ dynamic observations decision problems.

We now move to establishing \coNEXPTIME{}-membership.
To do so, we will reduce this
dynamic problem to opacity against the time selection function $\strategy{2N}$ (observing the first $2N$~actions).
Intuitively, we turn the selection of the next timestamp into an observable action of the system.
To do so, we imitate the $\tau$-unfolding of \cref{def:unfoldtau}, though instead of
going to an ``On'' component (where the attacker observes the next action of the system)
when the next timestamp of~$\tau$ is reached, we make this transition observable.
In practice, it mainly means removing every mention of~$z$ from the definition of the $\tau$-unfolded TA and keeping the coefficients as they are in the initial~TA.
Every possible choice of when the attacker starts observing again is allowed in this TA, hence representing every possible time selection function.

Formally, given a TA $\TA = \TAprivextend$ and $N\in\setN$, 
we call \emph{$N$-free-unfolding of~$\TA$} the TA $\TA_{N} = (\ActionSet\cup\{o_i\mid i = 0,\dots, N-1\}, \LocSet', \loc^0_{0~\textit{Off}}, \PrivSet', \FinalSet', \ClockSet, \invariant', \EdgeSet')$
where

\begin{enumerate}
		\item $\LocSet' = \bigcup\limits_{i=0}^{N-1} \LocSet^i_{\textit{On}} \cup \bigcup\limits_{j=0}^{N} \LocSet^j_{\textit{Off}}$ where the sets $\LocSet^i_{\textit{On}}$ and $\LocSet^j_{\textit{Off}}$ are $2N+1$ disjoint copies of $\LocSet$ where each location $\loc \in \LocSet$ has been renamed $\loc^i_{\textit{On}} \in \LocSet^i_{\textit{On}}$ and $\loc^j_{\textit{Off}} \in \LocSet^j_{\textit{Off}}$: for $0 \leq i \leq N-1$, $\LocSet^i_{\textit{On}} = \lbrace \loc^i_{\textit{On}} \mid \loc \in \LocSet \rbrace$, and for $0 \leq j \leq N$, $\LocSet^j_{\textit{Off}} = \lbrace \loc^j_{\textit{Off}} \mid \loc \in \LocSet \rbrace$;
		\item $\loc^0_{0~\textit{Off}} \in \LocSet^0_{\textit{Off}}$ is the initial location;
		\item $\PrivSet' = \bigcup\limits_{i=0}^{N-1} \PrivSetOn^i \cup \bigcup\limits_{j=0}^{N} \PrivSetOff^i$
where $\PrivSetOn^i$ and $\PrivSetOff^j$ are the copies respectively within $\LocSet^i_{\textit{On}}$ and $\LocSet^j_{\textit{Off}}$ of the set of private locations of~$\TA$;
		\item $\FinalSet' = (\bigcup\limits_{i=0}^{N-1} \LocSet_{\textit{On}, f}^i) \cup (\bigcup\limits_{j=0}^{N-1}\LocSet_{\textit{Off}, f}^j) \cup \LocSet_{\textit{Off}}^N$ where $\LocSet_{\textit{On}, f}^i$ and $\LocSet_{\textit{Off}, f}^j$ are the copies respectively within $\LocSet^i_{\textit{On}}$ and $\LocSet^j_{\textit{Off}}$ of the set of final locations of~$\TA$;
		\item $\invariant' (\loc^i_H) = \invariant (\loc)$ 
		for $\loc \in \LocSet$, $H\in\{\textit{Off},\textit{On}\}$ and $0 \leq i \leq N$
extends $\invariant$ to each $\LocSet_{i, \textit{On}}$ and $\LocSet_{j, \textit{On}}$;
		\item $\EdgeSet' = \bigcup\limits_{i=0}^{N-1} \EdgeSet^i_{\textit{On}} \cup \bigcup\limits_{j=0}^{N} \EdgeSet^j_{\textit{Off}} \cup \bigcup\limits_{i=0}^{N-2} \EdgeSet^i_{\textit{On/On}} \cup \bigcup\limits_{i=0}^{N-1} \EdgeSet^i_{\textit{Off/On}} \cup \bigcup\limits_{i=0}^{N-1} \EdgeSet^i_{\textit{obs}}$ is the set of transitions where, given $0 \leq i \leq N-1$ and $0 \leq j \leq N$
		\begin{itemize}
		\item $\EdgeSet^i_{\textit{On}} = \big\lbrace (\loc^i_{\textit{On}}, \silentaction, g, R, \loc'^i_{\textit{On}}) \mid (\loc, \silentaction, g, R, \loc') \in \EdgeSet \big\rbrace$;
		\item  $\EdgeSet^j_{\textit{Off}} = \big\lbrace (\loc^j_{\textit{Off}}, \silentaction,  g, R, \loc'^j_{\textit{Off}}) \mid (\loc, a, g, R, \loc') \in \EdgeSet \land a \in \ActionSet \cup \set{\silentaction} \big\rbrace$;
		\item $\EdgeSet^i_{\textit{On/On}} = \big\lbrace (\loc^i_{\textit{On}}, o_i, true, \emptyset, \loc^{i+1}_{\textit{On}}) \mid \loc \in \LocSet \big\rbrace$;
		\item $\EdgeSet^i_{\textit{Off/On}} = \big\lbrace (\loc^i_{\textit{Off}}, o_i, true, \emptyset, \loc^{i}_{\textit{On}}) \mid \loc \in \LocSet \big\rbrace$; and
		\item $\EdgeSet^i_{\textit{obs}} = \big\lbrace (\loc^i_{\textit{On}}, a, g, R, \loc'^{i+1}_{\textit{Off}}) \mid (\loc, a, g, R, \loc') \in \EdgeSet, a \in \ActionSet \big\rbrace$.
		\end{itemize}
	\end{enumerate}

	We illustrate the aforementioned construction in \cref{figure:finite-obs:Nfree}.
\begin{figure}[tb]
   \centering
	\begin{tikzpicture}[pta, node distance=2cm, thin]

	\node[location, initial] (li0off) at (-3, 0) {$\locinit^0$};
	\node[location, final] (lf0off) at (-1, 0) {$\locfinal^0$};
	
	\path
	(li0off) edge[loop above, align=center] node{$\styleact{\silentaction}$} (li0off)
	(li0off) edge node[above]{$\styleclock{x}=1$} node[below]{$\styleact{\silentaction}$} (lf0off)
	;

	\node[location] (li1off) at (1, 0) {$\locinit^1$};
	\node[location, final] (lf1off) at (3, 0) {$\locfinal^1$};
	
	\path
	(li1off) edge[loop above, align=center] node{$\styleact{\silentaction}$} (li1off)
	(li1off) edge node[above]{$\styleclock{x}=1$} node[below]{$\styleact{\silentaction}$} (lf1off)
	;

	\node[location] (li2off) at (5, 0) {$\locinit^2$};
	\node[location, final] (lf2off) at (7, 0) {$\locfinal^2$};
	
	\path
	(li2off) edge[loop above, align=center] node{$\styleact{\silentaction}$} (li2off)
	(li2off) edge node[above]{$\styleclock{x}=1$} node[below]{$\styleact{\silentaction}$} (lf2off)
	;

	\node[location] (li0on) at (-3, -3) {$\locinit^0$};
	\node[location, final] (lf0on) at (-1, -3) {$\locfinal^0$};
	
	\path
	(li0on) edge node{$\styleact{a}$} (li1off)
	(li0on) edge node[align=center, below, xshift=-3pt, yshift=-2.5pt]{$\styleclock{x}=1$ \\ $\styleact{b}$} (lf1off)
	;

	\node[location] (li1on) at (1, -3) {$\locinit^1$};
	\node[location, final] (lf1on) at (3, -3) {$\locfinal^1$};
	
	\path
	(li1on) edge node{$\styleact{a}$} (li2off)
	(li1on) edge node[align=center, below right, yshift=-1em]{$\styleclock{x}=1$ \\ $\styleact{b}$} (lf2off)
	;

	\path
	(li0off) edge node[align=center, left]{$\styleact{o_0}$} (li0on)
	(li1off) edge node[align=center, right]{$\styleact{o_1}$} (li1on)
	;

	\path
	(li0on) edge[bend right] node[align=center, below=+0.5pt]{$o_1$ } (li1on)
	;

 	\node[left of=li0off] {\textit{Off}};
 	\node[left of=li0on] {\textit{On}};

	\node[draw, dashed, fit=(li0off)(lf0off)(li0on)(lf0on), inner xsep=8pt, inner ysep=3.5em, label={[label distance=0.5em]above:{\textbf{0 observation}}} ] {};
	\node[draw, dashed, fit=(li1off)(lf1off)(li1on)(lf1on), inner xsep=8pt, inner ysep=3.5em, label={[label distance=0.5em]above:{\textbf{1 observation}}} ] {};
	\node[draw, dashed, fit=(li2off)(lf2off), inner xsep=8pt, inner ysep=3.5em, label={[label distance=0.5em]above:{\textbf{2 observations}}} ] {};

	\end{tikzpicture}
  \caption{$N$-free-unfolding of the TA from \cref{figure:finite-obs:Nfixed:TA} for $N=2$}
  \label{figure:finite-obs:Nfree}
\end{figure}

Now, let $\strategy{}$ be an $N$-finite time selection function.
We have the immediate following correspondence:
there exists a run~$\rho$ of~$\TA$ with observation (under~$\strategy{}$)
$w=(a_1,t_1)\dots(a_n,t_n)$ iff 
there exists a run $\rho'$ of~$\TA_{N}$ with observation
$w' = (o_0,\tau_0)(a_1,t_1)(o_1,\tau_1)\dots(a_n,t_n)$ where 
for all $i\geq 0$, $\tau_i=\strategy{}\big((a_1,t_1)\dots(a_i,t_i)\big)$.
As a consequence,   $w \in \projection{}(\PrivateTr{\TA})$ (resp.\ $w\in \projection{}(\PublicTr{\TA})$)
iff $w'\in \projection{2N}(\PrivateTr{\TA})$ (resp.\ $w'\in \projection{2N}(\PublicTr{\TA})$).
Hence, $\TA$ is weakly (resp.\ fully) opaque \wrt{} $N$ dynamic observations
iff $\TA_{N}$ is weakly (resp.\ fully) opaque against the time selection function~$\strategy{2N}$.

The complexity of the \coNEXPTIME{} algorithm provided for \cref{cor:Nfirst:w-f-opacity}
lies mainly in the number of clocks.
So while the number of states in~$\TA_{N}$ is exponentially greater (as $N$ is in binary), testing $\TA_{N}$ against~$\strategy{2N}$ can be done in \coNEXPTIME{}.
Therefore, the weak (resp.\ full) opacity \wrt{} $N$ dynamic observations decision problem lies in \coNEXPTIME{} as well.
\end{proof}
\subsection{Combining restrictions}\label{ss:combined}

In \cref{section:opacity:TA:res}, we considered weak and full opacity for subclasses of~TAs.
In \cref{sec:Nfirst,sec:Nfixed,ss:N-dynamic}, we removed the restrictions on the~TAs, therefore considering the full TA class, and we instead limited the power of the attacker.
It is natural to now consider combining these two kinds of restrictions.
As the weak/full opacity \wrt{} first $N$ observations decision problems are decidable in general, the decidability obviously remains when considering subclasses of TAs, but we will show that the complexity of the algorithms may differ.

Let us now discuss the different combinations.

\begin{thm}\label{oneaction-N}
The weak (resp.\ full) opacity \wrt{} the first $N$ observations decision problem
for one-action TAs is \coNEXPTIME{}-complete.
\end{thm}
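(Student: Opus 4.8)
The plan is to get \coNEXPTIME{}-membership for free and to spend all the effort on hardness. Membership is immediate: one-action TAs form a subclass of TAs, so the \coNEXPTIME{} algorithm underlying \cref{cor:Nfirst:w-f-opacity} (i.e.\ the one of \cref{prop:Nfirst:inclusion}) applies verbatim to a one-action input. Thus I would only need to establish \coNEXPTIME{}-hardness, and by \cref{th:weak-full-eq} it suffices to prove it for one of weak/full opacity — I would target full opacity, which corresponds to language \emph{equality}.

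For hardness I would reuse the reduction of \cref{sec:conexphard-Nfirst} from the equivalence problem for star-free rational expressions with squaring (whose non-equivalence is \NEXPTIME{}-complete \cite{SM73}, so equivalence is \coNEXPTIME{}-complete), but make its output a one-action TA. That reduction builds discrete-time TAs over the multi-letter alphabet $\ActionSet=\{a_1,\dots,a_m\}$ of the expressions; the single obstruction is this multi-letter alphabet. I would remove it by encoding letters into timing: keep the inductive constructions for $+$, $\cdot$, $^*$ and the square gadget, and only alter the base-case letter gadget so that reading $a_j$ instead emits the unique action~$a$ at an integer offset $j$ inside a uniform time window of width $m+1$, with the shared clock reset at the window boundary. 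Then the $i$-th letter occupies the window $[(i-1)(m+1),\,i(m+1))$ and its action lands at time $(i-1)(m+1)+j_i$. I would rescale the integer constants of the square gadget of \cref{sec:conexphard-Nfirst} by the (polynomial) factor $m+1$ — taking an even window width if needed so all guards stay integral — which keeps the construction of polynomial size. Call $\TA_1'$ and $\TA_2'$ the resulting one-action TAs.

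The crucial faithfulness observation is that every produced timed word has only integral timestamps, so on such words the equivalence $\simeq$ of \cref{sec:techtools} collapses to plain equality; the window encoding is therefore a bijection between words of $\Language(\sqexp_k)$ and the timed words of $\Trace{\TA_k'}$, giving $\Language(\sqexp_1)=\Language(\sqexp_2)$ iff $\Trace{\TA_1'}=\Trace{\TA_2'}$. I would then feed $\TA_1',\TA_2'$ into the language-inclusion-to-opacity reduction of \cref{lem:language-inclusion}: its gluing uses only $\silentaction$-transitions and a single private location, so the resulting TA~$\mathcal{O}$ is still one-action, and $\PrivateTr{\mathcal{O}}=\Trace{\TA_1'}$, $\PublicTr{\mathcal{O}}=\Trace{\TA_2'}$. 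Since star-free squaring expressions of size~$n$ describe only words of length at most $2^{n-1}$ and the window encoding preserves the number of actions, setting $N=2^{n-1}$ (written in binary, hence polynomial) makes $\projection{N}$ the identity on $\Trace{\mathcal{O}}$; opacity against $\strategy{N}$ then coincides with plain opacity, so full opacity of $\mathcal{O}$ \wrt{} the first $N$ observations is equivalent to $\Language(\sqexp_1)=\Language(\sqexp_2)$, yielding \coNEXPTIME{}-hardness, which transfers to weak opacity via \cref{th:weak-full-eq}.

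The step I expect to be the main obstacle is precisely the single-action re-encoding: I must guarantee that it introduces no $\simeq$-collisions (so that timed-language equality is preserved exactly) while remaining compatible with the timing invariants exploited by the square gadget. Integrality of all timestamps is the device that resolves both concerns at once — it trivialises $\simeq$ and lets the existing succinct squaring construction be reused after a mere rescaling of its constants.
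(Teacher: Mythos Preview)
Your proposal is correct and follows the same high-level architecture as the paper: membership via \cref{cor:Nfirst:w-f-opacity}, hardness by re-encoding the star-free-with-squaring reduction of \cref{sec:conexphard-Nfirst} over a single action. The only substantive difference is the encoding of the multi-letter alphabet.

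The paper exploits that in the constructions of \cref{figure:discrete-time:EXPSPACE-hardness} two time units elapse between consecutive letters; it therefore keeps the timing unchanged and encodes a second letter $b$ as \emph{two} occurrences of $a$ at the same timestamp (so $a\mapsto a$, $b\mapsto aa$ in 0-time). This is extremely lightweight: no gadget needs rescaling, and the only cost is that the observation bound doubles to at most $2\cdot 2^{n-1}$, which is still polynomial in binary. Your encoding instead leaves the action count unchanged and pushes the distinction into the integer offset of the unique $a$ inside a width-$(m{+}1)$ window, at the price of having to rescale the constants of the square gadget (and to choose an even window so the half-period guards $x=1\wedge y=1$ stay integral). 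Both encodings are sound; yours generalises cleanly to arbitrary alphabet size and keeps $N=2^{n-1}$, while the paper's is shorter because it avoids touching the square gadget altogether. Your explicit use of \cref{lem:language-inclusion} for the final gluing is also fine and matches what the paper leaves implicit when it says ``the rest of the reduction applies''.
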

\begin{proof}
The general case algorithm for the weak (resp.\ full) opacity \wrt{} first $N$ observations decision problem can be applied to one-action TAs, with its \coNEXPTIME{} complexity.

For the hardness, the reduction discussed in \cref{sec:conexphard-Nfirst} starts with rational expressions with an alphabet of at least two letters, so it cannot be directly applied.
However, within the TA we build for our reduction, two time units elapse between each letter read.
As such, a two-letter alphabet $\{a,b\}$ can be represented with a one letter alphabet $\{a\}$, where $b$ is represented by observing two $a$ in 0-time.
Then, the rest of the reduction applies exactly as well, ensuring \coNEXPTIME{}-hardness.
\end{proof}
\begin{thm}\label{oneclock-N}
The weak (resp.\ full) opacity \wrt{} first $N$ observations decision problem
for one-clock TAs is in \coNEXPTIME{} and is \PSPACE{}-hard.
\end{thm}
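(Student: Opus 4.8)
The membership direction is immediate: one-clock TAs form a subclass of all TAs, so the \coNEXPTIME{} algorithm of \cref{cor:Nfirst:w-f-opacity} applies verbatim. The whole difficulty lies in the \PSPACE{} lower bound, and since \PSPACE{} is closed under complement I would establish it by reducing the \emph{non}-reachability problem for two-clock TAs, which is \PSPACE{}-complete~\cite{AD94,FJ15}. Note that the $\exists$-opacity reduction for one clock already encodes two-clock reachability as emptiness of an intersection of two one-clock languages; the work here is to make the \emph{universal} condition of weak opacity capture this.

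Starting from a two-clock TA $\TA_{x,y}$, I would reuse the clock-splitting construction from the proof that $\exists$-opacity is \PSPACE{}-complete for one-clock TAs (\cref{fig:exists-one-clock}): relabel every transition of $\TA_{x,y}$ (silent ones included) with a fresh, globally unique letter, then project onto each clock to obtain one-clock TAs $\TA_x$ and $\TA_y$ keeping only the guards on their own clock. Since each letter labels exactly one transition, a timed word lies in $\Trace{\TA_x}\cap\Trace{\TA_y}$ iff it is the trace of an accepting run of the relabelled automaton, i.e.\ iff $\TA_{x,y}$ reaches a final location. The crucial side effect of the unique relabelling is that $\TA_y$ has no silent transition and at most one transition per location and letter: it is a \emph{deterministic} one-clock TA, which I can complete (adding a sink that catches, for each location and letter, the guard complement $\neg g$, a finite union of intervals since there is a single clock) and then complement by swapping final and non-final locations. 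This produces, in polynomial time, a one-clock TA recognizing $\overline{\Trace{\TA_y}}$.

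I would then feed $\TA_x$ and this complement automaton into the inclusion-to-weak-opacity gadget of \cref{lem:language-inclusion} (\cref{fig:language-inclusion}): a fresh initial location reaches, through $\silentaction$-transitions taken in zero time, either a unique private location leading into $\TA_x$ or directly into the complement automaton. As the two branches are mutually exclusive along any run, they may share the single clock, so the resulting automaton $\mathcal{O}$ is a \emph{one-clock} TA with $\PrivateTr{\mathcal{O}}=\Trace{\TA_x}$ and $\PublicTr{\mathcal{O}}=\overline{\Trace{\TA_y}}$. Plain weak opacity of $\mathcal{O}$ then reads $\Trace{\TA_x}\subseteq\overline{\Trace{\TA_y}}$, that is $\Trace{\TA_x}\cap\Trace{\TA_y}=\emptyset$, i.e.\ non-reachability in $\TA_{x,y}$. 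Full opacity follows through \cref{th:weak-full-eq}.

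The remaining and, I expect, hardest step is to transport this equivalence to the bounded setting $\strategy{N}$, because the complement language $\overline{\Trace{\TA_y}}$ is inherently unbounded and thus clashes with the $N$-bounded observation model. I would resolve it by choosing $N$ larger than the (exponential) number of regions of $\TA_{x,y}$, written in binary, hence polynomial in the input: whenever the intersection is non-empty it contains a witness $w$ whose length is bounded by the region count, hence \emph{strictly} below $N$. In the non-reachable case $\PrivateTr{\mathcal{O}}\subseteq\PublicTr{\mathcal{O}}$ holds outright and survives any projection; in the reachable case $w$ is a private trace with $\projection{N}(w)=w$, and since $|w|<N$ any public trace $v$ with $\projection{N}(v)=w$ would be forced to equal $w$, contradicting $w\in\Trace{\TA_y}$, so $\projection{N}(w)\notin\projection{N}(\PublicTr{\mathcal{O}})$. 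Thus $\mathcal{O}$ is weakly opaque against $\strategy{N}$ iff $\TA_{x,y}$ has no reachable final location, which yields \PSPACE{}-hardness. The delicate point throughout is arguing this equivalence through a witness strictly shorter than the observation horizon rather than at the level of whole languages.
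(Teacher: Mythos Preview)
Your upper bound is correct and matches the paper. Your lower-bound strategy---reducing from two-clock reachability via the clock-splitting trick and then complementing $\TA_y$---is a genuinely different route from the paper's, which instead reduces from \PSPACE{}-complete equivalence of rational expressions over ``$+$'', ``$\cdot$'' and Kleene star (reusing the one-clock constructions of \cref{figure:discrete-time:EXPSPACE-hardness} without squaring). Both approaches share the same endgame: choose $N$ exponential (hence polynomial in binary) so that any witness of non-opacity already has fewer than $N$ letters, making the bounded and unbounded questions coincide. Your invocation of \cref{th:weak-full-eq} for full opacity is sound, as that reduction adds only $\silentaction$-transitions in zero time and thus commutes with $\projection{N}$ while preserving one-clockness.

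There is, however, a real gap in your complementation step. In this paper's semantics a run must \emph{stop at the first final location} (the definition of runs explicitly requires $\loci{i}\notin\FinalSet$ for all $i<n$). Hence swapping final and non-final locations in the completed deterministic $\TA_y$ does not yield the complement language: the initial location becomes final and essentially only the empty trace is accepted. This is not fatal---a standard repair is to append a fresh end-marker $\sharp$: make all old locations non-final, add a single final location $\loc_\sharp$, give every location except the old $\locfinal$ a $\sharp$-transition to $\loc_\sharp$ (and send $\locfinal$ to the sink on $\sharp$); then $w\cdot(\sharp,t)$ is accepted iff the deterministic path on $w$ does not end in $\locfinal$, i.e.\ iff $w\notin\Trace{\TA_y}$. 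Apply the matching $\sharp$-suffix to $\TA_x$ and the rest of your reduction goes through. The paper's route sidesteps this issue entirely because the gadget of \cref{lem:language-inclusion} encodes language \emph{equality} directly as full opacity, without any need to complement.
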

\begin{proof}
The general case algorithm for the weak (resp.\ full) opacity \wrt{} first $N$ observations decision problem can be applied to one-clock TAs, with its \coNEXPTIME{} complexity.

For the hardness, consider the reduction discussed in \cref{section:opacity:discrete-time}.
If one assumes the rational expressions do not rely on the squaring operand, then the construction produces a one-clock TA. 
Moreover, given two rational expressions over ``+'', ``$\cdot$'' and the Kleene star, the 
smallest word differentiating the two languages, if it exists, is of exponential size in the rational expressions (if the expressions have size $r_1$ and $r_2$, this word is bounded 
by $2^{r_1+r_2+2}$). By selecting $N$ as the binary encoding of this bound, we thus have
that the two rational expressions are equivalent iff the built TA is fully opaque 
\wrt{} first $N$ observations.
As testing the equality of two rational expressions is \PSPACE{}-complete~\cite{SM73}, 
the weak (resp.\ full) opacity \wrt{} first $N$ observations decision problem
for one-clock TAs is \PSPACE{}-hard.
\end{proof}
\begin{thm}\label{discrete-N}
The weak (resp.\ full) opacity \wrt{} the first $N$ observations decision problem
for discrete time TAs is \coNEXPTIME{}-complete.
\end{thm}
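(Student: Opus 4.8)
The plan is to obtain both bounds as near-immediate consequences of results already in hand, so that the proof reduces to checking that the relevant constructions specialise cleanly to discrete time.

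For \coNEXPTIME{}-membership, I would simply invoke the fact that discrete-time TAs form a subclass of TAs and that the general algorithm underlying \cref{cor:Nfirst:w-f-opacity} (itself built on the $N$-bounded language inclusion procedure of \cref{prop:Nfirst:inclusion}) applies verbatim. The point to stress is that nothing in that algorithm exploits density: the ``tick'' construction, the region-counting bound of \cref{lemma:N-obs:weak-full:region-number}, and the matrix-based guessing of a shortest distinguishing word are all purely combinatorial, bounding numbers of intervals and orderings of fractional parts. Restricting clock valuations to $\setN$ can only shrink the region automaton, so the same exponential-size representation and \NEXPTIME{} non-inclusion test go through, yielding a \coNEXPTIME{} upper bound for weak (resp.\ full) opacity \wrt{} the first $N$ observations over discrete time.

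For hardness, the key observation is that the \coNEXPTIME{}-hardness reduction of \cref{sec:conexphard-Nfirst} is \emph{already} carried out over discrete-time TAs. That reduction starts from the \NEXPTIME{}-complete inequivalence problem for rational expressions with squaring but without Kleene star~\cite{SM73} and, following the constructions of \cref{section:opacity:discrete-time}, builds TAs $\TA_1$ and $\TA_2$ whose timed languages coincide iff the expressions are equivalent. These automata read each letter at an even integral time and all their guards compare the single clock (plus the three auxiliary clocks per square) to integers, so they are genuine discrete-time TAs; their languages are $N$-bounded for $N = 2^{n-1}$ written in binary in polynomial space, and equality of $N$-bounded languages is inter-reducible with full opacity \wrt{} the first $N$ observations. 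Hence the existing reduction already witnesses \coNEXPTIME{}-hardness \emph{within} the discrete-time subclass, and combining the two directions gives \coNEXPTIME{}-completeness.

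Candidly, there is no substantial obstacle here: the statement is essentially a corollary of the general case together with the fortunate fact that the hardness construction was designed over discrete time from the outset. The only thing requiring genuine care is the membership direction—confirming that the upper-bound argument nowhere secretly relies on continuous time—but since the ticked region automaton and the bound of \cref{lemma:N-obs:weak-full:region-number} are combinatorial and become no harder (indeed easier) over $\setN$, this verification is routine.
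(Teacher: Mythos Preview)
Your proposal is correct and matches the paper's own proof almost exactly: membership comes from the general \coNEXPTIME{} algorithm of \cref{cor:Nfirst:w-f-opacity}, and hardness is inherited because the reduction in \cref{sec:conexphard-Nfirst} was already built from discrete-time TAs. The only cosmetic difference is that the paper justifies the upper bound by noting one can force a general TA to behave discretely (restricting transitions to integral times via the tick clock), whereas you argue directly that the algorithm's combinatorics never exploit density; both justifications are sound and amount to the same observation.
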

\begin{proof}
The general case algorithm for the weak (resp.\ full) opacity \wrt{} first $N$ observations decision problem can be applied again, with its \coNEXPTIME{} complexity (noting that by forbidding the system from taking a transition when the ``tick'' clock is not~0, we force a general TA to play as if it was over discrete time).
The hardness of the general case directly applies, as it was first designed for discrete-time TAs.
\end{proof}

\begin{thm}\label{oera-N}
The weak (resp.\ full) opacity \wrt{} the first $N$ observations decision problem
for oERAs is \PSPACE{}-complete.
\end{thm}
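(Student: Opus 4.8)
The plan is to prove \PSPACE-membership by an on-the-fly determinisation that avoids the exponential clock blow-up of the general case, and \PSPACE-hardness by reusing the reachability reduction already used in the unbounded setting.

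For the upper bound, I would first invoke the reduction from \cref{sec:Nfirst}: testing weak opacity against $\strategy{N}$ is equivalent to testing weak opacity of the $N$-observation unfolding $\Unfold{\TA}$. The crucial observation is that, when $\TA$ is an oERA, so is $\Unfold{\TA}$: the unfolding leaves the event-clock set $\ClockSet_{\ActionSet}$ unchanged and copies every edge together with its reset, so observable edges still reset exactly the clock of their action and silent edges still reset nothing. Hence the oERA machinery of \cref{th:decidability:oera} applies directly to $\Unfold{\TA}$: build $\AMemo$, determinise by the subset construction, pass to the region automaton, and test reachability of a subset containing a final private location but no final public location. Unlike the general argument behind \cref{cor:Nfirst:w-f-opacity}, no ``tick'' construction is required, so the number of clocks stays fixed at $|\ActionSet|$ and does \emph{not} grow with $N$---this is exactly what prevents the \coNEXPTIME{} blow-up.

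The remaining difficulty is that $\Unfold{\TA}$ is exponentially large ($N+1$ copies, with $N$ in binary), so it must not be built explicitly. Instead I would represent a state of the determinised region automaton in polynomial space as a triple $(k, S, r)$, where $k \in \interval{0}{N}$ is a binary counter recording the number of observations performed so far, $S$ is a subset of $\LocSet \times \set{0,1}$ recording which locations of $\TA$, paired with the ``private visited'' bit of $\AMemo$, are currently reachable, and $r$ is a clock region over the $|\ActionSet|$ event clocks. The counter $k$ suffices to recover the copy index because every run increments the copy exactly once per observable action, so all locations in a reachable subset share the same copy; $S$ is the usual powerset component; and $r$ is polynomially describable precisely because the clock set is fixed. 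Silent edges and delays update $(S,r)$ within the current copy, an observable action increments $k$ (capped at $N$) and resets the associated event clock, and once $k=N$ no further action is observed. A weak-opacity violation is the reachability of a final triple whose set $S$ contains a final private location but no final public location; since each triple is of polynomial size and each transition is polynomially checkable, this reachability over an exponentially large but implicitly defined graph lies in nondeterministic polynomial space, hence in \PSPACE{} by Savitch's theorem, and weak opacity, being its complement, is in \PSPACE{} as well.

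For the lower bound, I would reuse the reduction from reachability in the proof of \cref{th:decidability:oera}: from an oERA $\TA$ with final set $\FinalSet$, take $\TA'$ identical to $\TA$ but with $\PrivSet=\FinalSet$, so every accepting run is private and $\PublicTr{\TA'}=\emptyset$. Then $\projection{N}(\PublicTr{\TA'})=\emptyset$ while $\projection{N}(\PrivateTr{\TA'})$ is non-empty iff some final location is reachable, so $\TA'$ is weakly opaque against $\strategy{N}$ iff $\FinalSet$ is unreachable---and this holds for every fixed $N$. As reachability for oERAs is \PSPACE-hard, the problem is \PSPACE-hard, and \cref{th:weak-full-eq} transfers both bounds to full opacity. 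The step I expect to be the main obstacle is the membership argument: one must check carefully that determinisation of the unfolded oERA never requires the fractional-order bookkeeping that forced the extra clocks in \cref{sec:Nfirst}, i.e.\ that the observed trace alone fixes all event-clock values, so that a single region component $r$ together with the binary counter $k$ genuinely captures the whole state in polynomial space.
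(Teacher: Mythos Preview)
Your proposal is correct and follows essentially the same route as the paper. For the upper bound, the paper says tersely ``adapt the algorithm for weak and full opacity for oERAs, and limit the search of a path violating the opacity to paths with less than $N$ observations''; your on-the-fly representation $(k,S,r)$ is exactly how one makes that sentence precise, and your observation that $\Unfold{\TA}$ remains an oERA (so no tick clocks are needed) is the right reason the complexity drops from \coNEXPTIME{} to \PSPACE{}.

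For the lower bound you are actually a little cleaner than the paper. The paper chooses $N$ to be the number of regions so that every accepting run has at most $N$ observations; but with the reduction $\PrivSet=\FinalSet$ this is unnecessary, since $\PublicTr{\TA'}=\emptyset$ implies $\projection{N}(\PublicTr{\TA'})=\emptyset$ for \emph{every} $N$, and $\projection{N}(\PrivateTr{\TA'})$ is non-empty iff some final location is reachable regardless of $N$. So your ``any fixed $N$'' remark is justified and avoids the exponential counting argument the paper carries out. The caveat you flag at the end---whether the observed trace really pins down all event-clock values so that a single region component suffices---is already what makes oERAs determinisable in the first place, so no extra work is needed there.
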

\begin{proof}
In this case, we can adapt the algorithm for weak and full opacity for oERAs, and limit the search of a path violating the opacity to paths with less than $N$ observations.

The \PSPACE{}-hardness originates from the reachability problem~\cite{AD94}.
The number of observations of the shortest path that reaches a final location in a~TA is at most exponential (bounded by the number of regions of the~TA), and as $N$ is written in binary, one can select $N$ greater than the length of this path, \eg{} $N$ equal to the number of regions. It is hence exponential, but written in polynomial space.
Thus the hardness applies for the opacity \wrt{} first $N$ observations decision problems.
\end{proof}

Similar results can be obtained for the opacity \wrt{} $N$ static (resp.\ dynamic) observations decision problems thanks to the arguments of \cref{sec:Nfixed,ss:N-dynamic}.
The case where the number of clocks is bounded remains partially open. 
Our algorithm for opacity \wrt{} first $N$ observations
increases the number of clocks. As such, the limitation does not immediately reduce the complexity of the algorithm.
However the hardness proof does not apply any more, as we rely on the original clocks of the~TA to encode the squaring operand.

\paragraph{Summary}

We summarize the results of this section in \cref{table-summary-6}.

\begin{table}[tb]
	\centering
	\caption{Summary of the results in \cref{section:finite}}
	\label{table-summary-6}
	\begin{tabular}{| l | c | c | c |}
		\hline
		\rowHeader{} Subclass & First $N$ observations & $N$ static observations & $N$ dynamic observations \\
		\hline
		\cellHeader{}TAs& 
		\multicolumn{3}{c |}{\cellcolor{greenColorBlind!75}\cref{cor:Nfirst:w-f-opacity}, \cref{static-N,dyn-N} (coNEXPTIME{}-complete)}\\
		\cline{1-1}
		\cellHeader{}$|\ActionSet| = 1$ & \multicolumn{3}{c |}{\cellcolor{greenColorBlind!75} \cref{oneaction-N} (coNEXPTIME{}-complete)}\\
		\cline{1-1}
			\cellHeader{}$\Time = \setN$ & \multicolumn{3}{c |}{\cellcolor{greenColorBlind!75} \cref{discrete-N} (coNEXPTIME{}-complete)}  \\
		\hline	
		\cellHeader{}$|\ClockSet| = 1$ & \multicolumn{3}{c |}{\cellcolor{yellow!75}
		\cref{oneclock-N} (\PSPACE{}-hard, and in \coNEXPTIME{})}\\
		\hline
		\cellHeader{}oERAs & \multicolumn{3}{c |}{\cellcolor{greenColorBlind!75} \cref{oera-N} (\PSPACE{}-complete)} \\
		\hline
	\end{tabular}
\end{table}

\section{Conclusion and Perspectives}\label{section:conclusion}

In this paper, we addressed three definitions of opacity on subclasses of TAs, to circumvent the undecidability from~\cite{Cassez09}.
We first proved the inter-reducibility of weak and full opacity.
Then, while undecidability remains for one-action TAs, we retrieve decidability for one-clock TAs without $\silentaction$-transitions, or over discrete time, or for observable ERAs.
Our result for one-clock TAs without $\silentaction$-transitions is tight, since we showed that increasing the number of clocks or adding $\silentaction$-transitions leads to undecidability.
Recall that we summarized the results from \cref{section:opacity:TA:res} in \cref{table-summary} page~\pageref{table-summary}.
Note that all of our undecidability results also hold without invariants, as our proofs do not specifically need them.

We then studied the case of an attacker with an observational power with a limited budget, \ie{} that can only perform a finite set of observations.
We considered three different settings:
\begin{enumerate}%
	\item when the attacker can only see the first $N$ events (letters) of the system;
	\item when the attacker can decide \emph{a~priori} a set of timestamps, and they will observe the first event following each of these timestamps; and
	\item when the attacker can decide this set of timestamps at runtime depending on what observations they made until now.
\end{enumerate}%
We proved all three settings to be decidable and \coNEXPTIME{}-complete on the full TA formalism.
We also studied the combination of the reductions of the attacker power introduced in \cref{section:finite} on the one hand with the restrictions of the model from \cref{section:opacity:TA:res} on the other hand.
Recall that we summarized the results from \cref{section:finite} in \cref{table-summary-6}.

Finally, as proof ingredients for the aforementioned results, we proved that timed language inclusion is \EXPSPACE{}-complete for TAs over discrete time, and that inclusion of $N$-bounded timed languages of TAs is \coNEXPTIME{}-complete (over dense time).

\paragraph{Future work}
Perspectives include being able to build a controller to ensure a~TA is opaque.
This was investigated for execution-time opacity with an untimed controller in~\cite{ABLM22} and a timed controller in~\cite{ADLL25journal}, as well as for non-interference~\cite{GMR07}---but remains to be done within our framework.

Our results in \cref{section:finite} consider an attacker with a fixed attack budget; an interesting future work would be to design efficient procedures attempting to \emph{derive} a maximum attack budget such that the system remains opaque.
These procedures would be semi-algorithms (without guarantee of termination), since this problem reduces to opacity \emph{à~la} Cassez---which is undecidable.

Finally, we aim at investigating parametric versions of these problems, where timing constants considered as parameters (\ie{} unknown constants, \emph{à~la}~\cite{AHV93}) can be tuned to ensure opacity.
This was for example addressed in the setting of execution-time opacity~\cite{ALLMS23,ALM23}.

\paragraph{Acknowledgement}
We thank Marie Duflot for suggesting a proof idea that inspired the definition of the function $\tweak{f}{f'}$.

\appendix

\newpage
\section{Figure: Tick construction}\label{appendix:tick-construction}
\begin{figure}[b]
   \vspace*{-10pt}
   \centering
   \rotatebox{90}{ 
   \scalebox{.88}{
	\begin{tikzpicture}[pta, node distance=2cm, thin, scale = 1.7]
	\node[location, initial] (l0) at (-4, 0) {$\locinit^0$};
	\node[location] (li) at (-3.25, 0.75) {$\loci{i}^0$};
	\node[location] (lj) at (-2, 0.5) {$\loci{j}^0$};

	\node[rectangle, minimum width=7cm, minimum height=8cm, align=center, draw, black!60, dashed] (rectangleA) at (-2.75cm, 0cm) {};
	\node[black!60] at (-3cm, 2.5cm) {$\TA_0$};
	\node[location] (lf0) at (-2.25,-0.75) {$\locfinal^0$};

	\node[location] (lk) at (+1.25, 0) {$\loci{k}^1$};
	\node[rectangle, minimum width=5cm, minimum height=8cm, align=center, draw, black!60, dashed] (rectangleB) at (2cm, 0cm) {};
	\node[black!60] at (2cm, 2.5cm) {$\TA_1$};
	\node[location] (lf1) at (2.5,-0.5) {$\locfinal^1$};

	\node (ln) at (4.5,0) {\dots};
	\node[location] (lm) at (+7, 0) {$\loci{m}^N$};
	\node[rectangle, minimum width=3cm, minimum height=8cm, align=center, draw, black!60, dashed] (rectangleB) at (7cm, 0cm) {};
	\node[black!60] at (7cm, 2.5cm) {$\TA_N$};

	\node[location] (lG0) at (2,-4) {$\loc_{G0}$};
	\node[location, final] (lG1) at (4,-4) {$\loc_{G1}$};
	
	\node[align=left] at (-2,-4) {with $K$ and $I$ resp.\ describing \\ subsets of $\interval{1}{N}$ and $\interval{0}{N}$};

	\path
	(li) edge node[align=center, above]{$g^{(0)}_{<1}$ \\ $\styleact{\silentaction}$} (lj)
	(lj) edge node[align=center, above, xshift=1em]{$g^{(0)}_{<1}$ \\ $\styleact{a_1}$ \\ $\styleclock{\clock_1} \assign 0$\ \ \
	} (lk)
	(ln) edge[bend left] node[align=center, above left]{$g^{(0)}_{<1}$ \\ $\styleact{a_N} $ \\ $\styleclock{\clock_N} \assign 0$
	} (lm)
	;

	\path
	(li) edge[loop above] node[align=center]{$g_{I}$ \\ $\styleact{\silentaction}$ \\ $\styleclock{x_i} \assign 0$ \\ if $i \in I$} ()
	(l0) edge[loop above] node[align=center]{$g_{I}$ \\ $\styleact{\silentaction}$ \\ $\styleclock{x_i} \assign 0$ \\ if $i \in I$} ()
	(lj) edge[loop above] node[align=center]{$g_{I}$ \\ $\styleact{\silentaction}$ \\ $\styleclock{x_i} \assign 0$ \\ if $i \in I$} ()
	(lf0) edge[loop left] node[align=center, below]{$g_{I}$ \\ $\styleact{\silentaction}$ \\ $\styleclock{x_i} \assign 0$ \\ if $i \in I$} ()
	(lk) edge[loop above] node[align=center]{$g_{I}$ \\ $\styleact{\silentaction}$ \\ $\styleclock{x_i} \assign 0$ \\ if $i \in I$} ()
	(lf1) edge[loop above] node[align=center]{$g_{I}$ \\ $\styleact{\silentaction}$ \\ $\styleclock{x_i} \assign 0$ \\ if $i \in I$} ()
	(lm) edge[loop above] node[align=center]{$g_{I}$ \\ $\styleact{\silentaction}$ \\ $\styleclock{x_i} \assign 0$ \\ if $i \in I$} ()
	;

	\path
	(li) edge[loop below] node[align=center]{$\styleclock{x_0}=0$ \\ $\styleact{t}$ \\ $\styleclock{x_0} \assign 0$} ()
	(l0) edge[loop below] node[align=center]{$\styleclock{x_0}=0$ \\ $\styleact{t}$ \\ $\styleclock{x_0} \assign 0$} ()
	(lj) edge[loop below] node[align=center, right]{$\styleclock{x_0}=0$ \\ $\styleact{t}$ \\ $\styleclock{x_0} \assign 0$} ()
	(lf0) edge[loop right] node[align=center]{$\styleclock{x_0}=0$ \\ $\styleact{t}$ \\ $\styleclock{x_0} \assign 0$} ()
	(lk) edge[loop below] node[align=center]{$\styleclock{x_0}=0$ \\ $\styleact{t}$ \\ $\styleclock{x_0} \assign 0$} ()
	(lf1) edge[loop right] node[below, align=center]{$\styleclock{x_0}=0$ \\ $\styleact{t}$ \\ $\styleclock{x_0} \assign 0$} ()
	(lm) edge[loop below] node[align=center]{$\styleclock{x_0}=0$ \\ $\styleact{t}$ \\ $\styleclock{x_0} \assign 0$} ()
	;

	\path
	(lf0) edge node[align=center, below left]{$\styleact{f_{K \cup \set{0}}}$ \\ $g^{(0)}_{K \cup \set{0}}$} (lG0)
	(lf1) edge node[align=center, left, yshift=-2em, xshift=-1em]{$\styleact{f_{K \cup \set{0}}}$ \\ $g^{(0)}_{K \cup \set{0}}$} (lG0)
	(lm) edge node[align=center]{$\styleact{f_{K \cup \set{0}}}$ \\ $g^{(0)}_{K \cup \set{0}}$} (lG0)
	(lG0) edge node[align=center, below]{$\styleclock{x_0} = 1$ \\ $\land g_{<1}$ \\ $\styleact{\silentaction}$} (lG1)
	(lG0) edge[loop below] node[align=center, left]{$g^{(0)}_{K}$\\$\styleact{f_{K}}$} (lG0);
	\end{tikzpicture}
	}}

  \caption{The ``tick'' construction $\TickN{\TA}$ (see \cref{def:tick-construction})}
  \label{figure:finite-obs:tick}
\end{figure}
\newpage

	\newcommand{\CCIS}{Communications in Computer and Information Science}
	\newcommand{\CSUR}{{ACM} Computing Surveys}
	\newcommand{\ENTCS}{Electronic Notes in Theoretical Computer Science}
	\newcommand{\FAC}{Formal Aspects of Computing}
	\newcommand{\FundInf}{Fundamenta Informaticae}
	\newcommand{\FMSD}{Formal Methods in System Design}
	\newcommand{\IJFCS}{International Journal of Foundations of Computer Science}
	\newcommand{\IJSSE}{International Journal of Secure Software Engineering}
	\newcommand{\IJC}{International Journal of Control}
	\newcommand{\IseCure}{International Journal of Information Security}
	\newcommand{\IPL}{Information Processing Letters}
	\newcommand{\IC}{Information and Computation}
	\newcommand{\JAIR}{Journal of Artificial Intelligence Research}
	\newcommand{\JLAP}{Journal of Logic and Algebraic Programming}
	\newcommand{\JLAMP}{Journal of Logical and Algebraic Methods in Programming} %
	\newcommand{\JISA}{Journal of Information Security and Applications}
	\newcommand{\JLC}{Journal of Logic and Computation}
	\newcommand{\JALC}{Journal of Automata, Languages and Combinatorics}
	\newcommand{\LMCS}{Logical Methods in Computer Science}
	\newcommand{\LNCS}{Lecture Notes in Computer Science}
	\newcommand{\RESS}{Reliability Engineering \& System Safety}
	\newcommand{\RTS}{Real-Time Systems}
	\newcommand{\SCP}{Science of Computer Programming}
	\newcommand{\SOSYM}{Software and Systems Modeling ({SoSyM})}
	\newcommand{\STTT}{International Journal on Software Tools for Technology Transfer}
	\newcommand{\TCS}{Theoretical Computer Science}
	\newcommand{\TOPLAS}{{ACM} Transactions on Programming Languages and Systems ({ToPLAS})}
	\newcommand{\ToPNoC}{Transactions on {P}etri Nets and Other Models of Concurrency}
	\newcommand{\TOSEM}{{ACM} Transactions on Software Engineering and Methodology ({ToSEM})}
	\newcommand{\TSE}{{IEEE} Transactions on Software Engineering}
	\newcommand{\TCAD}{{IEEE} Transactions on Computer-Aided Design of Integrated Circuits and Systems}
\bibliographystyle{alphaurl}
\bibliography{PTA}

\end{document}